\newcommand{\thetahat}{\hat{\theta}}
\newcommand{\thetastar}{\theta^*}
\newcommand{\thetastarT}{\theta^{*T}}
\newcommand{\imat}{\text{I}}
\newcommand{\prob}{\mathop{\mathbb{P}}}
\newcommand{\normL}[1]{\big\lVert#1\big\rVert}
\newcommand{\norm}[1]{\left\lVert#1\right\rVert}
\newcommand{\normB}[1]{\Bigg\lVert#1\Bigg\rVert}
\newcommand{\normb}[1]{\Big\lVert#1\Big\rVert}
\newcommand{\ytilde}{\tilde{y}}
\newcommand{\ztilde}{\tilde{z}}
\newcommand{\Ztilde}{\tilde{Z}}
\newcommand{\Vtilde}{\tilde{V}}
\newtheorem{remark}{Remark}
\newtheorem{definition}{Definition} 
\newtheorem{assumption}{Assumption} 
\newtheorem{theorem}{Theorem}
\newtheorem{corollary}{Corollary}
\newtheorem*{EIC}{Nonnegative Elastic Irrepresentable Condition (NEIC)}
\newtheorem*{conditionRates}{Rate Condition on Density of Grid (RCDG)}
\DeclareMathOperator*{\argmin}{arg\,min}
\newtheorem{lemma}{Lemma}
\newcommand{\subalign}[1]{%
	\vcenter{%
		\Let@ \restore@math@cr \default@tag
		\baselineskip\fontdimen10 \scriptfont\tw@
		\advance\baselineskip\fontdimen12 \scriptfont\tw@
		\lineskip\thr@@\fontdimen8 \scriptfont\thr@@
		\lineskiplimit\lineskip
		\ialign{\hfil$\m@th\scriptstyle##$&$\m@th\scriptstyle{}##$\crcr
			#1\crcr
		}%
	}
}
\definecolor{Diceblue}{RGB}{52,116,181}
\newcolumntype{C}{@{\extracolsep{0.35cm}}c@{\extracolsep{0pt}}}
\newcolumntype{z}{@{\extracolsep{0.15cm}}c@{\extracolsep{0pt}}}
\begin{document}

	
\setlength{\abovedisplayskip}{3pt}
\setlength{\belowdisplayskip}{12pt}
	
	
\setlength{\jot}{0.5cm}

\begin{titlingpage}

\title{Nonparametric Estimation of the Random Coefficients Model:  An Elastic Net Approach}

\author[$\dagger$]{Florian Heiss}
\affil[$\dagger\ast$]{Heinrich Heine University D\"usseldorf }
\author[$\ddagger$]{Stephan Hetzenecker}
\affil[$\ddagger$]{Ruhr Graduate School in Economics \& \protect\\ University of Duisburg-Essen}
\author[$\ast$]{Maximilian Osterhaus}

\date{September 2019}
	

\maketitle

\begin{abstract}
		
This paper investigates and extends the computationally attractive nonparametric random coefficients estimator of \citeA{fox2011}. We show that their estimator is a special case of the nonnegative LASSO, explaining its sparse nature observed in many applications. Recognizing this link, we extend the estimator, transforming it to a special case of the nonnegative elastic net. The extension improves the estimator's recovery of the true support and allows for more accurate estimates of the random coefficients' distribution. Our estimator is a generalization of the original estimator and therefore, is guaranteed to have a model fit at least as good as the original one. 
A theoretical analysis of both estimators' properties shows that, under conditions, our generalized estimator approximates the true distribution more accurately. 
Two Monte Carlo experiments and an application to a travel mode data set illustrate the improved performance of the generalized estimator. 
		
		

		

\vspace{0.4cm}
		
\noindent\textit{\textbf{JEL codes:} C14, C25, L} \vspace{0.1cm}\newline
\noindent\textit{\textbf{Keywords:} Random Coefficients, Mixed Logit, Nonparametric Estimation, Elastic Net}\\ 

\end{abstract}

\vfill
\noindent\hrulefill \\
{\footnotesize{
Financial support by the Deutsche Forschungsgemeinschaft (DFG) project 235577387/GRK1974 and the Ruhr Graduate School in Economics is gratefully acknowledged. \\
$\dagger$ Heinrich Heine University D\"usseldorf, Universit\"atsstr. 1, 40225 D\"usseldorf, Germany. Email: \href{florian.heiss@hhu.de}{florian.heiss@hhu.de} \\
$\ddagger$ Ruhr Graduate School in Economics, RWI - Leibniz Institute for Economic Research, Hohenzollernstr. 1-3, 45128 Essen, Germany and University of Duisburg-Essen, Universit\"atsstr. 12, 45117 Essen, Germany.   Email: \href{mailto:stephan.hetzenecker@rgs-econ.de}{stephan.hetzenecker@rgs-econ.de} \\
$\ast$ Heinrich Heine University D\"usseldorf, Universit\"atsstr. 1, 40225 D\"usseldorf, Germany. Email: \href{mailto:osterhaus@dice.hhu.de}{osterhaus@dice.hhu.de}
}}

\end{titlingpage}


\onehalfspacing

\section{Introduction}

Adequately modeling unobserved heterogeneity across agents is a common challenge in many empirical economic studies. A popular approach to address  unobserved heterogeneity are random coefficient models, which allow the coefficients of the economic model to vary across agents. The aim of the researcher is to estimate the distribution of the random coefficients.  

\citeA{fox2011}, hereafter FKRB, propose a simple and computationally fast estimator that can approximate distributions of any shape. 
The estimator uses a fixed grid where every grid point is a prespecified vector of random coefficients. The distribution function is obtained from the probability weights at the grid points, which are estimated with constrained least squares. In principle, the approach can approximate any distribution arbitrarily closely if the grid of random coefficients is sufficiently dense \cite{mcfadden2000}. 

Applications of the estimator indicate, however, that it tends to estimate only few positive weights and  that it sets the weights at many grid points to zero. As a consequence, the estimator lacks the ability to estimate smooth distribution functions but instead approximates potentially continuous distributions through step functions with only few steps.
Our first contribution is to show that the estimator of FKRB is  Nonnegative LASSO \cite{wu2014lasso} (NNL) with a fixed tuning parameter to explain its sparse nature. 

NNL, which was first mentioned in the seminal work of \citeA{efron2004} as positive LASSO, is a popular model selection method typically used in applications with supposedly sparse models. It is applied in various research fields, e.g., in vaccine design \cite{hu2015}, nuclear material detection  \cite{kump2012}, document classification \cite{arini2013}, and index tracking in stock markets \cite{wu2014lasso}. 
NNL shares the  property of LASSO \cite{tibshirani1996}  that it regularizes the coefficients of the model and shrinks some to zero. This property is observed for the FKRB estimator in different Monte Carlo studies (e.g., \citeNP{fox2011} and \citeNP{fox2016}) and applications to real data (e.g., \citeNP{nevo2016}, \citeNP{illanes2018}, \citeNP{blundell2018} and \citeNP{houde2019}). \citeA{nevo2016} study the demand for residential broadband and estimate that there are only 53 out of 8,626 potentially heterogeneous consumer types. \citeA{illanes2018} use the approach to estimate the demand for private pension plans in Chile and assign positive weights to only 194 of 83,251 grid points. \citeA{blundell2018} analyze firms' reaction to the regulation of air pollution and recover no more than five of the 10,001 potential points. 

In addition to its sparse nature, the connection of the FKRB estimator to NNL reveals the estimator's potentially incorrect selection of grid points under strong correlation. The estimator ``randomly'' selects one out of a group of highly correlated points and sets the remaining weights to zero (see \citeNP{hastie2005}, and \citeNP{hastie09}, for the random behavior of LASSO).  

The estimator's sparsity and ``random'' selection behavior can cause inaccurate approximations of the true distribution through non-smooth distributions with the estimated support possibly deviating from the true distribution's support. The latter can lead to misleading conclusions with respect to the heterogeneity of agents in the population. 
\citeA{fox2016} prove that the estimator identifies the true distribution if the grid of random coefficients  becomes sufficiently dense. However, in practice, the correlation tends to increases with the density of the grid and can become so strong that the optimization problem to the FKRB estimator cannot be solved due to singularity (\citeNP{nevo2016}, Online Supplement). Therefore, the high correlation of a dense grid in combination with the incorrect grid point selection of the estimator under strong correlation can have a drastic impact on the identification of the model. 

Our second contribution is to provide a generalization of the FKRB estimator that is able to accurately approximate continuous distributions even under strong correlation. Recognizing the link to NNL, we add a quadratic constraint on the probability weights. The constraint transforms the estimator to a special case of nonnegative elastic net \cite{wu2014elastic}. The extension mitigates the sparsity and improves the selection of the grid points. Due to the additional flexibility that is introduced with the extension, the estimator adjusts to the degree of correlation among grid points. Note that our generalization always includes the FKRB estimator as a special case such that the model fit cannot be worse for our estimator than the FKRB estimator.

We analyze theoretically, under conditions, that our estimator provides more accurate estimates of the true underlying distribution. For that purpose, we show the selection consistency and derive an error bound on the estimated distributions. 
The analysis of the selection consistency examines the estimator's ability to estimate positive probability weights at grid points that lie inside the true distributions support, and zero weights at points outside the true support. The selection consistency is necessary to approximate the true distribution as accurately as possible. Since the estimated distribution reveals the existing heterogeneity in the population, i.e., agents' varying preferences, recovering the true support points is also important for the correct interpretation of the model.   

The analysis shows that our generalized estimator correctly selects the grid points under less restrictive conditions than the FKRB estimator.
The error bounds on the estimated distribution functions illustrate the positive impact of our extension on the overall approximation accuracy.  
Two Monte Carlo experiments in which we estimate a random coefficients logit model confirm the superior properties of our generalized estimator.

Other nonparametric estimators for the random coefficients model include \citeA{train2008}, \citeA{train2016}, \citeA{burda2008} and \citeA{rossi2005}.  \citeA{train2008} introduces three different estimators that are, in principle, similar to the general approach of FKRB but employ a log-likelihood criterion instead of constrained least squares. \citeA{train2016} suggests approximating the random coefficients' distribution with polynomials, splines or step functions instead of with a fixed grid of preference vectors. The approach substantially reduces the number of required fixed points if the researcher specifies overlapping splines and step functions. Due to the lower number of required fixed points, the approach reduces the curse of dimensionality, which is a shortcoming of the fixed grid approach if the economic model includes a large number of random coefficients. However, both \citeA{train2008} and \citeA{train2016} estimate the respective model with the EM algorithm, which is sensitive to its starting values and is not guaranteed to converge to a global optimum.
\citeA{burda2008} and \citeA{rossi2005} employ a Bayesian hierarchical model to approximate the random coefficients' distribution with a mixture of Normal distributions. Even though the estimator potentially has better finite sample properties, it uses a Markov Chain Monte Carlo technique with a multivariate Dirichlet Process prior on the coefficients, which is computationally more demanding.

The remainder of the paper is organized as follows. Section \ref{sec:estimator} describes the FKRB estimator  and introduces our generalized version. Section \ref{sec:errorbound} derives the condition on the estimators' sign consistency and an error bound on the estimated distribution functions. We present two Monte Carlo experiments in Section \ref{sec:montecarlo} that investigate the performance of our generalized estimator in comparison to the FKRB estimator. Section \ref{sec:application} applies the estimators to the \textit{Mode Canada} data set from the R package \textit{mlogit} \cite{mlogit}. \ref{sec:conclusion} concludes.

\section{Fixed Grid Estimators}\label{sec:estimator}

For the introduction of our estimator, we consider the framework of a random coefficient discrete choice model. The approach, however, is not restricted to discrete choice models but can be applied to any model with unobserved heterogeneous parameters.
Let there be an i.i.d. sample of $N$ observations, each confronted with a set of $J$ mutually exclusive potential outcomes. The researcher observes a $K$-dimensional real-valued vector of explanatory variables $x_{i,j}$ for every observation unit $i$ and potential outcome $j$, and a binary vector $y_i$ whose entries are equal to one whenever she observes outcome $j$ for the $i$th observation, and zero otherwise. 
The goal is to estimate the unknown distribution of heterogeneous parameters $F_0(\beta)$ in the model 

\begin{equation}\label{eq1}
P_{i,j}\left(x\right) = \int g\left(x_{i,j}, \beta\right)dF_0\left(\beta\right)
\end{equation}

where $g\left(x_{i,j}, \beta\right)$ denotes the probability of outcome $j$ conditional on the random coefficients $\beta$ and covariates $x_{i,j}$. The researcher specifies the functional form of $g\left(x_{i,j}, \beta\right)$. A prominent example of Equation (\ref{eq1}) is the multinomial mixed logit model, the state-of-the-art model for demand estimation. In this model, consumer $i$ realizes utility $u_{i,j} = x_{i,j}^T\beta_i + \omega_{i,j}$ from alternative $j$, given product characteristics $x_{i,j}$ and unobserved consumer-specific preferences $\beta_i$. $\omega_{i,j}$ denotes an additive, consumer- and  choice-specific error term. Consumer $i$ chooses alternative $j$ of $J$ alternatives (and an outside good with utility $u_{i,0} = \omega_{i,0}$) if $u_{i,j} > u_{i,l}$ for all $l \neq j$. Under the assumption that $\omega_{i,j}$ follows a type I extreme value distribution, the unconditional choice probabilities, $P_{i,j}(x)$, are of the form 
\begin{equation}\label{logit}
P_{i,j}(x) = \vcenter{\hbox{\scaleto[6ex]{\displaystyle\int}{9ex}}} \frac{\exp\left(x_{i,j}^T\beta\right)}{1 + \sum\limits_{l = 1}^J\exp\left(x_{i,l}^T\beta\right)}d F_0\left(\beta\right).
\end{equation}
$F_0(\beta)$ represents the distribution of heterogeneous consumer preferences in the population and is to be estimated. In most applications, researchers place restrictive assumptions on the functional form of $F_0(\beta)$ in advance, and estimate its parameters from the data. 

\subsection{Fixed Grid Estimator by FKRB}

FKRB propose a simple and fast mixture approach to estimate the underlying random coefficients' distribution without restrictive assumptions on its shape. 
The estimator uses a finite grid of fixed random coefficient vectors as mixture components to construct the distribution from the estimated probability weight of every component. 
The underlying idea of this fixed grid estimator is the transformation of the unconditional choice probabilities in Equation (\ref{eq1}) into a probability model in which $F_0(\beta)$ enters linearly. FKRB derive the linear probability model in two steps: they transform Equation (\ref{eq1}) into a regression model with the random coefficients' distribution as the only unknown term. Adding $y_{i,j}$ to both sides and moving $P_{i,j}$ to the right results in the probability model

\begin{equation}\label{eq2}
y_{i,j} = \int g\left(x_{i,j}, \beta\right)d F_0\left(\beta\right) + \left(y_{i,j} - P_{i,j}\left(x\right)\right). 
\end{equation}

To exploit linearity in parameters, they use a sieve space approximation to the infinite-dimensional parameter $F_0(\beta)$. The sieve space approximation divides the support of the random coefficients $\beta$ into $R$ fixed vectors. Each vector has length $K$, the number of random coefficients included in the model. The location of these vectors is specified by the researcher. With the sieve space approximation, Equation (\ref{eq2}) becomes a simple linear probability model with unknown parameters $\theta = (\theta_1, \ldots, \theta_R)^T$

\begin{equation}\label{eq3}
y_{i,j} \approx \sum\limits_{r = 1}^R g\left(x_{i,j}, \beta_r\right)\theta_r + \left(y_{i,j} - P_{i,j}\left(x\right)\right) 
\end{equation}

where $g(x_{i,j}, \beta_r)$ denotes the conditional choice probability evaluated at grid point $r$. Given the fixed grid of random coefficients, $\mathcal{B}_R = (\beta_1, \ldots, \beta_R)$, the researcher estimates the probability weight $\theta_r$ at every point $r = 1, \ldots, R$.
The linear relationship between the outcome variable and the unknown parameters $\theta$ allows to estimate the mixture weights with the least squares estimator. The linear regression, which regresses the binary dependent variable $y_{i,j}$ on the choice probabilities evaluated at $\mathcal{B}_R$, in total has $NJ$ observations, $J$ ``regression observations'' for every statistical observation unit $i = 1, \ldots, N$ and $R$ covariates $z_{i,j} = (g(x_{i,j}, \beta_1), \ldots, g(x_{i,j}, \beta_R))$.  
By the definition of choice probabilities, the expected value of the composite error term $y_{i,j} - P_{i,j}(x_{i,j})$ conditional on $x_{i,j}$ is zero. Thus, the regression model satisfies the mean-independence assumption of the least squares approach.\\

The estimator of the random coefficients' joint distribution is constructed from the estimated weights
\begin{equation}
\hat{F}\left(\beta\right) = \sum\limits_{r = 1}^R \thetahat_r \ 1\left[\beta_r\leq \beta\right]
\end{equation}
where $\beta$ is an evaluation point chosen by the researcher and   the indicator function $1[\beta_r\leq\beta]$ is equal to one whenever $\beta_r\leq\beta$, and zero otherwise.

To ensure that $\hat{F}(\beta)$ is a valid distribution function, FKRB suggest estimating the weights with the least squares estimator subject to the constraints that the weights are greater than or equal to zero, and sum to one

\begin{align}\label{eq4}
\begin{split}
&\thetahat^{FKRB} = \argmin\limits_{\theta}\frac{1}{NJ}\sum\limits_{i = 1}^N\sum\limits_{j = 1}^J\left(y_{i,j} - \sum\limits_{r = 1}^R \theta_r z_{i,j}^r\ \right)^2 \\
&\text{s.t.}\quad \theta_r\geq0 \quad \forall r \quad\text{and} \quad \sum\limits_{r = 1}^R\theta_r = 1 .
\end{split}
\end{align}
Key to an accurate approximation of $F_0(\beta)$ is the precise estimation of the probability weights at every grid point. Basis to a precise estimation of the probability weights is the consistent selection of the relevant grid points. This requires the constrained least squares estimator to estimate positive weights at all grid points at which $F_0(\beta)$ has a positive probability mass, and zero weights otherwise. While zero weights at grid points inside $F_0(\beta)$'s support cause inaccurate approximations through step functions with only few steps, positive estimates at grid points outside $F_0(\beta)$'s support lead to unreliable estimates of the random coefficients' distribution.

\subsection{Nonnegative LASSO vs. Nonnegative Elastic Net}

To provide a more accurate non-parametric estimator with similar computational advantages, we suggest a simple generalization of the FKRB estimator. Our adjusted version includes the baseline estimator as a special case but allows for smoother estimates of $F_0(\beta)$ when necessary. To derive our estimator, we extend the optimization problem formulated in Equation (\ref{eq4}) by a constraint on the sum of the squared probability weights. 
This additional constraint provides a straightforward way to mitigate the estimator's sparse nature. Our generalized estimator is still simple and computationally fast. 
\subsubsection{Connection to Nonnegative LASSO}

We first illustrate the source of the FKRB estimator's sparsity, which helps to understand its behavior and the intuition behind our extension.

One explanation of the potential sparsity of the estimates is the effect of the nonnegativity constraint. 
\citeA{slawski2013} show that nonnegative least squares estimators exhibit a self-regularizing property that yields sparse solutions. The FKRB estimator restricts the weights  not only to be the nonnegative but also to sum up to one.    

Taking both constraints into account, we recognize that the FKRB estimator is a special case of the nonnegative LASSO (NNL) \cite{wu2014lasso}. 

To show the relation of the FKRB estimator to NNL, we transform the equality constrained problem formulated in Equation (\ref{eq4}) into its inequality constrained form. The constraint that the probability weights sum to one allows us to reparametrize the optimization problem in terms of $R-1$ instead of $R$ unknown parameters. Without loss of generality, one can rewrite the $R$th weight as $\theta_R = 1 -  \sum_{r=1}^{R-1}\theta_r$. Substituting $\theta_R$ in Equation (\ref{eq3}) with $1 - \sum_{r = 1}^{R-1}\theta_r$ gives the inequality constrained optimization problem 

\begin{align}\label{eq5}
\begin{split}
&\thetahat^{\text{FKRB}} = \argmin\limits_{\theta}\frac{1}{NJ}\sum\limits_{i = 1}^N\sum\limits_{j = 1}^J\bigg(\ytilde_{i,j} - \sum\limits_{r = 1}^{R - 1}\theta_r\ztilde_{i,j}^r\bigg)^2 \\
&\quad\quad\text{s.t.}\quad  \theta_r\geq0 \quad \forall r \quad \text{and}\quad \sum\limits_{r = 1}^{R - 1}\theta_r \leq 1 
\end{split}
\end{align}

where $\ytilde_{i,j} = y_{i,j} - z_{i,j}^R$ and $\ztilde_{i,j}^r = z_{i,j}^r - z_{i,j}^R$ for every $r = 1, \ldots, R-1$. Because Equation (\ref{eq5}) is an equivalent form of the optimization problem in Equation (\ref{eq4}), the objective functions are minimized by the same vector of probability weights. The only difference in the inequality constrained problem is the estimation of the $R$th weight, which is calculated after optimization as $\theta_R = 1-\sum_{r=1}^{R-1}\theta_r$, and is not explicitly part of the optimization. By the constraints $\theta_r\geq0 \quad \forall r $ and  $ \sum_{r = 1}^{R - 1}\theta_r \leq 1$, the $R$th weight satisfies the property of a probability weight, $ 1 \geq \theta_R \geq 0 $.

Comparing the FKRB estimator's transformed optimization problem with that of the NNL applied to the linear probability model formulated in Equation (\ref{eq3}), 

\begin{align}\label{eq6}
\begin{split}
&\hat{\theta}^{\text{NNL}} = \argmin\limits_{\theta}\frac{1}{NJ}\sum\limits_{i = 1}^N\sum\limits_{j = 1}^J\bigg(\ytilde_{i,j} - \sum\limits_{r = 1}^{R - 1}\theta_r\ztilde_{i,j}^r\bigg)^2 \\
&\quad\quad\text{s.t.}\quad  \theta_r\geq0 \quad \forall r \quad \text{and}\quad \sum\limits_{r = 1}^{R - 1}\theta_r \leq s,
\end{split}
\end{align}

reveals that the baseline estimator is a special case of NNL with fixed tuning parameter $s = 1$. The constraint that the probability weights sum to one resembles an  $\ell_1$ penalty that regularizes the parameter estimates and shrinks some weights to zero  if the sum of unrestricted weights exceeds one. 

The amount of regularization depends on the size of the unrestricted estimates. The more the sum of the $R-1$ unconstrained weights in Equation (\ref{eq5}) exceeds one, the stronger the shrinkage imposed by the constraint, and the larger the number of potential zero weights. According to \citeA{wu2014lasso}, NNL can result in very sparse models if the constraint is too restrictive. If the sum of the $R-1$ unconstrained weights is less than or equal to one, the constraint has no effect, and the estimated coefficients correspond to the nonnegative least squares solution. 

In addition to its sparse nature, the relation to NNL reveals that the FKRB estimator exhibits a ``random'' selection behavior among grid points. Just like NNL, the estimator has no unique solution when the correlation among choice probabilities evaluated at $\mathcal{B}_R$ is strong. It tends to select one out of a group of highly correlated grid points at random and estimates the remaining to zero (see \citeNP{hastie2005}, and \citeNP{hastie09}, for the random behavior of LASSO). Because the correlation is particularly strong in a dense grid among neighboring grid points, the random selection behavior is especially severe for dense random coefficient grids. This property conflicts with the requirement of a sufficiently fine grid for accurate approximations of $F_0(\beta)$. 


\subsubsection{Elastic Net Estimator}

Extending the FKRB estimator's optimization problem formulated in Equation (\ref{eq5}) by a quadratic constraint on the probability weights alleviates the sparse nature and random selection behavior. The additional constraint is known from ridge regression \cite{hoerl1970} and transforms the FKRB estimator into the nonnegative elastic net \cite{wu2014elastic} with fixed constraint on the $\ell_1$-penalty. Thus, our adjusted estimator minimizes

\begin{align}\label{eq7}
\begin{split}
&\thetahat^{\text{ENET}} = \argmin\limits_{\theta}\frac{1}{NJ}\sum\limits_{i = 1}^N\sum\limits_{j = 1}^J\bigg(\ytilde_{i,j} - \sum\limits_{r = 1}^{R - 1}\theta_r\ztilde_{i,j}^r\bigg)^2 \\
\quad\quad\text{s.t.}&\quad  \theta_r\geq0 \quad \forall r \quad \text{and}\quad \sum\limits_{r = 1}^{R - 1}\theta_r \leq 1 \quad \text{and}\quad \sum\limits_{r = 1}^{R - 1}\theta_r^{\,2}\leq t 
\end{split} 
\end{align} 

where $t$ is a nonnegative tuning parameter specified by the researcher. Having a linear and quadratic constraint on the probability weights ensures a more reliable selection of grid points: 
the quadratic constraint encourages a grouping effect, which allows us to recover highly correlated points inside the true support of $F(\beta)$ together and, hence, reduces the estimator's sparsity. The linear constraint, in turn, retains the LASSO property, which makes it possible to select  weights  inside the support of the true distribution and to estimate zero weights at points outside the true support. 

In addition to the improved selection consistency, the quadratic constraint has the desirable property that it allows the specification of a substantially finer grid of random coefficients. While the FKRB estimator runs into almost perfect collinearity problems  if the grid becomes finer \cite{fox2016}, the quadratic constraint ensures that the optimization problem for our adjusted estimator always has a solution. 
The non-sparse solutions together with the possibility of a finer grid endow our estimator with the ability to provide more accurate and reliable estimated distribution functions.

The specification of the tuning parameter allows adjusting the estimator to the level of correlation among grid points. Smaller values of $t$ give more weight to the quadratic constraint, which enables the joint recovery of grid points if the correlation is strong and, hence, reduces the sparsity of the estimator. 
For decreasing values of $t$, the estimator shrinks the probability weights of highly correlated grid points toward each other and induces an averaging of the estimated weights. 
For any $t \geq 1$, the quadratic constraint does not bind, such that the adjusted estimator simplifies to the baseline estimator. Therefore, our estimator is a generalization of the FKRB estimator given in Equation \eqref{eq5}, including it as a special case.
We recommend choosing the tuning parameter with cross-validation and the one standard error rule based on the mean squared error (MSE) criterion. This approach ensures that our estimator achieves a model fit that is at least as high as the FKRB estimator. If the model fit is highest for  $t \geq 1$, the outcome of our adjusted estimator is the same as that for the estimator by FKRB, while it performs better if the model fit is lowest for some $t<1$. 

Loosely speaking, the improved selection consistency of our generalized estimator leads to  more precise estimates of the probability weights. 
We argue that the FKRB estimator  can lead to potentially biased estimates if the linear constraint is binding. In that case, the estimator shrinks the weights at some grid points to zero despite the positive probability mass of $F_0(\beta)$ at these points. Due to the constraint that the estimated weights sum to one, the incorrect zero weights lead to downward biased estimates at points with positive weights. The FKRB estimator reallocates the probability mass from the points with incorrect zero weights to other points, which imposes an upward bias at these points.
The quadratic constraint potentially reduces the described distortions through its improved selection consistency. As a result of more correct positive probability weights, the quadratic constraint diminishes the reallocation of probability caused by the linear constraint and, therefore, reduces the bias both at points with incorrect zero weights and positive weights.\\

The results of the two Monte Carlo studies presented in Section \ref{sec:montecarlo} demonstrate that the quadratic constraint reduces both the sparsity and the bias. 
Moreover, we derive an  error bound on the estimated probability weights in Section \ref{sec:errorbound} which, under certain conditions, is tighter for our generalized estimator than for the FKRB estimator if the correlation between grid points is strong.


\newpage

\section{Theoretical Analysis of the Estimators’ Properties}\label{sec:errorbound}

The requirement of a sufficiently fine grid, which potentially includes points outside the true support, transforms the fixed grid estimator into a high dimensional regression problem with potentially sparse solutions and highly correlated covariates. Recall that in such a context, an important element of an accurate estimation of $F_0(\beta)$ is the consistent selection of grid points. It guarantees the correct recovery of $F_0(\beta)$'s support, and is fundamental to an undistorted estimation of the probability weights. 
Subsection \ref{subsec:recovery} aims to analyze both estimators' ability to select the correct weights.  To evaluate the overall approximation accuracy of the estimators presented in Section \ref{sec:estimator}, we derive an error bound for the estimated probability weights in Subsection \ref{subsec:bound}. 


Suppose $\thetastar = (\theta_1^{*}, \ldots, \theta_{R-1}^{*})^T$ specifies the vector of probability weights that yields the most accurate discrete approximation, $F^*(\beta) = \sum_{r=1}^R\theta_r^{*}\bm{1}[\beta_r\leq\beta]$ with $\theta_R^{*} = 1 - \sum_{r=1}^{R-1}\theta_r^{*}$, of $F_0(\beta)$ which can be obtained with the estimators for a given fixed grid $\mathcal{B}_R$. 
Furthermore, assume that $F^*(\beta)$ converges to $F_0(\beta)$ for $R$ going to infinity. We use $F^*(\beta)$ as a benchmark to compare the estimated distribution function, $\hat{F}(\beta) = \sum_{r=1}^R\hat{\theta}_r \bm{1  }[\beta_r\leq\beta]$ with $\hat{\theta}_R = 1 - \sum_{r=1}^{R-1}\hat{\theta}_r$, to the true underlying distribution.
The introduction of $F^*(\beta)$ allows us to study the selection consistency and the distance between $\hat{\theta}$ and $\thetastar$.\\

The focus of our analysis is on the impact of the correlation among the grid points on the estimators. 
We show that our generalized estimator is selection consistent under less restrictive conditions on the design matrix. 


Due to the relation of the estimators to the NNL and nonnegative elastic net, respectively, we build on the literature on regularized regression. Our proof of the selection consistency mainly follows  \citeA{jia2010}, who analyze selection consistency of the elastic net under i.i.d. Gaussian errors. Similarly to  \citeA{jia2010}, \citeA{wu2014lasso} and \citeA{wu2014elastic} derive selection consistency of the nonnegative LASSO, and the nonnegative elastic net for i.i.d. Gaussian errors. 

We extend their proof to sub-Gaussian errors and allow for correlation among the $J$ errors that belong to the same observation unit $i$. 
Thereby, we contribute to the literature on the nonnegative elastic net.
Neither \citeA{jia2010} nor \citeA{wu2014elastic} calculate error bounds on the deviation between the estimated and the true coefficients.
Our proof of the error bound on the estimated weights draws from \citeA{takada2017}, who analyze a generalization of the elastic net. We adjust their proof such that it is in line with the probability model in Section \ref{sec:estimator}.


For any $\mathcal{B}_R$, denote the linear probability model corresponding to $F^*(\beta)$  by

\begin{equation}\label{eq:true}
y_{i,j} =  \sum_{r=1}^R \theta_r^{*} z_{i,j}^r  + \epsilon_{i,j}
\end{equation}

where $\epsilon_{i,j} $ is the linear probability error. 
For our analysis of the selection consistency and for the error bound on the estimated weights, we make the following assumptions on the linear probability model in Equation (\ref{eq:true}), and on the data generating process. 

\begin{assumption} \label{ass:Exogen} $ $
	\begin{enumerate}[(i)]
	    \item  $\Big( \epsilon_i = (\epsilon_{i,1},...,\epsilon_{i,J}) \Big)_{i=1}^N $ are independent.
	    \item  $\epsilon_{i,j}$ is sub-Gaussian: $ \mathbb{E}\left[\exp\left(t \epsilon_{i,j} \right) \right] \leq \exp\Big(\frac{\sigma^2 t^2}{2} \Big) \quad (\forall t \in \mathbb{R} ) $ for  $\sigma >0 $.	
	    \item $\big( \Ztilde_i \big)_{i=1}^N $ are $i.i.d.$ with a density bounded from above and each $ \ztilde_{i,j}^r \in [-1,1]$.
		\item $\mathbb{E}\left[ \epsilon_i | \Ztilde_1,...,\Ztilde_N \right] = 0$.
		
	\end{enumerate}
\end{assumption}
$\Ztilde$ refers to the regressor matrix of the transformed model in Equation (\ref{eq5}) and $\tilde{Z}_i$ to the corresponding $J\times R-1$ regressor matrix for observation unit $i$.
Assumption \ref{ass:Exogen} (i) imposes independence across the vectors of errors for each observation unit. It does not assume independence of elements within each vector of errors.  
 Assumption \ref{ass:Exogen} (ii) assumes that the errors are sub-Gaussian with variance proxy $\sigma$. The variance proxy $\sigma$ serves as an upper bound of the variance of the errors and  allows for (conditional) heteroscedasticity.  Note that the error term  in the linear probability model in Equation \eqref{eq:true} is sub-Gaussian with variance proxy $\sigma \leq 1$. This follows from the fact that the error term in the linear probability model is bounded between -1 and 1 since $y_{i,j}$ is either  $0$ or $1$, the weights $\theta_r$ are nonnegative and by  Assumption \ref{ass:Exogen} (iii) $\ztilde_{i,j}^r $  is also bounded between  -1 and 1.  $ \ztilde_{i,j}^r \in [-1,1]$ is satisfied by the logit kernel in Equation \eqref{logit} and other examples such as the kernel of binary choice and of multinomial choice without logit errors (e.g., see \citeNP{fox2016}).   Assumption \ref{ass:Exogen} (iv) holds by the definition of linear probability models. 


\subsection{Selection Consistency} \label{subsec:recovery}

For our analysis of the selection consistency, we adapt the definition of \citeA{zhao2006}. An estimator is defined as equal in sign if $\thetahat_r$ and $\thetastar_r$ have the same sign for every $r = 1, \ldots, R-1$. Due to the nonnegativity of the estimates, the definition implies that $\thetahat$ must be positive at all points in $\mathcal{B}_R$ for which ${\theta}_r^{*} > 0$, and zero at those where ${\theta}_r^{*} = 0$. Therefore, the estimation of the correct signs is equivalent to the correct selection of grid points. If an estimate $\thetahat$ of the true weights $\theta$ is equal in sign, we write $ \thetahat =_s \theta$. 

Our definition only includes $R-1$ points of the transformed model in Equation (\ref{eq7}). That is, we  only identify whether the $R-1$ weights included in Equation (\ref{eq7}) have the correct sign but not whether the last weight $\theta_R = 1-\sum_{r=1}^{R-1}\theta_r$ has the correct sign.  

\begin{definition}\label{def:Consistency}
	An estimate $\thetahat$ is \textbf{sign consistent} if 
	\begin{equation}
	\lim\limits_{N\rightarrow\infty} P\left( \thetahat =_s \thetastar \right) = 1.\notag
	\end{equation}	
\end{definition}

According to Definition \ref{def:Consistency}, an estimator is sign consistent if it estimates a positive weight at every grid point at which $\thetastar > 0$, and zero weights otherwise with probability approaching one as the number of observation units $N$ goes to infinity.

To derive the condition under which our generalized estimator is sign consistent, we make the following notations on the design matrix and probability weights. 
We assume that $\mathcal{B}_R$ includes both grid points inside the support of $F_0(\beta)$, i.e., points at which $\thetastar>0$, and points outside the true support, i.e., at which $\thetastar = 0$. Let $S = \{r\in\{1, \ldots, R-1\} \vert\theta_r^{*}>0\}$ define the index set of grid points at which $\thetastar>0$, and let $S^C = \{r\in\{1, \ldots, R-1\} \vert\theta_r^{*} = 0\}$ denote its complement. The corresponding cardinalities are defined as $s := |S|$ and $ s^C :=|S^C| $. We refer to grid points in $S$ as active grid points and to grid points in $S^C$ as inactive grid points.
$\Ztilde_S$ and $\Ztilde_{S^C}$ denote the sub-matrices of all columns of $\Ztilde$ that are in $S$ and $S^C$, respectively. 

Let $\lambda$ denote the fixed LASSO parameter which corresponds to the Lagrange parameter for $s=1$ in Equation (\ref{eq7}) and $\mu$ the Lagrange version of the ridge tuning parameter $t$ in Equation (\ref{eq7}).\\


Following \citeA{wu2014elastic}, we then obtain the subsequent  condition for the sign consistency of the generalized estimator:

\begin{EIC} \label{cond:eic}
	There exists a positive constant  $\eta > 0$ (independent of $N$) such that
	\begin{equation} \label{NEIC} \nonumber
	\max\limits_{r\in S^C} \frac{1}{NJ}\Ztilde_{S^C}^T\Ztilde_S \left( \frac{1}{NJ}\Ztilde_S^T\Ztilde_S + \mu \imat_S \right)^{-1} \left(\iota_S + \frac{\mu}{\lambda}\thetastar_S  \right)  \leq 1 - \eta 
	\end{equation}
	where $\iota_S$ is a vector of $s$ ones and $I_S$ is the identity matrix. 
\end{EIC}

The \hyperref[cond:eic]{NEIC} is a condition for the correct recovery of support points through our generalized estimator.
The term $\Ztilde_{S^C}^T\Ztilde_S $  restricts the linear dependency between active and inactive grid points. The term $\Ztilde_S^T\Ztilde_S$  measures the linear dependency among active grid points.   In addition to the linear dependence of the regressor matrix, the magnitude of the fixed LASSO parameter and the tuning parameter $\mu$ is taken into account by the \hyperref[cond:eic]{NEIC}. 
For $\mu = 0$, the \hyperref[cond:eic]{NEIC} reverts to the Nonnegative Irrepresentable Condition (NIC), the corresponding condition for selection consistency through the estimator proposed by FKRB.  
In contrast to the \hyperref[cond:eic]{NEIC}, the NIC requires that the inverse of $\Ztilde_S^T\Ztilde_S$ exists, which is not a necessary condition for the \hyperref[cond:eic]{NEIC} to hold.  



We exploit the special structure of our data by incorporating the fact that all $\theta$ are between zero and one, and all elements of $\Ztilde$ between minus one and one.

In line with \citeA{fox2016}, we allow $R(N)$ to depend on the sample size $N$. That is, the larger $N$, the more grid points $R(N)$ can be included into the grid. If $R(N)$ increases, we typically expect the number of positive weights $s(N)$ to increase  if the true distribution $F_0(\beta)$ is sufficiently smooth. The next condition restricts the rate at which $s(N)$ and $R(N)$ can increase with $N$. 
For convenience, we write $s$ and $R$ instead of $s(N)$ and $R(N)$ in the subsequent analyses.  

\newpage

\begin{conditionRates}  $ $
	\begin{enumerate} 
		\item $ \lim\limits_{N\rightarrow\infty} 2 s J \exp\left(-\frac{N   {\xi_{\min}^S(\mu)}^2  \rho^2  }{2  s}\right) = 0 $.
		\item $ \lim\limits_{N\rightarrow\infty} 2 (R-1) J \exp\left(-N \eta^2 \lambda^2 \left( \frac{\xi_{\min}^S(\mu)}{s \sqrt{s} +\xi_{\min}^S(\mu)}  \right)^2\big/2\right) = 0, $
	\end{enumerate} 
	where  $\xi_{\min}^S(\mu) $ denotes the (unrestricted) minimal eigenvalue of $1/(NJ)\Ztilde_S^T \Ztilde_S + \mu\imat_S$ and  $\rho := \min  \limits_{i \in S} \bigg\vert \left( 1/(NJ)\Ztilde_S^T\Ztilde_S + \mu \imat_S \right)^{-1} \left( 1/(NJ)\Ztilde_S^T\Ztilde_S \thetastar_S  - \lambda\iota_S \right) \bigg\vert$. \label{ass:ConditionSelectionLASSO}
\end{conditionRates}

\hyperref[ass:ConditionSelectionLASSO]{RCDG}  requires that $\xi_{\min}^S(\mu)>0$. Otherwise, the condition can never be satisfied. 
This is only restrictive for the FKRB estimator and always holds for its generalization as long as $\mu > 0$ since  $1/(NJ)\Ztilde_S^T \Ztilde_S + \mu\imat_S$ is positive  definite for $\mu >0$ and only positive semidefinite for $\mu=0$. The assumption $\xi_{\min}^S(\mu)>0$ excludes the possibility of perfect collinearity to ensure that the  solution to the FKRB estimator exists. \\

\begin{theorem} \label{theo:ConditionSelectionLASSO}
	Suppose Assumption \ref{ass:Exogen}  holds. Suppose further that \hyperref[cond:eic]{NEIC} and \hyperref[ass:ConditionSelectionLASSO]{RCDG} hold. Then 
	\begin{equation} \nonumber
	 \lim\limits_{N\rightarrow\infty} \prob\left( \thetahat =_s \thetastar \right) = 1.
	\end{equation}
	
\end{theorem}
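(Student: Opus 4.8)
The plan is to follow the primal–dual witness construction used by \citeA{jia2010} and \citeA{wu2014elastic}, adapting only the concentration step to the sub-Gaussian, within-unit-correlated errors of Assumption \ref{ass:Exogen}. First I would pass to the Lagrangian (penalized) form of Equation (\ref{eq7}), minimizing $\frac{1}{NJ}\norm{\ytilde - \Ztilde\theta}^2 + \lambda\sum_r\theta_r + \mu\sum_r\theta_r^2$ over $\theta\geq 0$, where $\lambda$ and $\mu$ are the Lagrange multipliers introduced before the \hyperref[cond:eic]{NEIC}. Because $\mu>0$ makes this objective strictly convex, its minimizer is unique, which is exactly what rules out the ``random'' selection behavior of the $\mu=0$ (FKRB) case. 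Writing $\hat{\Sigma} = \frac{1}{NJ}\Ztilde^T\Ztilde$ and decomposing $\ytilde = \Ztilde\thetastar + \epsilon$ from Equation (\ref{eq:true}), the Karush–Kuhn–Tucker conditions with the nonnegativity constraints read, in terms of the noise vector $w := \frac{1}{NJ}\Ztilde^T\epsilon$: on active coordinates $r$ with $\thetahat_r>0$ the stationarity equation holds with equality, while on the remaining coordinates a one-sided (dual-feasibility) inequality holds.

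Next I would construct the candidate solution supported on $S$: set $\thetahat_{S^C}=0$ and solve the active stationarity equation to obtain $\thetahat_S = \big(\hat{\Sigma}_{SS}+\mu\imat_S\big)^{-1}\big(\hat{\Sigma}_{SS}\thetastar_S - \lambda\iota_S + w_S\big)$. By strict convexity, verifying that this candidate satisfies all KKT conditions certifies that it is the true minimizer and that $\thetahat =_s \thetastar$. This reduces the theorem to showing that two events hold with probability tending to one. The event $E_1$ is that $\thetahat_S>0$ componentwise; since the deterministic part of $\thetahat_S$ is, by definition of $\rho$, bounded below in modulus by $\rho$ on every active coordinate, $E_1$ is implied by the noise term $\big(\hat{\Sigma}_{SS}+\mu\imat_S\big)^{-1}w_S$ being smaller than $\rho$ in each coordinate. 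The event $E_2$ is the dual-feasibility inequality on $S^C$; substituting $\thetahat_S$ and invoking the \hyperref[cond:eic]{NEIC} leaves a margin $\eta$ on the left-hand side, so $E_2$ is implied by the propagated noise (through both $w_{S^C}$ and $\big(\hat{\Sigma}_{SS}+\mu\imat_S\big)^{-1}w_S$) staying below $\eta\lambda$ times the eigenvalue factor appearing in \hyperref[ass:ConditionSelectionLASSO]{RCDG}.

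The technical core is bounding these two noise events. Using $\norm{\big(\hat{\Sigma}_{SS}+\mu\imat_S\big)^{-1}}_2 \leq 1/\xi_{\min}^S(\mu)$ and passing from $\ell_2$ to $\ell_\infty$ with a factor $\sqrt{s}$, both events reduce to controlling $\norm{w_S}_\infty$ and a comparable $\ell_\infty$ functional of $w$ over $S^C$. Each coordinate $w_r = \frac{1}{NJ}\sum_{i=1}^N\sum_{j=1}^J \ztilde_{i,j}^r\epsilon_{i,j}$ I would regroup as an average over the $N$ blocks $\frac{1}{J}\sum_{j}\ztilde_{i,j}^r\epsilon_{i,j}$, which are independent across $i$ by Assumption \ref{ass:Exogen}(i) and mean-zero by (iv). Each block is sub-Gaussian: by (ii) together with $\ztilde_{i,j}^r\in[-1,1]$ from (iii), it inherits a variance proxy of order $\sigma\leq 1$. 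A Hoeffding bound for independent sub-Gaussian averages gives a two-sided tail $2\exp(-Nt^2/2)$ per coordinate, and a union bound over the $s$ (respectively $R-1$) coordinates and the within-block index produces the prefactors $2sJ$ and $2(R-1)J$. Choosing the thresholds $t$ dictated by $E_1$ and $E_2$ yields exactly the two expressions in \hyperref[ass:ConditionSelectionLASSO]{RCDG}, which vanish by hypothesis; hence $\prob(E_1\cap E_2)\to 1$ and therefore $\prob(\thetahat=_s\thetastar)\to 1$.

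I expect the main obstacle to be this concentration step under within-unit correlation, which is the genuine departure from \citeA{jia2010} and \citeA{wu2014elastic}, who assume i.i.d. Gaussian errors. The dependence among $\epsilon_{i,1},\dots,\epsilon_{i,J}$ forbids treating the $NJ$ summands as independent; the resolution is that Assumption \ref{ass:Exogen}(i) only couples errors \emph{within} a unit, so the per-unit blocks stay independent and their sub-Gaussian norm is controlled through the boundedness of $\Ztilde$ in (iii) and the variance-proxy bound in (ii). Keeping the block variance proxy at $O(1)$ after summing over the correlated $j$ index — so that the eigenvalue-dependent thresholds translate cleanly into the RCDG rates — is the delicate bookkeeping. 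A secondary subtlety is ensuring $\xi_{\min}^S(\mu)>0$ so that $\big(\hat{\Sigma}_{SS}+\mu\imat_S\big)^{-1}$ is well defined; this holds for $\mu>0$ as noted after \hyperref[ass:ConditionSelectionLASSO]{RCDG}, and is precisely where the quadratic penalty buys the robustness that the FKRB estimator lacks.
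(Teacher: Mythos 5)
Your proposal is correct and follows essentially the same route as the paper: the same primal--dual witness construction with the candidate $\thetahat_S = (\hat{\Sigma}_{SS}+\mu\imat_S)^{-1}(\hat{\Sigma}_{SS}\thetastar_S-\lambda\iota_S+w_S)$, the same two events (the paper's $\mathcal{M}(U)$ and $\mathcal{M}(V)$, with the NEIC supplying the $\eta\lambda$ margin), the same $\sqrt{s}/\xi_{\min}^S(\mu)$ operator-norm bound, and the same blockwise Hoeffding-plus-union-bound argument yielding the $2sJ$ and $2(R-1)J$ prefactors that match the RCDG rates. The only cosmetic difference is that you invoke strict convexity for $\mu>0$ to certify uniqueness of the witness, whereas the paper verifies the KKT conditions directly; this does not change the substance of the argument.
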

\begin{proof}
See Appendix \ref{app:ProofOfSelectionConsistency}.
\end{proof}
Theorem \ref{theo:ConditionSelectionLASSO} relies on sufficient conditions for the  estimators to select the true weights. These conditions are more restrictive for the FKRB estimator than for our generalization. 
Since $\xi_{\min}^S(\mu) = \xi_{\min}^S(0)+\mu$, the minimal eigenvalue $\xi_{\min}^S(\mu)$ is  higher for the elastic net than for the LASSO estimator. Furthermore, the \hyperref[cond:eic]{NEIC} holds whenever the NIC is satisfied. This implies that our estimator consistently selects the true support whenever the FKRB estimator does.

The converse is not true since 
the \hyperref[cond:eic]{NEIC} might hold even though NIC does not. 
Thus, Theorem \ref{theo:ConditionSelectionLASSO} reveals that our estimator can select the true weights in cases in which the FKRB estimator cannot.



\subsection{Error Bounds}\label{subsec:bound}

A key requirement for an accurate estimation of $F_0(\beta)$ - in addition to the correct support recovery discussed in Subsection \ref{subsec:recovery} - is the precise estimation of the probability weights. In this section, we derive the error bound for the estimated probability weights and the weights that yield the best discrete approximation of $F_0(\beta)$. 

Let $\mathcal{H}$ denote the set of vectors of length $R$ in  $[-1,1]^R$ for which the $\ell_1$-norm is no greater than $2$
\begin{equation}
\mathcal{H} := \left\{x\in [-1,1]^R \ \Big\vert \ \normL{x}_1  \leq  2  \right\} .    \notag
\end{equation}
The set  $\mathcal{H}$ contains all possible values of $\Delta \thetahat := \thetahat - \thetastar$ since $\thetahat$ and $\thetastar$ are vectors of weights which sum up to $1$. Therefore, it is sufficient to consider elements in  $\mathcal{H}$ when analyzing the potential error $\Delta \thetahat$.

Define the restricted minimum eigenvalue of the real symmetric $R\times R$ matrix\\$1/(NJ)\Ztilde^T\Ztilde+\mu\imat_R$ 
over the set of vectors $\mathcal{H}$ as

\begin{equation}
\xi_{\min}(\mu) := \  \inf\limits_{v\in\mathcal{B}} \ \frac{v^T\big[\frac{1}{NJ}\Ztilde^T\Ztilde + \mu\imat_R\big]v}{\normL{v}_2^2}. \notag
\end{equation}

Because the restricted minimal eigenvalue is greater than or equal to the unrestricted minimal eigenvalue, we use the restricted eigenvalue to derive a tighter error bound. We still assume $\xi_{\min}(\mu) > 0$ which rules out perfect collinearity. 
By the same arguments as in Subsection \ref{subsec:recovery},  $\xi_{\min}(\mu) > 0$  is  always satisfied for our generalized estimator 
with $\mu>0$ and $\xi_{\min}(\mu)>0$ is only restrictive for the FKRB estimator. 

%



Following the proof in \citeA{takada2017}, we obtain an error bound on the $R-1$ estimated probability weights.

\begin{theorem} \label{theo:l2normweights}
	Let $0<\delta\leq 1$. Define $\gamma(N,\delta) := \sqrt{2\log\left(\frac{2(R-1)J}{\delta}\right)\big/N}$. Suppose Assumption \ref{ass:Exogen} holds, and that $\xi_{\min}(\mu) > 0$ for $\mu \geq 0$. Then, for any positive  $k$ such that $\gamma(N,\delta) \leq k\lambda$, it holds with probability $1-\delta$ that 
	\begin{equation} \nonumber
	\normL{\thetahat - \thetastar}_2 \leq \frac{2\sqrt{R-1} \ k\lambda + 2\mu\sqrt{s} \normL{\thetastar_S}_\infty}{\xi_{\min}(\mu)}.
	\end{equation}
\end{theorem}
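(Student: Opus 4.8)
The plan is to work with the Lagrangian (penalized) form of \eqref{eq7}, in which $\thetahat$ minimizes $\tfrac{1}{NJ}\normL{\ytilde-\Ztilde\theta}_2^2 + \lambda\normL{\theta}_1 + \mu\normL{\theta}_2^2$ over $\theta\geq 0$, with $\lambda$ and $\mu$ the multipliers attached to the linear and quadratic constraints. Since $\thetastar\geq 0$ is feasible for this problem, optimality yields the basic inequality $L(\thetahat)\leq L(\thetastar)$. I would then substitute the reparametrized true model $\ytilde_{i,j}=\sum_{r=1}^{R-1}\thetastar_r\ztilde_{i,j}^r+\epsilon_{i,j}$ (which follows from \eqref{eq:true} together with $\theta_R^*=1-\sum_{r=1}^{R-1}\theta_r^*$), expand the squared residuals, and cancel the common $\normL{\epsilon}_2^2$. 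Writing $\Delta:=\thetahat-\thetastar$ and using $\normL{\thetahat}_2^2-\normL{\thetastar}_2^2 = 2\thetastarT\Delta + \normL{\Delta}_2^2$ to move the ridge term, this rearranges to
\begin{equation}\nonumber
\tfrac{1}{NJ}\normL{\Ztilde\Delta}_2^2+\mu\normL{\Delta}_2^2 \;\leq\; \tfrac{2}{NJ}\epsilon^T\Ztilde\Delta+\lambda\big(\normL{\thetastar}_1-\normL{\thetahat}_1\big)-2\mu\,\thetastarT\Delta .
\end{equation}
The left-hand side is the quadratic form $\Delta^T\big(\tfrac{1}{NJ}\Ztilde^T\Ztilde+\mu\imat\big)\Delta$; since $\thetahat$ and $\thetastar$ both sum to one we have $\Delta\in\mathcal{H}$, so by the definition of $\xi_{\min}(\mu)$ as the restricted minimum eigenvalue over $\mathcal{H}$ it is bounded below by $\xi_{\min}(\mu)\normL{\Delta}_2^2$.

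The crux is a high-probability bound on the stochastic term, and this is where the extension beyond the i.i.d.\ Gaussian analyses of \citeA{jia2010} and \citeA{takada2017} enters. I would bound $\tfrac{2}{NJ}\epsilon^T\Ztilde\Delta \leq 2\big(\max_{r}\big|\tfrac{1}{NJ}\sum_{i,j}\epsilon_{i,j}\ztilde_{i,j}^r\big|\big)\normL{\Delta}_1$ and then control the maximum. The hard part is that Assumption~\ref{ass:Exogen} only gives independence \emph{across} units $i$, while the $J$ errors within a unit may be correlated, so $\sum_{i,j}$ is not a sum of $NJ$ independent sub-Gaussians. The device I would use is to avoid summing over $j$: for each fixed pair $(r,j)$ the average $\tfrac{1}{N}\sum_{i}\epsilon_{i,j}\ztilde_{i,j}^r$ \emph{is} a mean of $N$ independent, mean-zero (Assumption~\ref{ass:Exogen}(iv)) sub-Gaussian variables with variance proxy at most $\sigma^2\leq 1$ (using $|\ztilde_{i,j}^r|\leq 1$), so the standard sub-Gaussian tail applies. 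A union bound over the $(R-1)J$ pairs at level $\delta$ gives $\max_{r,j}\big|\tfrac{1}{N}\sum_i\epsilon_{i,j}\ztilde_{i,j}^r\big|\leq\gamma(N,\delta)$ with probability $1-\delta$; and since $\big|\tfrac{1}{NJ}\sum_{i,j}\epsilon_{i,j}\ztilde_{i,j}^r\big|\leq\max_j\big|\tfrac{1}{N}\sum_i\epsilon_{i,j}\ztilde_{i,j}^r\big|$, the same bound transfers to the quantity I need, and $\gamma(N,\delta)\leq k\lambda$ then delivers $\tfrac{2}{NJ}\epsilon^T\Ztilde\Delta\leq 2k\lambda\normL{\Delta}_1$. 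This grouping-over-$j$ trick is precisely what accommodates sub-Gaussian, within-unit-correlated errors, and I expect it to be the main technical obstacle.

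For the remaining deterministic terms I would exploit the support set $S=\{r:\thetastar_r>0\}$. The ridge cross term is handled by $-2\mu\,\thetastarT\Delta=-2\mu(\thetastar_S)^T\Delta_S\leq 2\mu\normL{\thetastar_S}_\infty\normL{\Delta_S}_1\leq 2\mu\sqrt{s}\,\normL{\thetastar_S}_\infty\normL{\Delta}_2$, which already reproduces the second numerator term. The $\ell_1$-penalty difference is controlled by the triangle inequality together with $\thetastar_{S^C}=0$ and combined with the stochastic term, after which the crude conversion $\normL{\Delta}_1\leq\sqrt{R-1}\normL{\Delta}_2$ produces the $2\sqrt{R-1}\,k\lambda\normL{\Delta}_2$ contribution (the precise constant following the same bookkeeping as in \citeA{takada2017}). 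Collecting the pieces gives $\xi_{\min}(\mu)\normL{\Delta}_2^2\leq\big(2\sqrt{R-1}\,k\lambda+2\mu\sqrt{s}\,\normL{\thetastar_S}_\infty\big)\normL{\Delta}_2$, and dividing by $\xi_{\min}(\mu)\normL{\Delta}_2$ (the case $\Delta=0$ being trivial) yields the stated bound. The strict positivity $\xi_{\min}(\mu)>0$ is exactly what legitimizes this division and, as noted after the theorem, holds automatically for the generalized estimator whenever $\mu>0$.
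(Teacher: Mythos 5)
Your proposal follows the paper's proof essentially step for step: the same basic inequality $L(\thetahat)\leq L(\thetastar)$ for the Lagrangian, the same expansion after substituting $\ytilde=\Ztilde\thetastar+\epsilon$, the same decomposition $\normL{\thetahat}_2^2-\normL{\thetastar}_2^2=2\thetastarT\Delta+\normL{\Delta}_2^2$ of the ridge term with the bound $-\thetastarT_S\Delta_S\leq\sqrt{s}\,\normL{\thetastar_S}_\infty\normL{\Delta}_2$, the same restricted-eigenvalue lower bound over $\mathcal{H}$, and --- importantly --- exactly the paper's device for the stochastic term (its Lemma 1): avoid summing over $j$, apply Hoeffding to $\frac{1}{N}\sum_i\ztilde_{i,j}^r\epsilon_{i,j}$ for each fixed $(r,j)$, union-bound over $(R-1)J$ pairs, and use $\big|\frac{1}{NJ}\sum_{i,j}\ztilde_{i,j}^r\epsilon_{i,j}\big|\leq\max_j\big|\frac{1}{N}\sum_i\ztilde_{i,j}^r\epsilon_{i,j}\big|$. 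You correctly identified this as the point where the within-unit correlation must be absorbed.

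The one step that does not close as you describe it is the $\ell_1$-penalty difference. Bounding $\lambda\big(\normL{\thetastar}_1-\normL{\thetahat}_1\big)$ by the triangle inequality gives at best $\lambda\normL{\Delta}_1\leq\lambda\sqrt{R-1}\,\normL{\Delta}_2$, which adds to the $2k\lambda\sqrt{R-1}\,\normL{\Delta}_2$ from the noise term and produces the constant $(2k+1)\sqrt{R-1}\,\lambda$ rather than $2k\sqrt{R-1}\,\lambda$; since the theorem allows arbitrarily small $k$ with $\gamma(N,\delta)\leq k\lambda$, this is a genuinely weaker bound, not just sloppier bookkeeping. (The sharper split $\normL{\thetastar}_1-\normL{\thetahat}_1\leq\normL{\Delta_S}_1-\normL{\Delta_{S^C}}_1$ does not rescue the stated constant either.) The paper instead argues that this term vanishes exactly: $\normL{\thetastar}_1=1$, and either the constraint $\iota^T\thetahat\leq 1$ binds, so $\normL{\thetahat}_1=1$ as well, or complementary slackness forces $\lambda=0$; in both cases $\lambda\big(\normL{\thetastar}_1-\normL{\thetahat}_1\big)\leq 0$ and can be dropped. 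Replace your triangle-inequality step with this observation and your argument reproduces the stated bound verbatim.
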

\begin{proof}
See Appendix \ref{app:ProofErrorBoundWeights}.
\end{proof}

Theorem \ref{theo:l2normweights} holds with probability approaching one as $\delta \to 0$. Because $\gamma(N,\delta)$ decreases in $N$, the error bound becomes tighter if the number of observation units increases. This can be seen from the condition $\gamma(N,\delta)\leq k\lambda$ which requires a smaller constant $k$ for a larger $N$ (and fixed $\lambda$).

The number of grid points leads to a direct increase of the error bound, both through $R$ and $s$, which is  expected to increase with $R$, e.g., if the true distribution is continuous.  The number of grid points also has an indirect effect attributable to the stronger correlation typically associated with an increase in the number of grid points. This effect is captured through the restricted minimum eigenvalue $\xi_{\min}(\mu)$, which decreases if the correlation increases. Hence, an increase in the number of grid points typically leads to a wider error bound on the estimated weights (for a fixed $\mu$). \\

For $\mu = 0$, the bound in Theorem \ref{theo:l2normweights} simplifies to the error bound for the FKRB estimator. A comparison of the bound for $\mu = 0$ and $\mu > 0$ reveals that the extension has two opposing effects on the estimator's precision. First, a direct increasing effect that is captured through the tuning parameter in the numerator of Theorem \ref{theo:l2normweights} and, second, an indirect decreasing effect  via the restricted minimum eigenvalue.  

While the direct effect becomes stronger with the number of true support points $s$, the indirect effect is especially relevant if the correlation among grid points is strong. In that case, the extension leads to an increase of $\xi_{\min}(\mu)$ and hence, to a tighter error bound. The indirect effect becomes particularly important if the design matrix tends to be almost singular, in which case the restricted minimum eigenvalue of the FKRB estimator approaches zero (and the error bound its maximum possible value 2).
Also note that the estimation error for the weight $\theta_R$, which is not included in the bound in Theorem \eqref{theo:l2normweights} and calculated as $\theta_R = 1-\sum_{r=1}^{R-1}\theta_r$, will approach zero whenever $\normL{\thetahat - \thetastar}_2$ is close to zero.

Corollary \ref{core:l2normweights} establishes the condition under which our extension provides a tighter error bound on the estimated weights than the FKRB estimator.

\begin{corollary} \label{core:l2normweights}
	When $ \sqrt{s} \normL{\thetastar_S}_\infty \xi_{\min}(0) <  \sqrt{R-1} \ k \lambda$,
	then the error bound for $\normL{\thetahat - \thetastar}_2 $ in Theorem \ref{theo:l2normweights} is tighter for the generalized estimator than for the FKRB estimator.
\end{corollary}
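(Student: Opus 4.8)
The plan is to compare the two error bounds delivered by Theorem \ref{theo:l2normweights} directly, exploiting the additive decomposition of the restricted minimum eigenvalue in the tuning parameter. First I would record that $\xi_{\min}(\mu) = \xi_{\min}(0) + \mu$: this is immediate from the definition of $\xi_{\min}(\mu)$, since adding $\mu\imat_R$ contributes exactly $\mu\normL{v}_2^2$ to the numerator of every Rayleigh quotient, so the $\mu$ term factors out of the infimum taken over the (common) set $\mathcal{H}$. The FKRB bound is the $\mu = 0$ instance of the theorem, while the generalized estimator uses some $\mu > 0$; the maintained hypothesis $\xi_{\min}(0) > 0$ keeps both denominators strictly positive.

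With the shift in hand, I would write the two bounds as
\begin{equation} \nonumber
B_{\text{FKRB}} = \frac{2\sqrt{R-1}\, k\lambda}{\xi_{\min}(0)}, \qquad B_{\text{ENET}} = \frac{2\sqrt{R-1}\, k\lambda + 2\mu\sqrt{s}\,\normL{\thetastar_S}_\infty}{\xi_{\min}(0) + \mu},
\end{equation}
and establish the equivalence $B_{\text{ENET}} < B_{\text{FKRB}} \iff \sqrt{s}\,\normL{\thetastar_S}_\infty\,\xi_{\min}(0) < \sqrt{R-1}\, k\lambda$. Cross-multiplying by the two positive denominators gives $\xi_{\min}(0)\big(2\sqrt{R-1}\, k\lambda + 2\mu\sqrt{s}\,\normL{\thetastar_S}_\infty\big) < 2\sqrt{R-1}\, k\lambda\,\big(\xi_{\min}(0) + \mu\big)$. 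The common term $2\sqrt{R-1}\, k\lambda\,\xi_{\min}(0)$ cancels from both sides, leaving $2\mu\sqrt{s}\,\normL{\thetastar_S}_\infty\,\xi_{\min}(0) < 2\sqrt{R-1}\, k\lambda\,\mu$, and dividing by the strictly positive factor $2\mu$ yields precisely the stated hypothesis. Reading the chain of equivalences in reverse completes the argument.

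The computation is entirely elementary, so there is no genuine obstacle here; the only points demanding care are bookkeeping ones. I would be explicit that $\mu > 0$ is used when dividing (for $\mu = 0$ the estimators coincide and the bounds are equal), and I would verify that the \emph{restricted} minimum eigenvalue over $\mathcal{H}$—not merely the unrestricted one—inherits the additive shift $\xi_{\min}(\mu) = \xi_{\min}(0) + \mu$, which it does because the shift is uniform in $v$ and therefore commutes with the infimum over any fixed set. Finally I would note the interpretation the paper emphasizes: the condition holds whenever the correlation-driven gain from the enlarged denominator $\xi_{\min}(0) + \mu$ outweighs the penalty added in the numerator, which is exactly the strong-correlation regime where $\xi_{\min}(0)$ is small.
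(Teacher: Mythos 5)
Your proposal is correct and follows essentially the same route as the paper: both rest on the additive shift $\xi_{\min}(\mu) = \xi_{\min}(0) + \mu$ and then reduce the comparison of the two bounds to the stated inequality by elementary algebra (the paper runs the manipulation forward from the hypothesis, you run it as a chain of equivalences via cross-multiplication, which is the same computation). Your added care about dividing by $2\mu$ only when $\mu>0$ is a minor refinement the paper leaves implicit.
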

\begin{proof}
See Appendix \ref{app:ProofErrorBoundCorrollary}.
\end{proof}
Using the error bound on the estimated and true probability weights in Theorem \ref{theo:l2normweights}, we derive a bound on the 
error of the estimated distribution function $\hat{F}(\beta)$ and the best discrete distribution $F^*(\beta)$.

\begin{theorem}  \label{theo:boundDistribution}
Under the assumptions and conditions in Theorem \ref{theo:l2normweights}, it holds at any point $\beta \in \mathbb{R}^K$ with probability $1-\delta$ that
	\begin{equation}
	\vert \hat{F}\left(\beta\right) - F^*(\beta) \vert \leq    \; \frac{4 (R-1) \ k\lambda + 4\mu\sqrt{s(R-1)} \normL{\thetastar_S}_\infty}{\xi_{\min}(\mu)}.  \notag \\
	\end{equation}
\end{theorem}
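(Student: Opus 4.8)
The plan is to reduce the distributional error $\vert \hat{F}(\beta) - F^*(\beta)\vert$ to the $\ell_2$-error on the weight vector, $\normL{\thetahat - \thetastar}_2$, and then invoke Theorem \ref{theo:l2normweights} directly. Writing $\Delta\thetahat_r := \thetahat_r - \thetastar_r$, I would first observe that at any fixed evaluation point $\beta \in \mathbb{R}^K$,
\begin{equation}
\hat{F}(\beta) - F^*(\beta) = \sum_{r=1}^R \Delta\thetahat_r \, \bm{1}[\beta_r\leq\beta], \notag
\end{equation}
so that the difference of the two step functions is a signed combination of the $R$ indicator terms, each of which lies in $\{0,1\}$.

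The next step is to pass from this expression to the $\ell_1$-norm of the weight deviations. Since every indicator is bounded by one in absolute value, the triangle inequality gives $\vert\hat{F}(\beta)-F^*(\beta)\vert \leq \sum_{r=1}^R \vert\Delta\thetahat_r\vert$, i.e.\ the full $\ell_1$-norm taken over all $R$ coordinates. The crucial bookkeeping step is to handle the $R$th coordinate, which is not part of the optimization but is recovered as $\thetahat_R = 1 - \sum_{r=1}^{R-1}\thetahat_r$ (and analogously for $\thetastar_R$). This identity yields $\Delta\thetahat_R = -\sum_{r=1}^{R-1}\Delta\thetahat_r$, whence $\vert\Delta\thetahat_R\vert \leq \sum_{r=1}^{R-1}\vert\Delta\thetahat_r\vert$. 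Consequently the sum over all $R$ coordinates is at most twice the sum over the first $R-1$ coordinates, which is the source of the factor $2$ that ultimately appears as the $4$ in the statement.

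From here the argument is routine. I would bound the $\ell_1$-norm over the $R-1$ free coordinates by the $\ell_2$-norm via Cauchy--Schwarz, $\sum_{r=1}^{R-1}\vert\Delta\thetahat_r\vert \leq \sqrt{R-1}\,\normL{\thetahat-\thetastar}_2$, so that $\vert\hat{F}(\beta)-F^*(\beta)\vert \leq 2\sqrt{R-1}\,\normL{\thetahat-\thetastar}_2$. Substituting the bound of Theorem \ref{theo:l2normweights} for $\normL{\thetahat-\thetastar}_2$ and multiplying through by $2\sqrt{R-1}$ gives exactly the claimed inequality: $2\sqrt{R-1}\cdot 2\sqrt{R-1}\,k\lambda = 4(R-1)\,k\lambda$ and $2\sqrt{R-1}\cdot 2\mu\sqrt{s}\,\normL{\thetastar_S}_\infty = 4\mu\sqrt{s(R-1)}\,\normL{\thetastar_S}_\infty$, both over the common denominator $\xi_{\min}(\mu)$. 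Because Theorem \ref{theo:l2normweights} holds on an event of probability $1-\delta$ and every subsequent step is deterministic and independent of $\beta$, the final bound holds with the same probability $1-\delta$, uniformly over the evaluation point.

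As for difficulty, there is no genuinely hard analytic obstacle here once Theorem \ref{theo:l2normweights} is available. The only step requiring care is the treatment of the eliminated $R$th weight: getting the constant right hinges on recognizing that its deviation is exactly minus the sum of the remaining deviations, which both preserves the probabilistic guarantee and accounts for the doubling of the constant relative to the $\ell_2$ weight bound.
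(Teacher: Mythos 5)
Your proposal is correct and follows essentially the same route as the paper's own proof: decompose the difference of the step functions into the signed weight deviations, bound by the $\ell_1$-norm over all $R$ coordinates, use $\Delta\thetahat_R = -\sum_{r=1}^{R-1}\Delta\thetahat_r$ to absorb the eliminated weight into a factor of $2$, pass to the $\ell_2$-norm via $\normL{\cdot}_1 \leq \sqrt{R-1}\,\normL{\cdot}_2$, and substitute the bound of Theorem \ref{theo:l2normweights}. The arithmetic and the handling of the probability statement match the paper exactly.
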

\begin{proof}
See Appendix \ref{app:ProofErrorBoundDistribution}.
\end{proof}
The bound on the difference between the estimated distribution and the best discrete approximation of $F_0(\beta)$ increases in $R$ and decreases in $\xi_{\min}(\mu)$.
Similarly to Theorem \ref{theo:l2normweights}, the difference in the distributions decreases in $N$ since $k$ may decrease when $N$ increases.

Additionally, \citeA{fox2016} show that, under some regularity conditions, it holds that $\vert F_0(\beta) - F^*(\beta) \vert = O(R^{- \bar{s}/K})$ where $\bar{s}\geq 0$  measures the degree of  smoothness of $F_0(\beta)$\footnote{The density function of $\beta$ is assumed to be  $\bar{s}$-times continuously differentiable.} and $K$ refers to the number of random coefficients. This explains the relevance of Theorem \ref{theo:boundDistribution}  since the difference of $F_0(\beta)$ and $ F^*(\beta) $ becomes negligibly small as $R$ increases and the estimation error can then be well captured by $\vert \hat{F}\left(\beta\right) - F^*(\beta) \vert $.

\section{Monte Carlo Simulation }\label{sec:montecarlo}

We conduct two Monte Carlo experiments to examine the selection consistency and the approximation accuracy of our generalized estimator. The Monte Carlo simulation on the selection consistency uses a discrete distribution with a subset of grid points as support points. 

The second experiment generates the random coefficients from a mixture of two normal distributions. This allows us to study the estimators' ability to estimate smooth distributions.
We use a random coefficients logit model as the true data generating process to generate individual-level discrete choice data. Each observational unit $i$ chooses among $J = 4$ mutually exclusive alternatives and an outside option. For every alternative $j$ and observation unit $i$, we draw the two-dimensional covariate vector $x_{i,j} = (x_{i,j,1}, x_{i,j,2})$ from $\mathcal{U}\left(0,5\right)$ and $\mathcal{U}\left(-3,1\right)$, respectively.
To study the effect of the fixed grid and the number of observation units on the estimators' performance, we run every experiment for different sample sizes, and numbers of grid points. We repeat the experiment for every combination of $R$ and $N$ 200 times to compare the performance of our estimator with the FKRB estimator in terms of selection consistency and accuracy for every setup. All calculations are conducted with the statistical software R \cite{RCore}.

\subsection{Discrete Distribution}\label{subsec:MonteCarloDiscrete}

To study the estimators' selection consistency, we generate the random coefficients $\beta$ from a discrete probability mass function. The estimator successfully recovers the true support from the data if it estimates a positive weight at every support point of $F_0(\beta)$, and zero weights at all points outside its support.

For the support points of $F_0(\beta)$, we select a subset of the grid points from the fixed grid we use for the estimation.
The grid  covers the range $[-4.5, 3.5] \times [-4.5, 3.5]$ with $R = \{25, 81, 289\}$ uniformly allocated grid points. We specify the support of our discrete data generating distribution on $[-4.5, -0.5] \times [-4.5, 0.5]$, and $[-0.5, 4.5] \times [-0.5, 3.5]$, whereby the number of support points varies due to the varying number of grid points. That is, we draw the random coefficients $\beta$ from a discrete mass function with $S = \{17, 49, 161\}$ support points, each drawn with uniform probability weight $\theta_s = 1/S$.\\

In this setup, the data generating process exactly matches the underlying probability model of the fixed grid estimator. This way, we abstract from any approximation errors that can arise from the sieve space approximation of the true underlying distribution. Therefore,  the experiment studies the estimators' selection consistency in the most simple framework possible. 

The two areas of the discrete distribution with positive probability mass simulate two heterogeneous groups of preferences in the population. We estimate every distribution for sample sizes $N = \{1000, 10000\}$.

Figure \ref{fig:MCdiscrete} illustrates the setup of the Monte Carlo experiment for the three data generating distributions. The blue shaded area indicates the support of the discrete mass functions, and the filled blue points inside this area the active grid points. The hollow black points outside the blue shared areas are the inactive grid points that are not used for data generation.  
\\
\captionsetup{position=top}
\begin{figure}[h]%
	\centering
	\caption{ Grid of Monte Carlo Study with Discrete Mass Points}
	\subfigure[$R = 25 $, $S = 17$]{\includegraphics[width=0.32\textwidth]{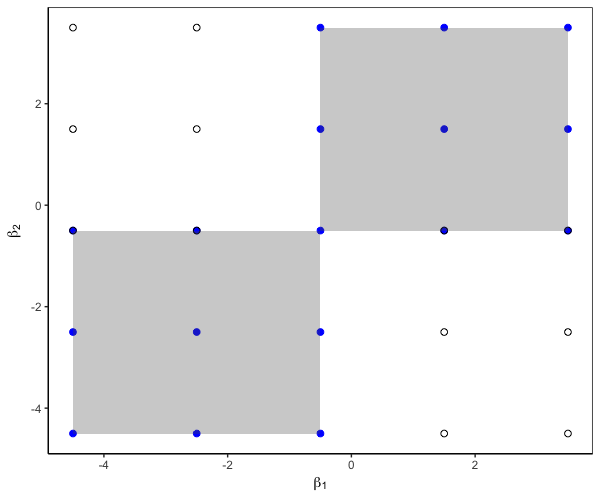}} 
	\subfigure[$R = 81 $, $S = 49 $]{\includegraphics[width=0.32\textwidth]{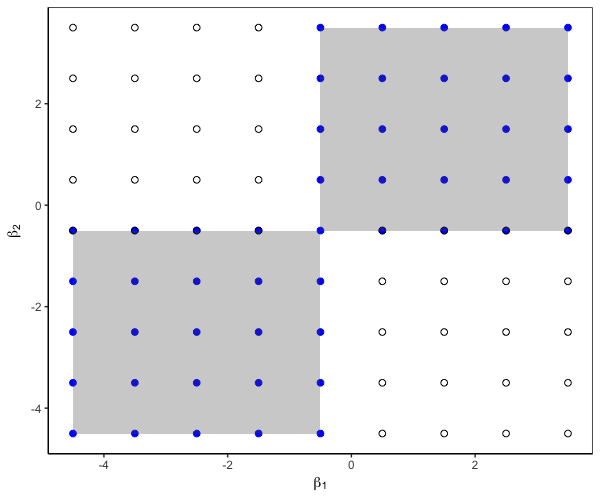}}
	\subfigure[$R = 289 $, $S = 161 $]{\includegraphics[width=0.32\textwidth]{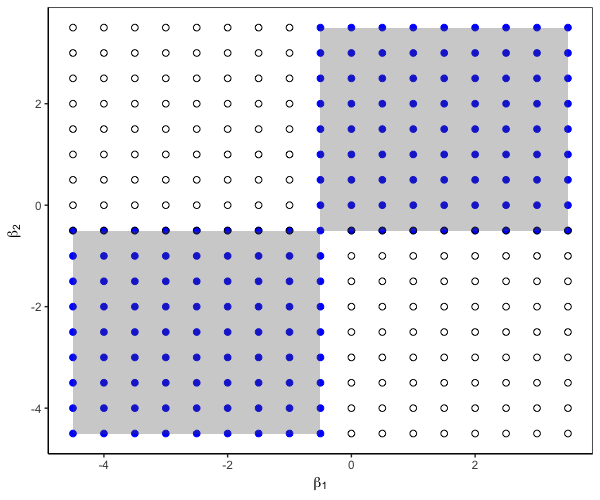}}%
	
	\label{fig:MCdiscrete}
\end{figure}

We choose the optimal tuning parameter $\mu $ for the generalized estimator with 10-fold cross-validation from a sequence of $101$ potential values. For $100$ of these values, we use the sequence suggested by the R package \textit{glmnet} for ridge regression with nonnegative coefficients. We also include $\mu = 0$ in the range of possible values to allow our estimator to simplify to the FKRB estimator if the model fit in the cross-validation is highest for $\mu = 0$. 
The selection of the optimal tuning parameter is based on the mean squared error (MSE) criterion. In addition to the tuning parameter with the lowest MSE, we report the tuning parameter that follows from the one-standard-error rule (OneSe).\footnote{We observe that the curve of the MSE in dependency of $\mu$ tends to be flat and that the $\mu$ chosen by OneSe often corresponds to the largest element of the sequence of tuning parameters suggested by the \textit{glmnet} package. Therefore,  a possible strategy is to choose the largest $\mu$ given by the \textit{glmnet} package to obtain the $\mu$ of OneSe if one wants to avoid cross-validation.}

As  robustness-checks,  we consider  the prediction accuracy of the predicted choice of every observation  and the log-likelihood   as a measure of fit in the cross-validation. We choose the $\mu$ based on the smallest average out-of-sample prediction error  and based on the highest log-likelihood, respectively. The results of the Monte Carlo study for the log-likelihood and predicted choices as selection criteria can be found in Appendix \hyperref[tab:MCdiscreteLong]{A}. They indicate that the MSE and the one-standard-error rule give the best results.

To evaluate the estimators' selection consistency, we calculate the average share of sign consistent estimates. An estimate is sign consistent if it is positive at active grid points, and zero otherwise. A weight is defined as positive if it is greater than $10^{-3}$. To illustrate the sparsity of the estimators' solutions, we report the average number of positive weights and the average share of true positive weights.

Beyond selection consistency, the discrete setup of the Monte Carlo experiment allows us to study the bias of the estimated probability weights.  Denote the estimated weight at grid point $r$ in Monte Carlo run $m$ by $\hat{\theta}_{r,m}$. We calculate the $L_1$ norm 

\begin{equation} \label{eq:l1norm}
L_1 = \frac{1}{M}\sum_{m=1}^M  \frac{1}{R} \sum_{r=1}^R \left\vert\theta_r - \hat{\theta}_{r,m}\right\vert 
\end{equation}

to measure the average absolute bias of $\hat{\theta}$ in comparison to the true weights $\theta$ over all Monte Carlo runs $M$. In addition, we adopt the root mean integrated squared error (RMISE) from \citeA{fox2011} to provide a metric on the approximation accuracy of the estimated distribution. The RMISE averages the squared difference between the true and estimated distribution at a fixed set of grid points across all Monte Carlo runs 

\begin{equation} \label{eq:RMISE}
\text{RMISE}=\sqrt{\frac{1}{M} \sum_{m=1}^M \left[ \frac{1}{E} \sum_{e=1}^E \left(\widehat{F}_m(\beta_e)-F_0(\beta_e)\right)^2\right]} \,\, ,
\end{equation}

where $\widehat{F}_m(\beta_e)$ denotes the estimated distribution function in Monte Carlo run $m$ evaluated at grid point $\beta_e$. For the evaluation, we use $E = 10,000$ points uniformly distributed over the range $[-4.5, 3.5] \times [-4.5, 3.5]$.  \\

Table \ref{tab:MCdiscrete} summarizes the results of the Monte Carlo experiment. The first three columns report the sample size $N$, the number of grid points $R$, and the number of true support points $S$. The upper part of the table presents the measures on the accuracy of the estimated weights, and the lower part the shares of positive, true positive, and sign consistent estimated weights. The final column in the upper part reports the third quantile of the absolute values of the correlation $\rho$ among grid points.\footnote{In addition, we also considered the mean and median to summarize the absolute correlation among grid points. We focus on the third quantile since it best illustrates the strong correlation in this setup.} 

\bigskip

\begin{table}[h]
\centering
\caption{Summary Statistics of 200 Monte Carlo Runs with Discrete Distribution.}  \label{tab:MCdiscrete}
\tabcolsep=0.1cm
\begin{tabular}{lllccccccccc} 
\toprule \toprule\noalign{\smallskip} 
		
& & & \multicolumn{3}{c}{RMISE} & \multicolumn{3}{c}{$L_1$} & \multicolumn{2}{c}{$\mu$} & \multicolumn{1}{c}{$\rho$} \\ 
\cmidrule(l{0.5pt}r{9pt}){4-6}\cmidrule(l{0.5pt}r{9pt}){7-9} \cmidrule(l{0.5pt}r{9pt}){10-11}\cmidrule(l{0.5pt}r{9pt}){12-12} 
$N$ & $R$ & $S$ & FKRB & MSE & OneSe & FKRB & MSE & OneSe & MSE & OneSe & 3rd Qu.  \\ 
		
\hline \\ [-2ex]

1000 & 25 & 17 & 0.067 & 0.04 & 0.034 & 0.035 & 0.017 & 0.014 & 56.05 & 67.95 & 0.808 \\ [0.3ex]    
1000 & 81 & 49 & 0.08 & 0.046 & 0.038 & 0.019 & 0.008 & 0.007 & 58.90 & 70.06 & 0.819 \\ [0.3ex]      
1000 & 289 & 161 & 0.088 & 0.057 & 0.045 & 0.006 & 0.004 & 0.003 & 54.87 & 71.20 & 0.822 \\ [0.8ex] 
10000 & 25 & 17 & 0.042 & 0.026 & 0.023 & 0.02 & 0.012 & 0.011 & 61.15 & 66.78 & 0.809 \\ [0.3ex]    
10000 & 81 & 49 & 0.05 & 0.031 & 0.027 & 0.015 & 0.008 & 0.007 & 59.31 & 69.16 & 0.818 \\ [0.3ex]    
10000 & 289 & 161 & 0.057 & 0.037 & 0.033 & 0.006 & 0.004 & 0.003 & 61.90 & 70.50 & 0.822 \\ [0.3ex]
		
\hline \\[-0.5ex] 
		
& & & \multicolumn{3}{c}{Pos.} & \multicolumn{3}{c}{$\%$ True Pos.} & \multicolumn{3}{c}{$\%$ Sign}   \\ 
\cmidrule(l{0.5pt}r{9pt}){4-6}\cmidrule(l{0.5pt}r{9pt}){7-9} \cmidrule(l{0.5pt}r{9pt}){10-12}  
$N$ & $R$ & $S$ & FKRB & MSE & OneSe & FKRB & MSE & OneSe & FKRB & MSE & OneSe  \\ 
		
\hline \\[-2ex] 
		
1000 & 25 & 17 & 13.3 & 20.82 & 22.36 & 68.44 & 94.53 & 99.71 & 71.88 & 77.28 & 78.14 \\[0.3ex]      
1000 & 81 & 49 & 15.47 & 49.58 & 54.67 & 27.02 & 82.12 & 90.4 & 53.1 & 77.65 & 81.38 \\[0.3ex]       
1000 & 289 & 161 & 16.24 & 103.13 & 123.8 & 8.62 & 55.31 & 66.39 & 48.27 & 70.24 & 75.42 \\[0.8ex]   
10000 & 25 & 17 & 17.17 & 19.39 & 19.73 & 90.32 & 98.12 & 99.53 & 86.16 & 87.86 & 88.46 \\[0.3ex]    
10000 & 81 & 49 & 23.32 & 44.84 & 48.26 & 42.29 & 81.07 & 87.14 & 61.88 & 82.24 & 85.36 \\[0.3ex]    
10000 & 289 & 161 & 24.88 & 97.39 & 105.84 & 13.53 & 55.07 & 59.94 & 50.76 & 71.96 & 74.46 \\[0.3ex] 
		
\bottomrule\bottomrule  \\[-1.8ex] 
\end{tabular}
\captionsetup{justification=justified,singlelinecheck=true, width=0.95\textwidth, font={footnotesize,stretch=0.8}}
\caption*{\textit{Note:} The table reports the average summary statistics over all Monte Carlo replicates for the FKRB estimator (FKRB), and for our generalized estimator with tuning parameter $\mu$ from a 10-fold cross-validation and the $MSE$ criterion (MSE) and the one-standard-error rule (OneSe).}
\end{table}



The results show that our generalized estimator outperforms the FKRB estimator for every combination of $N$ and $R$, in particular when the tuning parameter $\mu$ is chosen based on the one-standard-error rule. 
With respect to the selection consistency, the generalized estimator recovers more true positive and sign consistent probability weights from the data than the FKRB estimator. While the decrease in these shares is moderate for the generalized estimator when the discrete distribution becomes more complex, the correct recovery through the FKRB estimator becomes significantly worse.

This is best illustrated by the small number of positive weights, which changes only slightly alongside the increasing complexity. In the extreme case of $R = 289$, the FKRB estimator estimates positive weights at no more than $16/25$ of the grid points for $N = 1,000/10,000$ (in comparison to $124/106$ for the generalized estimator).

In addition to its improved selection consistency, all measures on the estimated weights indicate that our generalized version provides substantially more accurate estimates of the probability weights than the FKRB estimator. The bias reduction persists for small and large sample sizes. 

\bigskip

\captionsetup{position=top}
\begin{figure}[h]
	\centering
	\caption{Correlation Matrix for $N=10,000$ and $R = 81$}
	\subfigure{\includegraphics[width=0.55\textwidth]{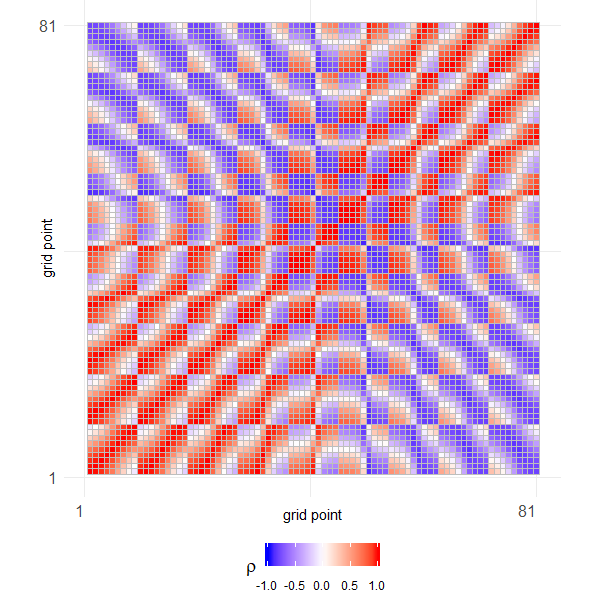}}
	\label{fig:CorQuant}
\end{figure}

The plot of the correlation matrix in Figure \ref{fig:CorQuant} and the third quantile of the values of absolute correlation in Table \ref{tab:MCdiscrete} both illustrate that correlation among many grid points is strong.

\bigskip

\subsection{Continuous Distribution }

The second Monte Carlo experiment considers a mixture of two bivariate normal distributions for $F_0(\beta)$  to analyze how our generalized estimator accommodates more complex continuous distributions. This way, we can assess its ability to recover distributions that cannot be estimated with parametric techniques. 

For the estimation, we use a fixed grid with points spread on $[-4.5, 3.5] \times [-4.5, 3.5]$. The fixed grid covers the support of the true distribution with probability close to one (0.993). We keep the correlation among grid points as low as possible and generate the grid points with a Halton sequence. To study the convergence of the estimated distribution to $F_0(\beta)$ for an increasing number of grid points, we estimate the model with $R = \{25, 50, 100, 250\}$. The number of observation units $N$ varies from 1,000 to 10,000.\\

The variance-covariance matrices of the two normals are $\Sigma_1 = \Sigma_2 = \big[\begin{smallmatrix} 0.8& 0.15\\ 0.15 & 0.8 \end{smallmatrix}\big]$. 
We generate the random coefficient vectors $\beta$ from the following two-component bivariate mixture

\begin{equation}
0.5 \ \mathcal{N}\bigg([-2.2, -2.2], \Sigma_1\bigg) + 0.5 \ \mathcal{N}\bigg([ 1.3, 1.3], \Sigma_2\bigg) \notag
\end{equation}

The left panel in Figure \ref{fig:MCcontinuous} displays the bimodal joint density of the mixture of the two normals, and the right panel the joint distribution function. 

\captionsetup{position=top}
\begin{figure}[H]%
	\centering
	\caption{True Density and Distribution Function of Mixture of two Normals}
	\subfigure[PDF]{\includegraphics[width=0.495\textwidth]{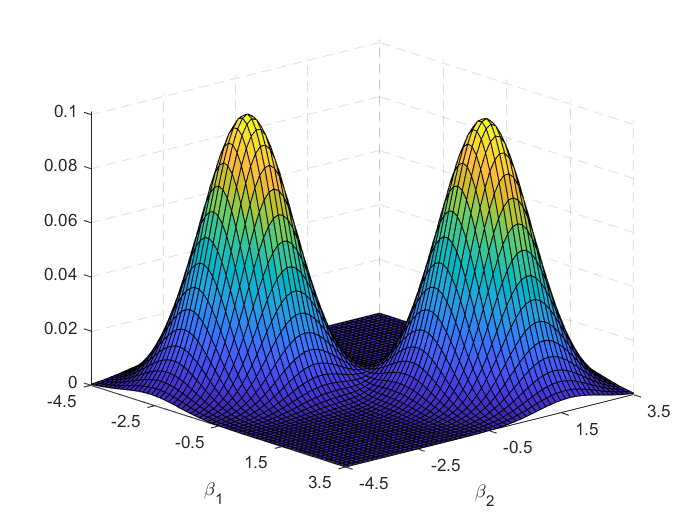}} 
	\subfigure[CDF]{\includegraphics[width=0.495\textwidth]{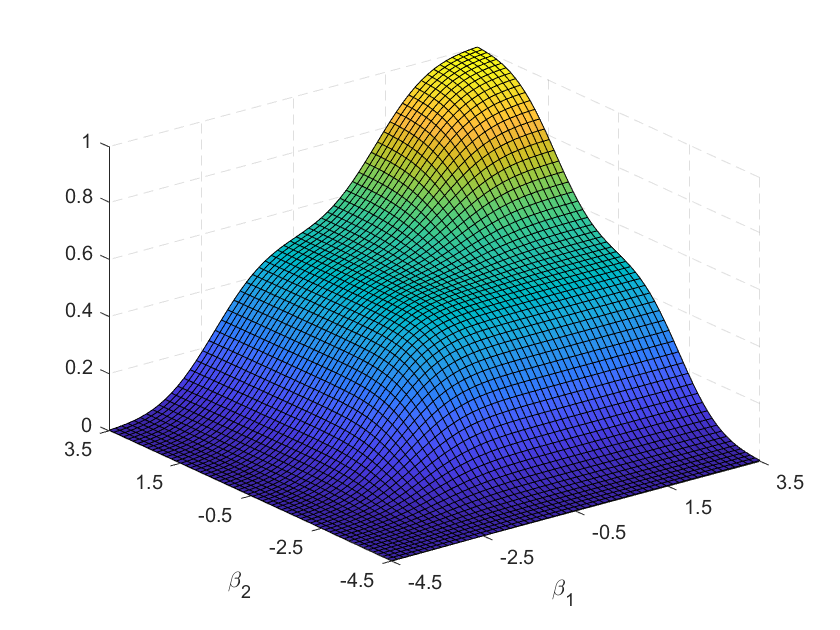}} 
	\label{fig:MCcontinuous}
\end{figure}

%

For the calculation of the RMISE, we use $E = 10,000$ evaluation points uniformly distributed over the range of the fixed grid.
In addition, we report the average number of positive, true positive, and sign consistent estimated weights. For the number of true positive and sign consistent weights, we calculate the true density at every grid point and define a true weight as positive if the density is greater $10^{-3}$.\\

Table \ref{tab:MCcontinuous} summarizes the average results over the $M = 200$ Monte Carlo replicates for the FKRB estimator and our generalized estimator when $\mu$ is chosen with 10-fold cross-validation and the MSE and one-standard error rule, respectively. 
Results for the prediction accuracy of the predicted choices and the log-likelihood as criteria are reported in Appendix \hyperref[tab:MCcontinuouslong]{A}.

\bigskip


\begin{table}[H]
	\caption{Summary Statistics of 200 Monte Carlo Runs with Mixture of Two Bivariate Normals.}  \label{tab:MCcontinuous}
	\centering
	\tabcolsep=0.1cm
	\begin{tabular}{lllccccccccc} 
		\toprule\toprule \noalign{\smallskip} 
		
& & & \multicolumn{3}{c}{RMISE} & \multicolumn{3}{c}{Pos.} & \multicolumn{2}{c}{$\mu$} & \multicolumn{1}{c}{$\rho$} \\ 
\cmidrule(l{0.5pt}r{9pt}){4-6} \cmidrule(l{0.5pt}r{9pt}){7-9}\cmidrule(l{0.5pt}r{9pt}){10-11}\cmidrule(l{0.5pt}r{9pt}){12-12} 
$N$ & $R$ & $S$ & FKRB & MSE & OneSe & FKRB & MSE & OneSe & MSE & OneSe & 3rd Qu.    \\[0.2ex]

\hline \\ [-2ex]

1000 & 25 & 17 & 0.085 & 0.072 & 0.055 & 9.65 & 12.94 & 17.78 & 21.2 & 74.01 & 0.823 \\ [0.3ex]     
1000 & 50 & 33 & 0.09 & 0.068 & 0.058 & 12.57 & 26.82 & 32.4 & 48.09 & 74 & 0.82 \\ [0.3ex]         
1000 & 100 & 67 & 0.095 & 0.07 & 0.061 & 13.65 & 47.09 & 54.85 & 58.6 & 74.37 & 0.822 \\ [0.3ex]    
1000 & 250 & 163 & 0.102 & 0.077 & 0.063 & 14.2 & 79.28 & 103.98 & 50.5 & 74.52 & 0.824 \\ [0.8ex]  
10000 & 25 & 17 & 0.063 & 0.061 & 0.057 & 11.65 & 12.57 & 14.89 & 18.3 & 73.94 & 0.823 \\ [0.3ex]   
10000 & 50 & 33 & 0.058 & 0.051 & 0.047 & 17.52 & 24.94 & 28.3 & 48.71 & 73.94 & 0.82 \\ [0.3ex]    
10000 & 100 & 67 & 0.06 & 0.048 & 0.043 & 19.87 & 39.59 & 46.7 & 51.63 & 74.04 & 0.823 \\ [0.3ex]   
10000 & 250 & 163 & 0.063 & 0.045 & 0.04 & 21.21 & 76.43 & 87.92 & 59.98 & 74.68 & 0.824 \\ [0.3ex]

\hline \\[-0.5ex] 

& & & \multicolumn{3}{c}{$\%$ True Pos.} & \multicolumn{3}{c}{$\%$ Sign} & & &  \\ 
\cmidrule(l{0.5pt}r{9pt}){4-6}\cmidrule(l{0.5pt}r{9pt}){7-9}   
$N$ & $R$ & $S$ & FKRB & MSE & OneSe & FKRB & MSE & OneSe & & &   \\[0.2ex]

\hline \\ [-2ex]

1000 & 25 & 17 & 49.06 & 66.35 & 88.68 & 60.12 & 70.48 & 81.48 & & & \\ [0.3ex]   
1000 & 50 & 33 & 33.39 & 70.33 & 84.48 & 52.93 & 73.19 & 80.72 & & &\\ [0.3ex]   
1000 & 100 & 67 & 18.01 & 63.46 & 74.1 & 43.48 & 70.95 & 77.44 & & &\\ [0.3ex]   
1000 & 250 & 163 & 7.37 & 44.38 & 58.17 & 38.73 & 60.96 & 69.06& & & \\ [0.8ex]  
10000 & 25 & 17 & 57.94 & 63.35 & 77.24 & 64.2 & 67.86 & 77.48 & & &\\ [0.3ex]   
10000 & 50 & 33 & 47.24 & 68.59 & 78.05 & 61.32 & 74.66 & 80.42& & & \\ [0.3ex]  
10000 & 100 & 67 & 26.57 & 54.72 & 64.84 & 48.74 & 66.73 & 73.19& & & \\ [0.3ex] 
10000 & 250 & 163 & 11.39 & 43.78 & 50.56 & 41.17 & 61.32 & 65.56 & & &\\ [0.3ex]

\bottomrule\bottomrule 
		\\[-1.8ex] 
		
	\end{tabular}
	\captionsetup{justification=justified,singlelinecheck=true, width=0.93\textwidth, font={footnotesize,stretch=0.8}}
	\caption*{\textit{Note:} The table reports the average summary statistics over all Monte Carlo replicates for the FKRB estimator (FKRB), and for our generalized estimator with tuning parameter $\mu$ from a 10-fold cross-validation and the $MSE$ criterion (MSE) and the one-standard-error rule (OneSe).}
\end{table}

The RMISE  shows that our generalized estimator provides more accurate estimates of the true underlying random coefficients' distribution than the FKRB estimator for every combination of $N$ and $R$. For $N = 10,000$ the generalized version becomes more accurate with increasing number of grid points and approximates $F_0(\beta)$ quite well for $R = 250$. However, the FKRB estimator does not result in a lower RMISE for $N=10,000$ when $R$ increases.

The improved performance of our estimator for every combination of $N$ and $R$ can be explained with the larger number of true positive and sign consistent estimated probability weights. Independently of the number of (relevant) grid points, the FKRB estimator estimates only a small number of positive weights and, hence, recovers only few relevant grid points. The share of true positive and sign consistent estimated weights is substantially higher for our estimator.  
Figure \ref{fig:MCEstCDFs} plots an example of the joint distribution functions estimated with the FKRB estimator (Panel (a)) and our generalized estimator (Panel (b)). Figure \ref{fig:MCMarginalCDFs} shows the corresponding estimated and true marginal distributions of $\beta_1$ and $\beta_2$. The distribution functions are estimated for $N = 10,000$ and $R = 250$. 

\captionsetup{position=top}
\begin{figure}[ht]%
	\centering
	\caption{Estimated Joint Distribution Functions for $N = 10,000$ and $R = 250$}
	\subfigure[FKRB]{\includegraphics[width=0.495\textwidth]{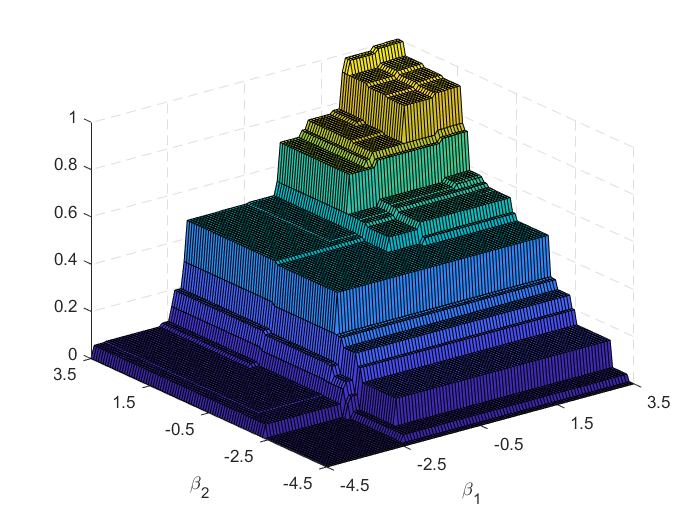}} 
	\subfigure[Generalized with OneSe]{\includegraphics[width=0.495\textwidth]{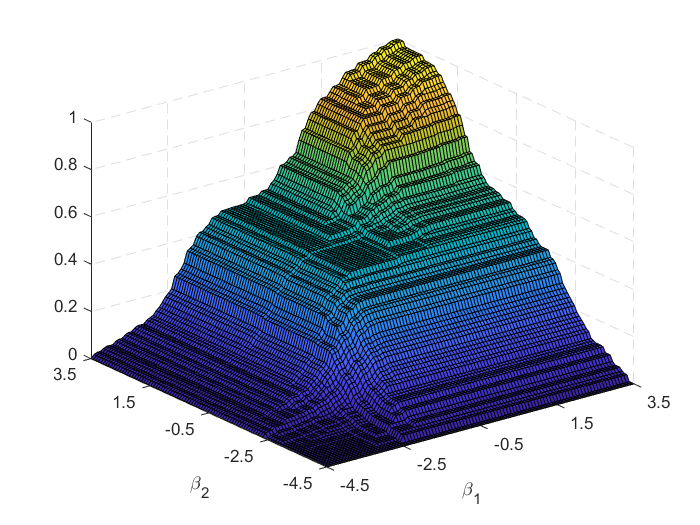}} 
	\label{fig:MCEstCDFs}
\end{figure}

\captionsetup{position=top}
\begin{figure}[ht]%
	\centering
	\caption{True and Estimated Marginal Distribution Functions for $N = 10,000$ and $R = 250$}
	\subfigure{\includegraphics[width=0.495\textwidth]{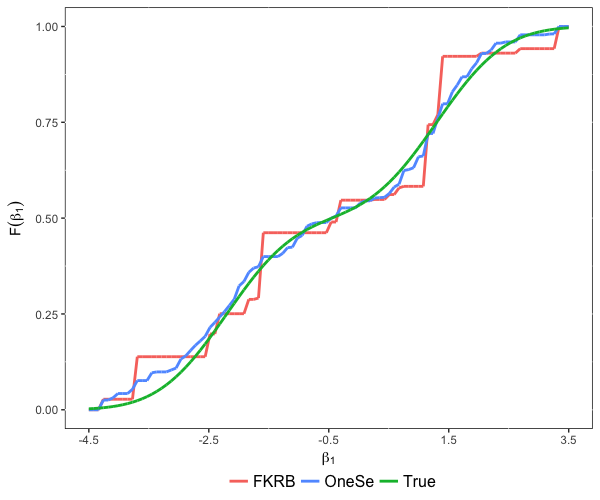}} 
	\subfigure{\includegraphics[width=0.495\textwidth]{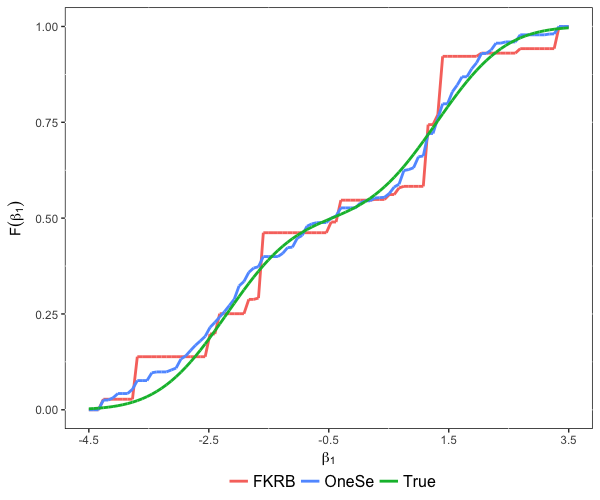}} 
	\label{fig:MCMarginalCDFs}
\end{figure}

The plots illustrate the impact of the FKRB estimator's sparse nature on the estimated marginal and joint distribution functions. Visual inspection shows that it approximates $F_0(\beta)$ through a step function with only few steps due to the small number of positive weights. In contrast, our generalized estimator provides a smooth estimate that is close to the true underlying distribution function.

\section{Application} \label{sec:application}

To study the performance of our generalized estimator with real data, we apply it to the \textit{ModeCanda} data set from the \textit{R package mlogit}. Originally, the Canadian National Rail Carrier VIA Rail assembled the data in 1989 to analyze the demand for future intercity travel in the Toronto-Montr\'eal corridor. 
The data contains information on travelers who can choose among the four intercity travel mode options car, bus, train, and air. Due to the small number of bus users (18), we follow \citeA{bhat1997} and drop  bus as an alternative. Furthermore, we only consider travelers in our analysis that can choose among all three options. Thus, the analyzed data consists of $3,593$ business travelers who can choose among airplane, train, and car. In addition to the observed choices, the data includes information on traveler's income, the trip distance, the frequency of the service, total travel cost, an indicator that is one if either the city of arrival or departure is a big city and zero otherwise, and the in- and out-of-vehicle travel time. We construct the travel time variable by summing up in-vehicle travel time and out-of-vehicle time. This is done for two reasons: first, the data on out-of-vehicle time is always zero for car users and would therefore only capture the preferences of airplane and train users. Second, we think it is plausible that individuals care more about total travel time than the travel time inside and outside of a vehicle separately.

A detailed description of the data can be found in \citeA{KPMG1990}.
Among others, the data set has been studied by \citeA{bhat1995, bhat1997covariance, bhat1997, bhat1998}, \citeA{koppelman2000}, \citeA{wen2001}. The only paper that analyzes the data with a random coefficients logit  model is the study by \citeA{hess2005}. However, they only use the explanatory variables as input for a Monte Carlo study and simulate travelers' mode choices. 

We estimate a mixed logit model with a random coefficient on the travel time and fixed coefficient on all other variables to study the preferred travel mode of business travelers. We include all the above variables into the utility specification along with mode specific constants, where we specify car as the reference alternative. To apply the fixed grid approach to a model with fixed and random coefficients, we follow the recommendation of \citeA{fox2016} and \citeA{houde2019} who suggest a two-step estimator to estimate the model with fixed and random coefficients.\footnote{We also provide an algorithm to update both the fixed and random coefficients in Appendix \ref{app:UpdatingfixedCoefficients}. The algorithm is a modification of the flexible grid estimator in \citeA{train2008}. Unfortunately, the algorithm seems to be very slow and we do not include its results in our comparison here.} In the first step, all coefficients are estimated using a parametric mixed logit.
We assume that the random coefficient is normally distributed. In the second step, the fixed variables and their estimated coefficients from the first stage are treated as data and only the random coefficient of travel time is estimated with the FKRB and elastic net estimator. 
\citeA{houde2019} justify the procedure with the argument that a mixed logit can recover the means of a distribution fairly well despite the 
incorrect assumptions on the random coefficients' distribution. Thus, the fixed coefficients can be estimated consistently with the parametric approach. 
They illustrate this property in a Monte Carlo study.
%

We center the grid of the random coefficient around the mean estimate of the travel coefficient from the first step\footnote{The estimated coefficients of the first stage are provided in Appendix \hyperref[tables:FirstStage]{A}.} and add three standard deviations to each side. We estimate the second step with different numbers of grid points. The preferred specification uses $R=100$ uniformly spread points on the range $[-0.061, 0.027]$. We choose the tuning parameter with 10-fold Cross-Validation and the one standard error rule as criterion. Figure \ref{fig:ApplicationCanada} summarizes the mass and the distribution functions estimated with the FKRB and the ridge estimator.

The elastic net estimator results in a smooth mass function whereas the FKRB exhibits the LASSO behavior. The FKRB estimator only selects five out of $100$ grid points whereas the elastic net estimator selects $75$ grid points.\footnote{We again define a weight as positive if it is greater than $10^{-3}$.} Furthermore, it can easily be seen that the estimated mass function obtained by the elastic net estimator does not seem to be normally distributed but rather looks like a mixture of two normal distributions. That is, specifying a normal or any other parametric distribution function does not seem appropriate in this example. A quite unexpected result is that there are positive weights at positive grid points implying that some people appreciate longer trips. Even though, one might argue that this might be the case if such travelers accept additional travel time for, say, additional comfort when traveling, this might also be a sign of a misspecified model. For the FKRB estimator these weights sum up $9.5 \%$ and for the elastic net to $10.1 \%$ which is lower than $12.6 \%$ for the mixed logit with normal distribution.
The weighted mean of the coefficient of travel time for the  FKRB estimator is $-0.01593$ and  $-0.01631$ for the elastic net estimator. This is roughly the same as $-0.01682$, the mean coefficient obtained from the mixed logit model with normally distributed travel time coefficient.

In addition to the estimated distributions, we report the mean (and median) over  individuals' own- and cross-travel time elasticities for the FKRB estimator, the elastic net estimator and the semiparametric mixed logit with normal distribution in Appendix \hyperref[tab:elasticities]{A}. We also calculate the ratio between elasticities estimated with the FKRB estimator and the semiparametric estimator in comparison to the elasticities estimated with the elastic net estimator. The ratios show that the estimated elasticities are up to $1.8$ times larger for the FKRB estimator and up to $4.5$ times larger for the semiparametric estimator.   

\captionsetup{position=top}
\begin{figure}[H]%
	\centering
	\caption{ Estimated Distributions of Travel Time in Mode Canada Data with $R=100$}
	\subfigure[Mass Function for FKRB]{\includegraphics[width=0.7\textwidth]{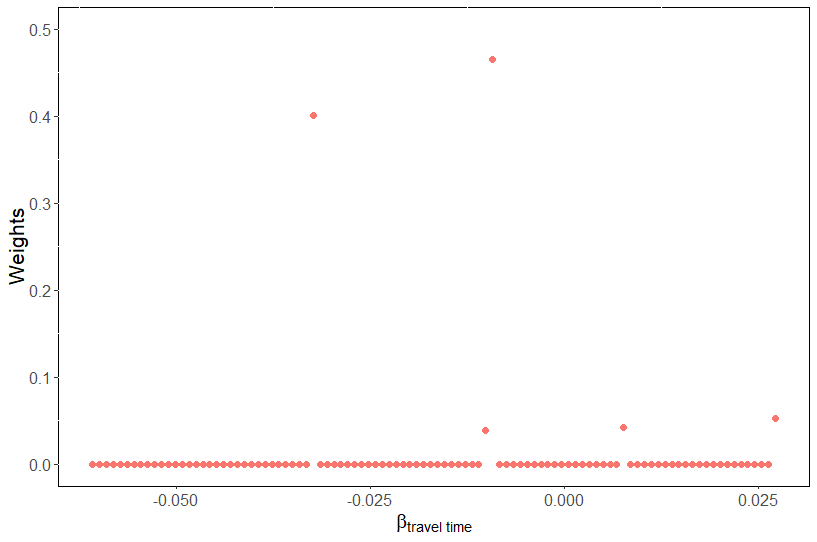}}  \quad	\subfigure[Mass Function for Elastic Net]{\includegraphics[,width=0.7\textwidth]{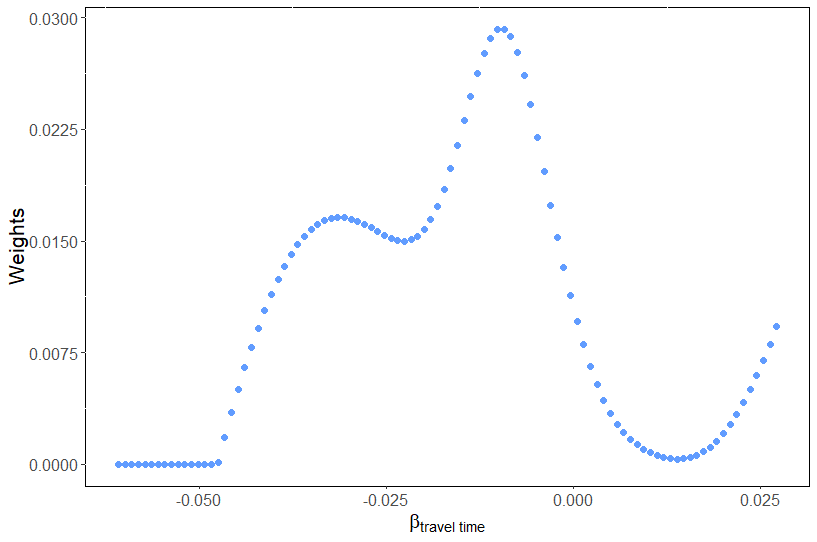}} 
	\subfigure[CDFs for FKRB (red) and Elastic Net (blue)]{\includegraphics[width=0.7\textwidth]{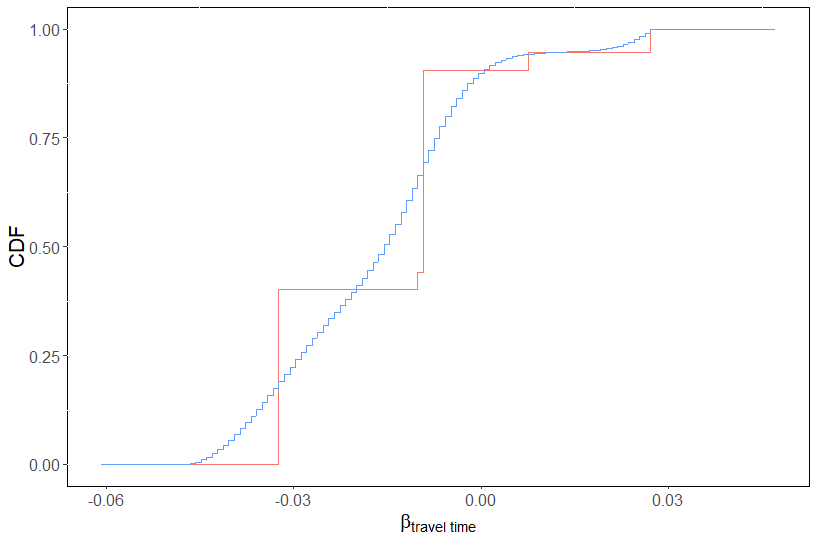}}%
	
	\label{fig:ApplicationCanada}
\end{figure}

\section{Conclusion}\label{sec:conclusion}

We extend the simple and computationally attractive nonparametric estimator of \citeA{fox2011}.  
We illustrate that their estimator is a special case of NNL, explaining its sparse solutions. The connection to NNL reveals that the estimator tends to randomly select among highly correlated grid points. This behavior gives reason to doubt the precise estimation of the true distribution through the estimator. 

To mitigate its undesirable sparsity and random selection behavior, we add a quadratic constraint on the probability weights to the optimization problem of the FKRB estimator. This simple and straightforward extension transforms the estimator to a special case of  nonnegative elastic net. The combination of the linear and quadratic constraint on the probability weights enables a more reliable selection of the relevant grid points. As a consequence, our generalized estimator provides more accurate estimates of the true underlying random coefficients' distribution without increasing computational speed and simplicity substantially.
We derive conditions for selection consistency and an error bound on the estimated distribution function to verify the improved properties of our estimator.

Two Monte Carlo studies illustrate the attractive theoretical properties of our estimator. They show that our generalized version estimates considerably more positive probability weights and recovers more grid points correctly. In addition to the improved selection consistency, the estimator provides more accurate estimates of the true underlying distributions.

Applying the FKRB and the elastic net estimator to a data set of travel choices made in the Toronto-Montr\'eal corridor confirms the sparsity of the FKRB estimator. In contrast, the elastic net estimator selects substantially more grid points, resulting in a smooth distribution function. This illustrates the fact that the elastic net estimator is able to approximate continuous distribution functions.


\newpage


\renewcommand\refname{References}

\bibliography{Literatur}

\begin{appendix}

\onehalfspacing	


\newgeometry{left=1.3cm,right=1.3cm,top=2cm,bottom=2cm}
\afterpage{

\begin{landscape}
	
	
\section*{Appendix}	
	
\section[Appendix A: \\ Supplementary Tables]{Supplementary Tables}	\label{app:tables}

\setcounter{equation}{0}
\setcounter{table}{0}
\numberwithin{equation}{section}
\numberwithin{table}{section}
\numberwithin{subsection}{section}
\renewcommand{\thetable}{A.\arabic{table}}
\renewcommand{\theequation}{A.\arabic{equation}}
\renewcommand{\thesubsection}{A.\arabic{subsection}}
\captionsetup[figure]{list=no}
\captionsetup[table]{list=no}
\captionsetup[subsection]{list=no}

\onehalfspacing

\begin{table}[H]
\centering
\caption{Detailed Summary Statistics of 200 Monte Carlo Runs with Discrete Distribution.}  \label{tab:MCdiscreteLong}
\small
\tabcolsep=0.08cm
\begin{tabular}{lllccccccccccccccc}
\toprule\toprule \noalign{\smallskip} 
& & & \multicolumn{5}{c}{RMISE} & \multicolumn{5}{c}{$L_1$} & \multicolumn{4}{c}{$\mu$} & \multicolumn{1}{c}{$\rho$} \\ 
\cmidrule(l{0.5pt}r{9pt}){4-8}\cmidrule(l{0.5pt}r{9pt}){9-13} \cmidrule(l{0.5pt}r{9pt}){14-17}\cmidrule(l{0.5pt}r{9pt}){18-18}  
$N$ & $R$ & $S$ & FKRB & MSE & OneSe & LL & PredOut & FKRB & MSE & OneSe & LL & PredOut & MSE & OneSe & LL & PredOut & 3rd Qu. \\[0.2ex] 
			
\hline \\ [-2ex] 
			
1000 & 25 & 17 & 0.067 & 0.04 & 0.034 & 0.055 & 0.045 & 0.035 & 0.017 & 0.014 & 0.027 & 0.021 & 56.05 & 67.95 & 13.14 & 35.03 & 0.808 \\[0.3ex]      
1000 & 81 & 49 & 0.08 & 0.046 & 0.038 & 0.064 & 0.055 & 0.019 & 0.008 & 0.007 & 0.014 & 0.011 & 58.90 & 70.06 & 20.66 & 36.96 & 0.819 \\[0.3ex]       
1000 & 289 & 161 & 0.088 & 0.057 & 0.045 & 0.068 & 0.06 & 0.006 & 0.004 & 0.003 & 0.005 & 0.004 & 54.87 & 71.20 & 30.71 & 35.75 & 0.822 \\[0.8ex]   
10000 & 25 & 17 & 0.042 & 0.026 & 0.023 & 0.037 & 0.032 & 0.02 & 0.012 & 0.011 & 0.018 & 0.015 & 61.15 & 66.78 & 13.93 & 31.02 & 0.809 \\[0.3ex]     
10000 & 81 & 49 & 0.05 & 0.031 & 0.027 & 0.044 & 0.037 & 0.015 & 0.008 & 0.007 & 0.013 & 0.011 & 59.31 & 69.16 & 13.85 & 30.90 & 0.818 \\[0.3ex]     
10000 & 289 & 161 & 0.057 & 0.037 & 0.033 & 0.049 & 0.043 & 0.006 & 0.004 & 0.003 & 0.005 & 0.005 & 61.90 & 70.50 & 19.02 & 30.14 & 0.822 \\[0.3ex] 
			
\hline \\[-0.5ex] 
			
& & & \multicolumn{5}{c}{Pos.} & \multicolumn{5}{c}{$\%$ True Pos.} & \multicolumn{5}{c}{$\%$ Sign} \\ 
\cmidrule(l{0.5pt}r{9pt}){4-8}\cmidrule(l{0.5pt}r{9pt}){9-13} \cmidrule(l{0.5pt}r{9pt}){14-18}  
$N$ & $R$ & $S$ & FKRB & MSE & OneSe & LL & PredOut & FKRB & MSE & OneSe & LL & PredOut & FKRB & MSE & OneSe & LL & PredOut   \\[0.2ex] 
			
\hline \\[-2ex] 
			
1000 & 25 & 17 & 13.3 & 20.82 & 22.36 & 16.23 & 19.35 & 68.44 & 94.53 & 99.71 & 80.91 & 91.15 & 71.88 & 77.28 & 78.14 & 77.14 & 78.56 \\[0.5ex]     
1000 & 81 & 49 & 15.47 & 49.58 & 54.67 & 31.07 & 40.88 & 27.02 & 82.12 & 90.4 & 53.58 & 69.17 & 53.1 & 77.65 & 81.38 & 65.97 & 72.72 \\[0.5ex]      
1000 & 289 & 161 & 16.24 & 103.13 & 123.8 & 70.4 & 84.15 & 8.62 & 55.31 & 66.39 & 38.02 & 45.3 & 48.27 & 70.24 & 75.42 & 62.3 & 65.64 \\[0.5ex]     
10000 & 25 & 17 & 17.17 & 19.39 & 19.73 & 17.81 & 18.64 & 90.32 & 98.12 & 99.53 & 92.94 & 96.35 & 86.16 & 87.86 & 88.46 & 87.16 & 88.48 \\[1ex]   
10000 & 81 & 49 & 23.32 & 44.84 & 48.26 & 29.88 & 37.23 & 42.29 & 81.07 & 87.14 & 54.5 & 67.84 & 61.88 & 82.24 & 85.36 & 68.56 & 75.61 \\[0.5ex]    
10000 & 289 & 161 & 24.88 & 97.39 & 105.84 & 53.06 & 69.47 & 13.53 & 55.07 & 59.94 & 29.93 & 39.3 & 50.76 & 71.96 & 74.46 & 59.28 & 64.04 \\[0.5ex] 
			
\bottomrule\bottomrule\\[-1.8ex] 
			
\end{tabular}
\captionsetup{justification=justified,singlelinecheck=true, width=1.20\textwidth, font={footnotesize,stretch=0.8}}
\caption*{\textit{Note:} The table reports the average summary statistics over all Monte Carlo replicates for the FKRB estimator (FKRB), and for our generalized estimator with tuning parameter $\mu$ from a 10-fold cross-validation and the $MSE$ criterion (MSE), the one-standard-error rule (OneSe), the log-likelihood criterion (LL) and the number of correctly predicted binary outcomes (PredOut). The predicted binary outcome is set to one for the alternative with the highest estimated choice probability.}
\end{table}

\end{landscape}

\newpage

\newgeometry{left=1.5cm,right=1.5cm,top=2cm,bottom=2cm}

\begin{landscape}

\begin{table}[H]
\caption{Detailed Summary Statistics of 200 Monte Carlo Runs with Mixture of Two Bivariate Normals.}  \label{tab:MCcontinuouslong}
\centering
\small
\tabcolsep=0.08cm
\begin{tabular}{lllccccccccccccccc} 
		\toprule\toprule \noalign{\smallskip} 
		
		& & & \multicolumn{5}{c}{RMISE} & \multicolumn{5}{c}{Pos.} & \multicolumn{4}{c}{$\mu$} & \multicolumn{1}{c}{$\rho$}  \\ 
		\cmidrule(l{0.5pt}r{9pt}){4-8}\cmidrule(l{0.5pt}r{9pt}){9-13}\cmidrule(l{0.5pt}r{9pt}){14-17} \cmidrule(l{0.5pt}r{9pt}){18-18} 
		$N$ & $R$ & $S$ & FKRB & MSE & OneSe & LL & PredOut & FKRB & MSE & OneSe & LL & PredOut & MSE & OneSe & LL & PredOut & 3rd Qu.  \\[0.2ex]

		\hline \\ [-2ex]

		1000 & 25 & 17 & 0.085 & 0.072 & 0.055 & 0.081 & 0.066 & 9.65 & 12.94 & 17.78 & 10.38 & 14.61 & 21.2 & 74.01 & 2.69 & 34.44 & 0.823 \\ [0.2ex]      
		1000 & 50 & 33 & 0.09 & 0.068 & 0.058 & 0.081 & 0.069 & 12.57 & 26.82 & 32.4 & 17.05 & 25.52 & 48.09 & 74 & 6.65 & 34.74 & 0.82 \\ [0.2ex]          
		1000 & 100 & 67 & 0.095 & 0.07 & 0.061 & 0.084 & 0.075 & 13.65 & 47.09 & 54.85 & 24.59 & 36.9 & 58.6 & 74.37 & 11.37 & 29.57 & 0.822 \\ [0.2ex]     
		1000 & 250 & 163 & 0.102 & 0.077 & 0.063 & 0.09 & 0.078 & 14.2 & 79.28 & 103.98 & 38.05 & 64.86 & 50.5 & 74.52 & 11.94 & 31.47 & 0.824 \\ [0.8ex]   
		10000 & 25 & 17 & 0.063 & 0.061 & 0.057 & 0.063 & 0.06 & 11.65 & 12.57 & 14.89 & 11.78 & 13.41 & 18.3 & 73.94 & 1.2 & 29.19 & 0.823 \\ [0.2ex]      
		10000 & 50 & 33 & 0.058 & 0.051 & 0.047 & 0.054 & 0.051 & 17.52 & 24.94 & 28.3 & 20.03 & 24.01 & 48.71 & 73.94 & 7.74 & 32.09 & 0.82 \\ [0.2ex]     
		10000 & 100 & 67 & 0.06 & 0.048 & 0.043 & 0.054 & 0.048 & 19.87 & 39.59 & 46.7 & 27.61 & 35.56 & 51.63 & 74.04 & 10.91 & 32.55 & 0.823 \\ [0.2ex]   
		10000 & 250 & 163 & 0.063 & 0.045 & 0.04 & 0.055 & 0.048 & 21.21 & 76.43 & 87.92 & 43.45 & 61.62 & 59.98 & 74.68 & 16.05 & 36.16 & 0.824 \\ [0.2ex]

		\hline \\[-0.5ex] 
		
		& & &  \multicolumn{5}{c}{$\%$ True Pos.} & \multicolumn{5}{c}{$\%$ Sign} & & & & &  \\ 
		\cmidrule(l{0.5pt}r{9pt}){4-8}\cmidrule(l{0.5pt}r{9pt}){9-13}   
		$N$ & $R$ & $S$ & FKRB & MSE & OneSe & LL & PredOut & FKRB & MSE & OneSe & LL & PredOut & & & & &   \\[0.2ex]

		\hline \\ [-2ex]

		1000 & 25 & 17 & 49.06 & 66.35 & 88.68 & 53.09 & 74.59 & 60.12 & 70.48 & 81.48 & 62.66 & 75 & & & & &  \\ [0.3ex]     
		1000 & 50 & 33 & 33.39 & 70.33 & 84.48 & 45.65 & 67.71 & 52.93 & 73.19 & 80.72 & 60.16 & 72.34 & & & & &  \\ [0.3ex]  
		1000 & 100 & 67 & 18.01 & 63.46 & 74.1 & 33.56 & 50.03 & 43.48 & 70.95 & 77.44 & 53.38 & 63.15 & & & & & \\ [0.3ex]  
		1000 & 250 & 163 & 7.37 & 44.38 & 58.17 & 21.11 & 36.25 & 38.73 & 60.96 & 69.06 & 47.1 & 56.13 & & & & & \\ [0.8ex]  
		10000 & 25 & 17 & 57.94 & 63.35 & 77.24 & 58.68 & 68.21 & 64.2 & 67.86 & 77.48 & 64.68 & 71.12 & & & & & \\ [0.3ex]  
		10000 & 50 & 33 & 47.24 & 68.59 & 78.05 & 54.62 & 66.05 & 61.32 & 74.66 & 80.42 & 66.04 & 73.16 & & & & & \\ [0.3ex] 
		10000 & 100 & 67 & 26.57 & 54.72 & 64.84 & 37.76 & 49.11 & 48.74 & 66.73 & 73.19 & 55.98 & 63.24 & & & & & \\ [0.3ex]
		10000 & 250 & 163 & 11.39 & 43.78 & 50.56 & 24.5 & 35.12 & 41.17 & 61.32 & 65.56 & 49.37 & 55.94 & & & & & \\ [0.3ex]

		\bottomrule\bottomrule \\[-1.8ex] 
		
	\end{tabular}
	\captionsetup{justification=justified,singlelinecheck=true, width=1.19\textwidth, font={footnotesize,stretch=0.8}}
	\caption*{\textit{Note:} The table reports the average summary statistics over all Monte Carlo replicates for the FKRB estimator (FKRB), and for our generalized estimator with tuning parameter $\mu$ from a 10-fold cross-validation and the $MSE$ criterion (MSE), the one-standard-error rule (OneSe), the log-likelihood criterion (LL) and the number of correctly predicted binary outcomes (PredOut). The predicted binary outcome is set to one for the alternative with the highest estimated choice probability.}
\end{table}

\end{landscape}

}

\restoregeometry
		
	
\begin{table}[H] \centering 
		\caption{First Stage Output of Mode Canada Data: Semiparametric Estimation with Normally Distributed Random Coefficient for the Total Travel Time.}
		\label{tables:FirstStage} 
		\begin{tabular}{@{\extracolsep{5pt}}lD{.}{.}{-3} } 
			\\[-1.8ex]\hline 
			\hline \\[-1.8ex] 
			& \multicolumn{1}{c}{\textit{Dependent variable:}} \\ 
			\cline{2-2} 
			\\[-1.8ex] & \multicolumn{1}{c}{Mode Choice} \\ 
			\hline \\[-1.8ex] 
			Intercept Train & -1.641^{***} \\ 
			& (0.304) \\ 
			Intercept Air & -7.153^{***} \\ 
			& (0.913) \\ 
			Frequency & 0.077^{***} \\ 
			& (0.008) \\ 
			Cost & -0.009 \\ 
			& (0.009) \\ 
			Income Train & -0.018^{***} \\ 
			& (0.003) \\ 
			Income Air & 0.040^{***} \\ 
			& (0.005) \\ 
			Distance Train & 0.002^{*} \\ 
			& (0.001) \\ 
			Distance Air & 0.003^{***} \\ 
			& (0.001) \\ 
			Urban Train & 1.722^{***} \\ 
			& (0.163) \\ 
			Urban Air & 1.261^{***} \\ 
			& (0.194) \\ 
			Travel Time & -0.017^{***} \\ 
			& (0.003) \\ 
			sd.Travel Time & 0.015^{***} \\ 
			& (0.002) \\ 
			\hline \\[-1.8ex] 
			Observations & \multicolumn{1}{c}{3,593} \\ 
			Mc Fadden R$^{2}$ & \multicolumn{1}{c}{0.358} \\ 
			Log Likelihood & \multicolumn{1}{c}{-2,340.700} \\ 
			LR Test & \multicolumn{1}{c}{2,615.034$^{***}$ (df = 12)  (p = 0.000)} \\ 
			\hline 
			\hline \\[-1.8ex] 
		\end{tabular} 
		\captionsetup{justification=justified,singlelinecheck=true, width=0.62\textwidth, font={footnotesize,stretch=0.8}}
		\caption*{\textit{Note:} The table reports the mean estimates and standard errors (in brackets) obtained by the \textit{mlogit package} for the semiparametric mixed logit model with normally distributed travel time.\\$^{*}$p$<$0.1; $^{**}$p$<$0.05; $^{***}$p$<$0.01.}
	\end{table}


\begin{table}[H] 
\centering
\caption{Estimated Own- and Cross-Travel Time Elasticities in Mode Canada Data.}   \label{tab:elasticities}
\small
\tabcolsep=0.2cm
\begin{tabular}{lccc} 
\toprule \toprule\noalign{\smallskip} 
			
\multicolumn{4}{l}{\textbf{Elasticities estimated with FKRB:}} \\ 
 & Car & Air & Train \\ [0.2ex] 
 \hline \\ [-2ex]
		
 Car &    -0.8992 \scriptsize{(-0.8444)} & 1.3982 \scriptsize{(0.6692)} &0.1164 \scriptsize{(0.129)} \\ [0.5ex] 
Air &    0.5895 \scriptsize{(0.5943)} & -1.2267 \scriptsize{(-0.5079)} & 0.2049 \scriptsize{(0.1589)} \\ [0.5ex] 
Train &  -0.1622 \scriptsize{(0.0346)} & 0.184 \scriptsize{(0.1352)} & -0.6712 \scriptsize{(-0.8861)} \\ [0.5ex] 
\hline \\

\multicolumn{4}{l}{\textbf{Elasticities estimated with ENet:}} \\ 
 & Car & Air & Train \\ [0.2ex] 
  \hline \\ [-2ex]
 
 Car &    -0.8382 \scriptsize{(-0.7731)} & 1.4082 \scriptsize{(0.682)} & 0.1473 \scriptsize{(0.1009)} \\ [0.5ex] 
Air &   0.5312 \scriptsize{(0.5034)} & -1.2581 \scriptsize{(-0.5704)} & 0.1765 \scriptsize{(0.1339)} \\ [0.5ex] 
Train & -0.0887 \scriptsize{(0.036)} & 0.19 \scriptsize{(0.1118)} & -0.6285 \scriptsize{(-0.7691)}\\  [0.5ex] 
 \hline \\

 \multicolumn{4}{l}{\textbf{Elasticities estimated semiparametrically:}} \\ 
  & Car & Air & Train \\ [0.2ex] 
 \hline \\ [-2ex]
 
 Car &    -1.3362 \scriptsize{ (-1.2584)} & 1.366 \scriptsize{ (0.9975)} & 0.6699  \scriptsize{ (0.6846)} \\ [0.5ex] 
Air &   0.6194 \scriptsize{ (0.6093)} & -1.3744 \scriptsize{ (-1.4473)} & 0.3076 \scriptsize{ (0.2281)} \\ [0.5ex] 
Train &  0.2772 \scriptsize{ (0.1824)} & 0.3111 \scriptsize{ (0.1563)} & -1.6449 \scriptsize{ (-1.7289)}\\ [0.5ex] 
\bottomrule\bottomrule
\end{tabular}
		\captionsetup{justification=justified,singlelinecheck=true, width=0.64\textwidth, font={footnotesize,stretch=0.8}}
		\caption*{\textit{Note:} The table reports the mean and the median (in brackets) over  individuals' own- and cross-travel time elasticities for the FKRB estimator, the elastic net estimator, and the semiparametric mixed logit with normal distribution. The reported numbers  correspond to the   percentage change of the choice probability of an alternative in a column after a one percent increase in the travel time of an alternative in a row.}
\end{table}

\begin{table}[H] 
\centering
\caption{Ratio of Estimated Own- and Cross-Travel Time Elasticities in Mode Canada Data.}   \label{tab:RatioElasticities}
\small
\tabcolsep=0.2cm
\begin{tabular}{lccl} 
\toprule \toprule\noalign{\smallskip} 
			
\multicolumn{4}{l}{\textbf{Estimated Elasticities of FKRB divided by those of ENet:}} \\ 
 & Car & Air & \hphantom{0.72} Train \\ [0.2ex] 
 \hline \\ [-2ex]
		
 Car &   1.0728 \scriptsize{(1.0922)} & 0.9929 \scriptsize{(0.9813)} & 0.7908 \scriptsize{(1.2783)}   \\ [0.5ex] 
Air &   1.1099 \scriptsize{(1.1804)} & 0.975 \scriptsize{(0.8905)} & 1.1605 \scriptsize{(1.1864)}    \\ [0.5ex] 
Train & 1.8291 \scriptsize{(0.9611)} & 0.9685 \scriptsize{(1.2098)} & 1.068 \scriptsize{(1.1521)} \\ [0.5ex] 
\hline \\

\multicolumn{4}{l}{\textbf{Semiparametrically estimated Elasticities divided by those of ENet:}} \\ 
 & Car & Air & \hphantom{0.72} Train \\ [0.2ex] 
  \hline \\ [-2ex]
 
 Car &   1.5941 \scriptsize{(1.6277)} & 0.9701 \scriptsize{(1.4627)} & 4.5492 \scriptsize{(6.7854)} \\ [0.5ex] 
Air &   1.1662 \scriptsize{(1.2103)} & 1.0925 \scriptsize{(2.5375)} & 1.7425 \scriptsize{(1.7035)} \\ [0.5ex] 
Train & -3.1268 \scriptsize{(5.0686)} & 1.6379 \scriptsize{(1.398)} & 2.6173 \scriptsize{(2.2478)} \\  [0.5ex] 

\bottomrule\bottomrule
\end{tabular}
		\captionsetup{justification=justified,singlelinecheck=true, width=0.83\textwidth, font={footnotesize,stretch=0.8}}
		\caption*{\textit{Note:} The table reports the ratio of the mean and the median (in brackets) over  individuals' own- and cross-travel time elasticities reported in Table \ref{tab:elasticities} for (1) the FKRB estimator and elastic net estimator and (2) the semiparametric mixed logit with normal distribution and the elastic net estimator.}
\end{table}

\bigskip
	
\section[Appendix B: \\ Algorithm to Update Fixed and Random Coefficients]{Algorithm to Update Fixed and Random Coefficients} \label{app:UpdatingfixedCoefficients}
\onehalfspacing
	The algorithm to update the fixed coefficients uses a modification of the flexible grid estimator in \citeA{train2008}.
	
	Let $F$ denote the set of indices corresponding to the fixed coefficients and $M$ to the set of indices corresponding to the random coefficients.
	The goal is to maximize with respect to the fixed coefficients $\beta^F$ and the weights  $\theta = (\theta_1, \ldots, \theta_R)$ corresponding to $\beta^M$. Therefore, define the vector which is to be maximized as $\pi = \{\beta_F,\theta\}$.

	Then, rewrite $z_{i,j}^r$ more explicitly:
	\begin{equation} \label{eq:ZwithFixed}
z_{i,j}^r := z_{i,j}(\beta^F,\beta_r^M) = g(x_{i,j},\beta^F, \beta_r^M) = \frac{\exp\left(x_{i,j}^F \beta^F + x_{i,j}^M \beta_r^M \right)}{1 + \sum\limits_{l = 1}^J\exp\left(x_{i,l}^F \beta^F + x_{i,l}^M \beta_r^M\right)} .
\end{equation}
The likelihood criterion given in \citeA{train2008} is
\begin{equation}
L L(\beta^F,\beta^M)=  \frac{1}{N} \sum\limits_{i = 1}^N\log \left( \sum\limits_{r = 1}^R \theta_r z_{i, y_i}^r \right) =  \frac{1}{N} \sum\limits_{i = 1}^N \log \left( \sum\limits_{r = 1}^R \theta_r z_{i,y_i}(\beta^F,\beta_r^M) \right)  .
\end{equation}

The probability of agent $i$ having coefficients $\pi$ conditional on her observed choice $y_i$ and  being type $r$ is 
\begin{equation} \label{eq:hProb}
h_{i,r}\left(\pi \right)=\frac{ \theta_r z_{i,y_i}(\beta^F,\beta_r^M)}{\sum\limits_{r = 1}^R \theta_r z_{i,y_i}(\beta^F,\beta_r^M)} .
\end{equation}

Based on Equation (\ref{eq:hProb}) one can derive the iterative EM update scheme which updates $\pi^{t+1} =  \{\beta_F,\theta\}^{t+1}  = \{ \beta_F,(\theta_1, \ldots, \theta_R)\}^{t+1}  $ by using a previous estimated trial $\pi^t$ to maximize

	\begin{align} \label{eq:updateEM} \nonumber
\pi^{t+1} &=\arg \max _{\pi} Q\left(\pi | \pi^{t}\right) \\
&=\arg \max _{\pi} \sum_{i=1}^{N} \sum_{r=1}^{R} h_{i,r}\left(\pi^{t} \right)  \log \left(\theta_{r}  z_{i,y_i}(\beta^F,\beta_r^M)\right) .
\end{align}
Since $ \log \left(\theta_{r}  z_{i,j}(\beta^F,\beta_r^M)\right)  = \log (\theta_{r} ) + \log(  z_{i,y_i}(\beta^F,\beta_r^M))$ one can maximize Equation (\ref{eq:updateEM}) separately for $\beta^F$ and $\theta$. Since we use our generalized estimator given in Equation (\ref{eq7}), we only maximize Equation (\ref{eq:updateEM}) over $\beta^F$:
	\begin{align} \label{eq:updateBetaFixed}
{\{\beta^F \}}^{t+1} 
&=\arg \max _{\beta^F} \sum_{i=1}^{N} \sum_{r=1}^{R} h_{i,r}\left(\pi^{t} \right)  \log \left( z_{i,y_i}(\beta^F,\beta_r^M)\right) .
\end{align}
Plugging Equation (\ref{eq:ZwithFixed}) into Equation (\ref{eq:updateBetaFixed}) gives
	\begin{align} \label{weightedLogit}
{\{\beta^F \}}^{t+1} 
&=\arg \max _{\beta^F} \sum_{i=1}^{N} \sum_{r=1}^{R} h_{i,r}\left(\pi^{t} \right)  \log \left(  \frac{\exp\left(x_{i,y_i}^F \beta^F + x_{i,y_i}^M \beta_r^M \right)}{1 + \sum\limits_{l = 1}^J\exp\left(x_{i,l}^F \beta^F + x_{i,l}^M \beta_r^M\right)} \right)
\end{align}
or equivalently 
	\begin{align} \label{weightedLogit}
{\{\beta^F \}}^{t+1} 
&=\arg \max _{\beta^F} \sum_{i=1}^{N} \sum_{j=1}^{J} \sum_{r=1}^{R} y_{i,j} h_{i,r}\left(\pi^{t} \right)  \log \left(  \frac{\exp\left(x_{i,j}^F \beta^F + x_{i,j}^M \beta_r^M \right)}{1 + \sum\limits_{l = 1}^J\exp\left(x_{i,l}^F \beta^F + x_{i,l}^M \beta_r^M\right)} \right) .
\end{align}

This is is the formula of a weighted (standard) logit model where only the coefficients $\beta^F$ are to be maximized and the coefficients $\beta^M$ are treated as constants. The weights $ h_{i,r}\left(\pi^{t} \right)$, calculated as given in Equation (\ref{eq:hProb}), do not depend on the product $j$, but differ for different observations $i$ and grid points $r$.\\

The whole update scheme is given by the following steps

\begin{framed}
\center{{\large \textbf{\textit{Generalized Estimator of Equation (\ref{eq7}) with fixed and random coefficients}}}}\\
\begin{enumerate}
\item Estimate semi-parametric model with all regressors and store the coefficients of the fixed parameters $\beta_0^F$.
\item Choose the grid points $\beta_r^M,\,r=1,...,R$.
\item Calculate the logit kernel, $z_{i,j}(\beta_0^F,\beta_r^M)$, for each agent at each point.
\item Estimate $\theta_0$ using the Generalized Estimator in Equation (\ref{eq7}).
\item  Calculate weights for each agent at each point with $\pi_0 = \{\beta_0^F, \theta_0 \}$ as 
\begin{equation} \nonumber
h_{i,r}\left(\pi_0 \right)=\frac{ {\theta_r}_0 z_{i,y_i}(\beta_0^F,\beta_r^M)}{\sum\limits_{r = 1}^R {\theta_r}_0 z_{i,y_i}(\beta_0^F,\beta_r^M)} .
\end{equation}
\item Update the fixed coefficients $\beta_0^F = \beta_1^F$ by estimating a weighted standard logit as specified in Equation (\ref{weightedLogit}) .
\item Repeat steps 3 and 6 until convergence, using the updated coefficients $\pi_0 = \pi_1$, where $\theta_0 = \theta_1$ is updated in step 4.
\item Use these estimated weights $\widehat{\theta}$ to calculate the estimated distribution\\
\begin{equation} \nonumber
\hat{F}\left(\beta\right) = \sum\limits_{r = 1}^R \thetahat_r \ 1\left[\beta_r\leq \beta\right].
\end{equation}
\end{enumerate}
\end{framed}

\newpage

\section[Appendix C: \\ Proofs of Results in Section \ref{sec:errorbound}]{Proofs of Results in Section \ref{sec:errorbound}}

\onehalfspacing

Below, we provide the proofs of the results presented in Section \ref{sec:errorbound}. For that purpose, we first introduce some additional notation.

Let $A$ be a $m \times n$ matrix and $x$ be a $n \times 1$ vector. In the following, the $\norm{A}_\infty$ norm refers to the matrix norm induced by the maximum norm of vectors. Then
\begin{equation} \nonumber
\norm{A}_\infty := \max\limits_{\vert\vert x \vert\vert_\infty = 1} \norm{Ax}_\infty = \max\limits_{1 \leq i \leq m } \sum_{j=1}^{n} \vert a_{ij} \vert
\end{equation}
denotes the maximum row sum of matrix $A$. $\norm{x}_\infty$  refers to the largest absolute element of  vector $x$.

Similarly, $\norm{A}_2$ is defined as the matrix norm induced by the euclidean vector norm. That is,
\begin{equation} \nonumber
\norm{A}_2 := \max\limits_{\vert\vert x \vert\vert_2 = 1} \norm{Ax}_2,
\end{equation}
is called spectral norm. It can be shown that $\norm{A}_2 = \max\limits_{1 \leq i \leq n } \sqrt{\psi_i(A^T A)}$ where $\psi_i(A^T A)$ denotes the eigenvalues of $A^T A$.

\subsection{Proof of Probability Bound}

Lemma \ref{lem:PrBound} uses Hoeffding's inequality to derive a probability bound for sub-Gaussian random variables. We use the lemma in the proofs of Theorems \ref{theo:ConditionSelectionLASSO} - \ref{theo:boundDistribution}. 


%

\begin{lemma} \label{lem:PrBound} Suppose Assumption \ref{ass:Exogen} holds. Then, for $\gamma \geq 0$ 
\begin{equation}   \nonumber
	\prob\left(\norm{\frac{1}{NJ}\Ztilde^T\epsilon}_{\infty}\geq\gamma\right) \leq 2 (R-1)J \exp\left(-\frac{N \gamma^2}{2}\right)  .
	\end{equation}
\end{lemma}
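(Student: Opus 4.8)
The plan is to reduce the multivariate tail bound to a scalar sub-Gaussian concentration inequality by means of two successive union bounds, and then to invoke Hoeffding's inequality on independent, conditionally sub-Gaussian summands. First I would write the quantity out explicitly: the $r$th coordinate of $\frac{1}{NJ}\Ztilde^T\epsilon$ equals $\frac{1}{NJ}\sum_{i=1}^N\sum_{j=1}^J \ztilde_{i,j}^r \epsilon_{i,j}$, so $\norm{\frac{1}{NJ}\Ztilde^T\epsilon}_\infty$ is the maximum over $r=1,\dots,R-1$ of the absolute values of these sums. A union bound over $r$ produces the factor $R-1$ and reduces the task to bounding, for a fixed $r$, the probability $\prob\bigl(\bigl\vert\frac{1}{NJ}\sum_{i,j}\ztilde_{i,j}^r\epsilon_{i,j}\bigr\vert \geq \gamma\bigr)$.

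Next I would peel off the sum over $j$. Writing the coordinate as $\frac1J\sum_{j=1}^J\bigl(\frac1N\sum_{i=1}^N\ztilde_{i,j}^r\epsilon_{i,j}\bigr)$ and applying the triangle inequality, its absolute value is at most the average of the $J$ nonnegative quantities $\bigl\vert\frac1N\sum_{i}\ztilde_{i,j}^r\epsilon_{i,j}\bigr\vert$. Since an average of nonnegative numbers can reach $\gamma$ only when at least one summand does, the event is contained in $\bigcup_{j=1}^J\bigl\{\bigl\vert\frac1N\sum_{i}\ztilde_{i,j}^r\epsilon_{i,j}\bigr\vert\geq\gamma\bigr\}$. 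A second union bound over $j$ then yields the factor $J$ and leaves, for each fixed pair $(r,j)$, only the scalar average $\frac1N\sum_{i=1}^N \ztilde_{i,j}^r\epsilon_{i,j}$, whose summands are independent across $i$ by Assumption \ref{ass:Exogen}(i) and (iii).

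The core step is the sub-Gaussian tail bound for this scalar average, and I would argue it conditionally on $\Ztilde$. By Assumption \ref{ass:Exogen}(iv) each $\epsilon_{i,j}$ has conditional mean zero, by (ii) it is sub-Gaussian with variance proxy $\sigma^2$, and because $\sigma \leq 1$ together with $\vert\ztilde_{i,j}^r\vert\leq 1$ from (iii), each product $\ztilde_{i,j}^r\epsilon_{i,j}$ is sub-Gaussian with variance proxy at most $1$. Conditional independence across $i$ then makes $\sum_{i=1}^N \ztilde_{i,j}^r\epsilon_{i,j}$ sub-Gaussian with variance proxy at most $N$, so Hoeffding's inequality gives $\prob\bigl(\bigl\vert\frac1N\sum_i\ztilde_{i,j}^r\epsilon_{i,j}\bigr\vert\geq\gamma \bigm\vert \Ztilde\bigr)\leq 2\exp\bigl(-N\gamma^2/2\bigr)$. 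As this conditional bound is free of $\Ztilde$, taking expectations shows it holds unconditionally, and combining the two union bounds delivers $2(R-1)J\exp\bigl(-N\gamma^2/2\bigr)$, exactly the asserted inequality.

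The main obstacle I anticipate is the conditioning bookkeeping rather than any single estimate: one must verify that the unconditional independence of the error vectors in Assumption \ref{ass:Exogen}(i), paired with the i.i.d.\ structure of $(\Ztilde_i)$ in (iii), actually yields conditional independence of the summands given $\Ztilde$, and that the variance proxy is propagated correctly through the product and the sum so that the exponent emerges as precisely $N\gamma^2/2$ and not some expression carrying a stray $\sigma$ or $J$. This is exactly where the normalizations $\sigma\leq1$ and $\vert\ztilde_{i,j}^r\vert\leq1$ do the essential work.
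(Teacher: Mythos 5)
Your proposal is correct and follows essentially the same route as the paper's proof: a union bound over $r$, a reduction of the double sum over $(i,j)$ to a maximum over $j$ followed by a union bound over $j$, and then Hoeffding's inequality applied to the independent, mean-zero, bounded (hence sub-Gaussian with variance proxy $1$) summands $\ztilde_{i,j}^r\epsilon_{i,j}$ across $i$. The only cosmetic difference is that you track the variance proxy through the product conditionally on $\Ztilde$, whereas the paper invokes the boundedness $\ztilde_{i,j}^r\epsilon_{i,j}\in[-1,1]$ directly; both give the same exponent.
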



\begin{proof} 
	
	Notice that 
	
	\begin{equation}\label{probAbs}
	\prob\left(\norm{\frac{1}{NJ}\Ztilde^T\epsilon}_\infty\geq\gamma\right) = \prob\left(\max\limits_{1\leq r\leq {R-1}}\left\vert \frac{1}{NJ}\sum\limits_{i = 1}^N \Ztilde_{i}^{r T} \epsilon_i \right\vert\geq\gamma\right)
	\end{equation}

	\bigskip
	where $\epsilon_i = (\epsilon_{i,1}, \ldots, \epsilon_{i,J})$ denotes a random vector of $J$ dependent variables such that Equation (\ref{probAbs}) can equivalently be written as
	
	\begin{align*}
	\prob\left(\max\limits_{1\leq r\leq {R-1}}\left\vert\frac{1}{NJ}\sum\limits_{i = 1}^N \Ztilde_{i}^{r T} \epsilon_i\right\vert\geq\gamma\right) &= \prob\left(\max\limits_{1\leq r\leq {R-1}}\left\vert\frac{1}{NJ}\sum\limits_{i = 1}^N\sum\limits_{j = 1}^J\ztilde_{i,j}^r \epsilon_{i,j}\right\vert\geq\gamma\right) \\
	&=\prob\left(\bigcup\limits_{1\leq r\leq {R-1}}\left\{\left\vert \frac{1}{NJ}\sum\limits_{i = 1}^N\sum\limits_{j = 1}^J\ztilde_{i,j}^r\epsilon_{i,j}\right\vert\geq\gamma \right\}\right) .
	\end{align*} 
	
	From $\sum_{i=1}^N\sum_{j=1}^J\ztilde_{i,j}^r\epsilon_{i,j}\leq J \max\limits_{1\leq j\leq J}\sum_{i=1}^N\ztilde_{i,j}^r\epsilon_{i,j}$, we obtain the upper bound
	
	\begin{align*}
	\prob\left(\bigcup\limits_{1\leq r\leq {R-1}}\left\{\left\vert \frac{1}{NJ}\sum\limits_{i = 1}^N\sum\limits_{j = 1}^J\ztilde_{i,j}^r\epsilon_{i,j}\right\vert\geq\gamma \right\}\right) &\leq \prob\left(\bigcup\limits_{1\leq r \leq {R-1}}\left\{J\max\limits_{1\leq j \leq J}\left\vert\frac{1}{NJ}\sum\limits_{i = 1}^N\ztilde_{i,j}^r\epsilon_{i,j}\right\vert\geq\gamma\right\}\right) \notag \\
	&\leq\sum\limits_{r=1}^{R-1}\prob\left(\max\limits_{1\leq j \leq J}\left\vert\frac{1}{N}\sum\limits_{i=1}^N\ztilde_{i,j}^r\epsilon_{i,j}\right\vert\geq\gamma\right) \notag \\
	&=\sum\limits_{r=1}^{R-1}\prob\left(\bigcup\limits_{1\leq j\leq J}\left\{\left\vert\frac{1}{N}\sum\limits_{i=1}^N\ztilde_{i,j}^r\epsilon_{i,j}\right\vert\geq\gamma\right\}\right) \notag \\
	&\leq\sum\limits_{r=1}^{R-1}\sum\limits_{j=1}^J\prob\left(\left\vert\frac{1}{N}\sum\limits_{i=1}^N\ztilde_{i,j}^r\epsilon_{i,j}\right\vert\geq\gamma\right) \notag \\
	&\leq (R-1) J  \max\limits_{\subalign{ 1 \le r &\le R-1   \\  1 \le j &\le J  }}\prob\left(\left\vert\frac{1}{N}\sum\limits_{i=1}^N\ztilde_{i,j}^r\epsilon_{i,j}\right\vert\geq \gamma\right) .
	\end{align*} 
	
	Recall from Assumption \ref{ass:Exogen} (iii) and Equation (\ref{eq:true}) that $-1\leq \ztilde_{i,j}^r\leq 1$ and $-1\leq \epsilon_{i,j}\leq 1$.
	Therefore, $\xi := (\ztilde_{1,j}^r\epsilon_{1,j}, \ldots, \ztilde_{N,j}^r\epsilon_{N,j})$ is a vector of independent uniformly bounded random variables since for every $i = 1, \ldots, N$ it holds that $-1\leq \ztilde_{i,j}^r\epsilon_{i,j}\leq 1$. It follows from the assumption of conditional exogeneity (Assumption \ref{ass:Exogen} (iv)) that $\mathop{\mathbb{E}}[\xi] = 0$. Due to the boundedness of $\xi$, its moment generating function satisfies
	
	\begin{equation*}
	\mathop{\mathbb{E}}\left[\exp(s\xi)\right] \leq \exp\left(\frac{\sigma^2s^2}{2}\right).
	\end{equation*}
	
	For any $s\in \mathbb{R}$, $\xi$ is said to be sub-Gaussian with variance proxy $\sigma^2$. Thus, using Hoeffding's inequality, 
	
	\begin{equation}\label{hoeffdingsIneq} 
	\max\limits_{\subalign{ 1 \le r &\le R-1  \\  1 \leq j &\leq J}}\prob\left(\left\vert\frac{1}{N}\sum\limits_{i=1}^N\ztilde_{i,j}^r\epsilon_{i,j}\right\vert\geq \gamma\right) \leq 2 \exp\left(-\frac{N\gamma^2}{2\sigma^2}\right) .
	\end{equation}
	
	It follows from $\xi\in[-1, 1]$ that $\sigma^2 = 1$. Therefore, 

	\begin{align}\label{boundFirstStep} \nonumber
	\prob\left(\norm{\frac{1}{NJ}\Ztilde^T\epsilon}_\infty\geq\gamma\right) &\leq (R-1)J \max\limits_{\subalign{ 1\le r &\le R-1 \\  1\leq j &\leq J}}\prob\left(\left\vert\frac{1}{N}\sum\limits_{i=1}^N\ztilde_{i,j}^r\epsilon_{i,j}\right\vert\geq \gamma\right) \\  &\leq 2 (R-1)J \exp\left(-\frac{N \gamma^2}{2}\right)  .
	\end{align}
	
\end{proof}

\subsection{Proof of Selection Consistency} 

In the following, we provide the proof of Theorem \ref{theo:ConditionSelectionLASSO}. We first derive two sufficient conditions in Lemma \ref{lem:MUandMV} that ensure that the estimated weights are equal in sign, i.e. $\thetahat =_s \thetastar $. Lemma \ref{lem:V} provides a bound on the probability of the first sufficient condition and Lemma \ref{lem:U} a bound on the probability of the second sufficient condition. Finally, we use Lemma \ref{lem:V} and Lemma \ref{lem:U} to prove Theorem \ref{theo:ConditionSelectionLASSO}. Both Lemma \ref{lem:V} and Lemma \ref{lem:U} employ Lemma \ref{lem:SVD}. 

\begin{lemma} \label{lem:SVD} It holds that
	\begin{equation} \nonumber
	\normB{ \left(\frac{1}{NJ}\Ztilde_S^T \Ztilde_S + \mu \imat_S \right)^{-1}}_\infty  \leq   \sqrt{s}   \frac{1}{\xi_{\min}^S(\mu)} .
	\end{equation}
\end{lemma}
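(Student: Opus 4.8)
The plan is to control the induced maximum-norm of the inverse by passing through the spectral norm, for which the symmetry and positive definiteness of the matrix yield an exact expression. Write $M := \frac{1}{NJ}\Ztilde_S^T\Ztilde_S + \mu\imat_S$. Since $\Ztilde_S^T\Ztilde_S$ is symmetric positive semidefinite and $\mu\imat_S$ is a nonnegative multiple of the identity, $M$ is a real symmetric $s\times s$ matrix whose smallest eigenvalue is exactly $\xi_{\min}^S(\mu)$; under the standing assumption $\xi_{\min}^S(\mu)>0$ it is positive definite and hence invertible.

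First I would establish the standard norm-equivalence bound $\norm{A}_\infty \le \sqrt{s}\,\norm{A}_2$ for any $s\times s$ matrix $A$. This follows directly from the definitions recorded in the appendix: for any $x$ with $\norm{x}_\infty = 1$ one has $\norm{x}_2 \le \sqrt{s}$, so $\norm{Ax}_\infty \le \norm{Ax}_2 \le \norm{A}_2\norm{x}_2 \le \sqrt{s}\,\norm{A}_2$, and taking the maximum over all such $x$ yields the claim. Applying this with $A = M^{-1}$ gives $\norm{M^{-1}}_\infty \le \sqrt{s}\,\norm{M^{-1}}_2$.

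Next I would compute $\norm{M^{-1}}_2$ exactly. Because $M$ is symmetric positive definite, $M^{-1}$ is symmetric positive definite as well, and its eigenvalues are the reciprocals of those of $M$. Using the characterization $\norm{A}_2 = \max_i\sqrt{\psi_i(A^TA)}$ stated in the appendix, the spectral norm of the symmetric matrix $M^{-1}$ equals its largest eigenvalue, namely $1/\xi_{\min}^S(\mu)$. Combining the two steps gives $\norm{M^{-1}}_\infty \le \sqrt{s}/\xi_{\min}^S(\mu)$, which is the assertion.

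The argument is essentially routine linear algebra; the only points requiring care are keeping track of the dimension-dependent constant $\sqrt{s}$ in the norm equivalence and invoking the symmetry of $M$ so that the identification of its spectral norm with the reciprocal of its smallest eigenvalue is an exact equality rather than merely an inequality. No probabilistic input enters here, which is why this lemma functions purely as a deterministic ingredient feeding into the proofs of Lemma \ref{lem:V} and Lemma \ref{lem:U}.
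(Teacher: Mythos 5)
Your proof is correct and follows essentially the same route as the paper's: the bound $\norm{A}_\infty \leq \sqrt{s}\,\norm{A}_2$ followed by the identification of the spectral norm of the inverse with $1/\xi_{\min}^S(\mu)$. The only cosmetic difference is that the paper makes the eigenstructure explicit via an SVD of $\tfrac{1}{\sqrt{NJ}}\Ztilde_S$, whereas you invoke the spectral theorem for the symmetric matrix directly; both arguments rest on the same two facts.
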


\begin{proof}
	Using Singular Value Decomposition (SVD), rewrite $\Ztilde_S$ as
	\begin{equation} \label{SVD}
	\frac{1}{\sqrt{ NJ}} \Ztilde_S = A D M^T
	\end{equation}
	where $A$ is a $NJ \times s $ matrix with orthogonal columns, i.e. $A^T A = I_S$.\\
	$M$ is a $s \times s$ orthogonal matrix satisfying $M^T M = M M^T = I_S$. $D$ is a diagonal $s \times s$ matrix consisting of the singular values of $(1/\sqrt{ NJ}) \Ztilde_S$ on its diagonal. 	
	We apply the SVD in Equation (\ref{SVD}) to rewrite
	\begin{align} \label{svdInverse}
	\left( \frac{1}{NJ}\Ztilde_S^T\Ztilde_S + \mu \imat_S \right)^{-1} = \left(    M D^T A^T A D M^T+ \mu \imat_S \right)^{-1} &=  \nonumber
	\left(    M D^2 M^T+ \mu M  M^T  \right)^{-1} \\ &= M \left(     D^2 + \mu \imat_S \right)^{-1} M^T 
	\end{align}
	
	Therefore, 
	\begin{align}  \label{upperBoundInverse}
	\normB{ \left(\frac{1}{NJ}\Ztilde_S^T \Ztilde_S + \mu \imat_S \right)^{-1}}_\infty &= \normB{ M \left(     D^2 + \mu \imat_S \right)^{-1} M^T }_\infty 
	\leq  \sqrt{s} \normB{ M \left(     D^2 + \mu \imat_S \right)^{-1} M^T }_2 \\
	&=  \sqrt{s} \normB{  \left(     D^2 + \mu \imat_S \right)^{-1} }_2 
	=  \sqrt{s} \max\limits_{i \in S} \sqrt{\psi_i}  \nonumber \\ 
	&=  \sqrt{s}  \max\limits_{i \in S} \frac{1}{d_{ii}^2 + \mu}
	=  \sqrt{s}   \frac{1}{\min\limits_{i \in S} d_{ii}^2 + \mu} =  \sqrt{s}   \frac{1}{\xi_{\min}^S(\mu)} \nonumber 
	\end{align}
	where $\psi_i$ denotes the eigenvalues of $\left( \left(     D^2 + \mu \imat_S \right)^{-1}\right)^T \left(     D^2 + \mu \imat_S \right)^{-1} = 
	\left(     D^2 + \mu \imat_S \right)^{-2}$. Thus, $\psi_i= \left(d_{ii}^2 + \mu  \right)^{-2}$, as the eigenvalues of a diagonal matrix are its diagonal entries. The (unrestricted) eigenvalues of $1/(NJ) \Ztilde_S^T \Ztilde_S + \mu \imat_S $ are defined as $\xi^S(\mu)$. $\xi_{\min}^S(\mu)$ corresponds to the minimal eigenvalue of the matrix.
	The first inequality in Equation (\ref{upperBoundInverse}) holds by the relation of the absolute row sum norm and the spectral norm. The transformation from the first to the second line follows from the invariance of the spectral norm to orthogonal transformations \cite[pp. 130-131]{gentle2007}. The equality in the second line follows from the spectral norm. The last equality in Equation (\ref{upperBoundInverse}) holds by the relation of singular values to eigenvalues.
\end{proof}

\begin{lemma} \label{lem:MUandMV} 
	Sufficient conditions for $\thetahat =_s \theta^*$ are
	\begin{align*}
	\mathcal{M}(V) := \left\{\max\limits_{j \in S^C} V_j \leq  \lambda \right\} ,
	\end{align*}
	\begin{align*}
	\mathcal{M}(U) := \left\{\max\limits_{i \in S} \vert U_i \vert <  \rho \right\}
	\end{align*}
	where
	\begin{flalign*}
	V  &:= \frac{1}{NJ}\Ztilde_{S^C}^T\bigg[\Ztilde_S \left( \frac{1}{NJ}\Ztilde_S^T\Ztilde_S + \mu \imat_S \right)^{-1} \left(\lambda\iota_S + \mu\thetastar_S - \frac{1}{NJ}\Ztilde_S^T\epsilon \right) + \epsilon\bigg] ,\\
	U   &:= \left( \frac{1}{NJ}\Ztilde_S^T\Ztilde_S + \mu \imat_S \right)^{-1}   \frac{1}{NJ}\Ztilde_S^T\epsilon , \\
	\rho  &:= \min\limits_{i \in S} \Big\vert \left( \frac{1}{NJ}\Ztilde_S^T\Ztilde_S + \mu \imat_S \right)^{-1} \left(  \frac{1}{NJ}\Ztilde_S^T\Ztilde_S \thetastar_S  - \lambda\iota_S \right)\Big\vert.
	\end{flalign*}
\end{lemma}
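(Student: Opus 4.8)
The plan is to recast the constrained program \eqref{eq7} in Lagrangian form and run a primal–dual witness argument. Writing the objective as the nonnegative elastic net with multipliers $\lambda$ (for the $\ell_1$ constraint) and $\mu$ (for the $\ell_2$ constraint), a minimizer $\thetahat$ is characterized by the Karush--Kuhn--Tucker conditions
\[
-\frac{1}{NJ}\Ztilde^T(\ytilde-\Ztilde\thetahat)+\mu\thetahat+\lambda\iota-\nu = 0,\qquad \nu\geq 0,\quad \nu_r\thetahat_r=0,
\]
where $\nu$ collects the (suitably scaled) multipliers of the nonnegativity constraints. For $\mu>0$ the objective is strictly convex, so a vector fulfilling these conditions is the \emph{unique} global minimizer; for $\mu=0$ I invoke the assumed invertibility of $\frac{1}{NJ}\Ztilde_S^T\Ztilde_S$ on the restricted problem. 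The idea is then to construct a candidate supported on $S$ and show that, on $\mathcal{M}(U)\cap\mathcal{M}(V)$, it is primal feasible with the correct signs and dual feasible, hence equal to $\thetahat$ and equal in sign to $\thetastar$.

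First I would build the witness. Setting the off-support block to zero, $\thetahat_{S^C}=0$, and solving the stationarity equation on $S$ with $\ytilde=\Ztilde_S\thetastar_S+\epsilon$ gives, after writing $\thetahat_S=\thetastar_S+\Delta_S$ and collecting terms,
\[
\thetahat_S = \Bigl(\tfrac{1}{NJ}\Ztilde_S^T\Ztilde_S+\mu\imat_S\Bigr)^{-1}\Bigl(\tfrac{1}{NJ}\Ztilde_S^T\Ztilde_S\thetastar_S-\lambda\iota_S\Bigr) + \Bigl(\tfrac{1}{NJ}\Ztilde_S^T\Ztilde_S+\mu\imat_S\Bigr)^{-1}\tfrac{1}{NJ}\Ztilde_S^T\epsilon ,
\]
that is, $\thetahat_S$ is a deterministic vector plus exactly the noise term $U$. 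Since $\rho$ is defined as the smallest modulus of the entries of the deterministic vector, the event $\mathcal{M}(U)=\{\max_{i\in S}|U_i|<\rho\}$ ensures that the stochastic perturbation cannot move any coordinate past zero, so every entry of $\thetahat_S$ retains the sign of its deterministic part and is therefore strictly positive, matching $\thetastar_S$.

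It then remains to verify dual feasibility on the inactive set. With $\thetahat_{S^C}=0$ the requirement $\nu_{S^C}\geq 0$ reads
\[
-\frac{1}{NJ}\Ztilde_{S^C}^T(\ytilde-\Ztilde_S\thetahat_S)+\lambda\iota_{S^C}\geq 0 .
\]
Substituting the witness $\thetahat_S$ so that the residual becomes $\ytilde-\Ztilde_S\thetahat_S=\epsilon-\Ztilde_S\Delta_S$, and plugging in the closed form of $\Delta_S$, the left-hand side reorganizes precisely into the vector $V$; the requirement collapses to $\max_{j\in S^C}V_j\leq\lambda$, which is the event $\mathcal{M}(V)$. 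Hence on $\mathcal{M}(U)\cap\mathcal{M}(V)$ the witness satisfies all KKT conditions with the right support and strictly positive active entries, and by the uniqueness noted above it equals $\thetahat$, yielding $\thetahat=_s\thetastar$.

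The main obstacle is the primal–dual-witness bookkeeping: keeping the signs and normalizations straight when the restricted solution is inserted into the off-support stationarity condition so that the resulting expression collapses \emph{exactly} to $V$, and—more subtly—justifying that $\mathcal{M}(U)$ preserves positivity rather than merely sign-stability. The latter requires the deterministic part of $\thetahat_S$ to be positive (not just bounded away from zero in modulus), which holds when $\lambda$ is not too large relative to the signal $\frac{1}{NJ}\Ztilde_S^T\Ztilde_S\thetastar_S$; I would make this regime explicit. Finally, establishing uniqueness (via strict convexity for $\mu>0$, or the invertibility assumption for $\mu=0$) is the conceptual crux that turns the constructed witness into the genuine estimator.
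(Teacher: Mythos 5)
Your proposal follows essentially the same primal--dual witness route as the paper: derive the KKT conditions of the Lagrangian, set $\thetahat_{S^C}=0$, solve the stationarity equation on $S$ to obtain $\thetahat_S$ as the deterministic vector $g$ plus the noise term $U$ (so that $\mathcal{M}(U)$ guards the signs on the active set), and substitute back into the off-support condition so that dual feasibility collapses to $\mathcal{M}(V)$. The two caveats you flag --- that $\mathcal{M}(U)$ only preserves the sign of the deterministic part, so one needs $g_i>0$ (a restriction on $\lambda$ relative to $\tfrac{1}{NJ}\Ztilde_S^T\Ztilde_S\thetastar_S$), and that uniqueness of the minimizer is needed to identify the witness with $\thetahat$ --- are real and are in fact glossed over in the paper's own proof, which drops the absolute value on $\thetahat_{S_i}$ by appealing to the nonnegativity constraint and never discusses uniqueness.
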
	
\begin{proof}
	The Lagrangian of our adjusted estimator that follows from the transformed optimization problem in Equation (\ref{eq7}) is 
	
	\begin{equation}\label{lagrangian}
	L(\theta) := \quad\frac{1}{2NJ}\vert\vert \ytilde - \Ztilde\theta\vert\vert + \lambda_n\left(\iota^T\theta - 1\right) + \frac{1}{2}\mu \ \theta^T\theta - \nu^T\theta
	\end{equation}
	
	which is minimized with respect to $\theta$, i.e. $\theta=\argmin\limits_{\theta} L(\theta)$. $\lambda$ and $\nu$ are Lagrangian multipliers that enforce that the estimated weights sum to one and that they are non-negative respectively. $\mu > 0$ is an additional tuning parameter. Note that for $\mu = 0$, Equation (\ref{lagrangian}) corresponds to the objective function of the estimator by \citeA{fox2011}. 
	
	To analyze the support recovery of our estimator, we follow the proof in \citeA{jia2010}. 
	The estimator recovers the true support of the distribution if 
	%
	every estimated probability weight $\thetahat$ has the same sign as the true weights $\thetastar$, i.e. $\thetahat =_s \thetastar$. 
	
	This is the case if the Karush-Kuhn-Tucker (KKT) conditions to the optimization problem in Equation (\ref{lagrangian}) are satisfied. The KKT conditions are given by
	
	
	
	\begin{align}
	-&\frac{1}{NJ}\Ztilde^T\left(\ytilde - \Ztilde\hat{\theta}\right) + \lambda\iota + \mu \ \hat{\theta} - \nu = 0 \label{kt1} ,\\ 
	&\lambda\left(\iota^T\thetahat - 1\right) = 0 \label{kt2} ,\\
	&\nu_r \ \thetahat_r = 0   \label{kt3} ,\\ 
	&\lambda \geq 0, \quad \nu_r \geq 0 \hspace{2.5cm} \forall \quad r = 1, \ldots, R - 1 \label{kt4} .
	\end{align}
	
	Denote the set of grid points where the true distribution has positive probability mass by $S = \{r\in\{1, \ldots, R-1\} \vert\theta_r^{*}>0\}$  and let $S^C = \{r\in\{1, \ldots, R-1\} \vert\theta_r^{*} = 0\}$ denote its complement set. The corresponding cardinalities are defined as $s := |S|$ and $ s^C :=|S^C| $.  We refer to grid points in $S$ as active grid points and to grid points in $S^C$ as inactive grid points.	
	Splitting $\thetahat$, $\Ztilde$ and $\nu$ over $S$ and $S^C$ into two blocks gives 
	
	\begin{equation} \notag
	-\frac{1}{NJ}\left[\Ztilde_S \ \Ztilde_{S^C}\right]^T\left(\ytilde - \left[\Ztilde_S \ \Ztilde_{S^C}\right]\left(
	\begin{array}{c}
	\hat{\theta}_S \\ \hat{\theta}_{S^C} \\ \end{array}\right) \right) + \lambda \iota + \mu \left(\begin{array}{c}
	\hat{\theta}_S\\ \hat{\theta}_{S^C} \\ \end{array}\right) - \left(\begin{array}{c}
	\nu_S\\ \nu_{S^C} \\ \end{array}\right) = 0.
	\end{equation}
	
	Recall that $\thetastar_r = 0$ for all grid points outside $S$, so that $\Ztilde\thetastar = \Ztilde_S\thetastar_S$. In order to recover the active grid points, it must hold that $\thetahat =_s \thetastar$ which implies $\thetahat_{S^C} = 0$. The two conditions that follow from Equation (\ref{kt1}) require 
	
	\begin{align}
	-&\frac{1}{NJ}\Ztilde_S^T\left(\ytilde - \Ztilde_S\thetahat_S\right)+ \lambda\iota_S + \mu\hat{\theta}_S - \nu_S = 0 \label{kt1pos} , \\
	-&\frac{1}{NJ}\Ztilde_{S^C}^T\left(\ytilde - \Ztilde_{S}\thetahat_{S}\right) + \lambda\iota_{S^C} - \nu_{S^C} = 0 \label{kt1zer} .
	\end{align}
	
	Note that $\thetahat_S > 0$ and $\thetahat_{S^C} = 0$ imply 
	\begin{align}
	&\nu_r = 0 \hspace{3cm} \forall \quad r \in S  \label{cond1} ,\\
	&\nu_r \geq 0 \hspace{3cm} \forall \quad r \not\in S \label{cond2} .
	\end{align}
	
	It follows from Condition (\ref{cond1}) that Condition (\ref{kt1pos}) simplifies to
	
	\begin{align}
	-&\frac{1}{NJ}\Ztilde_S^T\left(\ytilde - \Ztilde_S\thetahat_S\right)+ \lambda\iota_S + \mu\hat{\theta}_S  = 0  .
	\end{align}

	\bigskip

	Substituting the true model $\ytilde = \Ztilde\thetastar + \epsilon$, we can re-express the required conditions as
	
	\begin{equation} \label{cond3}
	-\frac{1}{NJ}\Ztilde_S^T\Ztilde_S\left(\thetastar_S - \thetahat_S\right) - \frac{1}{NJ}\Ztilde_S^T\epsilon + \lambda\iota_S + \mu\thetahat_S = 0
	\end{equation}
	
	and 
	
	\begin{equation}\label{cond4}
	-\frac{1}{NJ}\Ztilde_{S^C}^T\Ztilde_S\left(\thetastar_S - \thetahat_S\right) - \frac{1}{NJ}\Ztilde_{S^C}^T\epsilon + \lambda\iota_{S^C} - \nu_{S^C} = 0 .
	\end{equation}
	
	\bigskip

	Reformulating Condition (\ref{cond3}) gives
	\begin{equation} \label{betahat_s}
	\thetahat_S =  \underbrace{\bigg( \frac{1}{NJ}\Ztilde_S^T\Ztilde_S + \mu \imat_S \bigg)^{-1} \bigg(  \frac{1}{NJ}\Ztilde_S^T\epsilon  }_{=:U} + \frac{1}{NJ}\Ztilde_S^T\Ztilde_S \thetastar_S  - \lambda\iota_S  \bigg)  > 0 
	\end{equation}
	where the positivity constraint follows from the KKT conditions and the definition of $\thetahat_S$.\\

	Plugging Equation (\ref{betahat_s}) into Equation (\ref{cond4}) and using Condition (\ref{cond2}) yields
	\begin{equation} \label{secondKKT}
	\underbrace{\frac{1}{NJ}\Ztilde_{S^C}^T\bigg[\Ztilde_S \left( \frac{1}{NJ}\Ztilde_S^T\Ztilde_S + \mu \imat_S \right)^{-1} \left(\lambda\iota_S + \mu\thetastar_S - \frac{1}{NJ}\Ztilde_S^T\epsilon \right) + \epsilon\bigg]}_{=:V} \ \leq \lambda\iota_{S^C} .
	\end{equation}
	$U$ and $V$ are defined in Equation (\ref{betahat_s}) and Equation (\ref{secondKKT}), respectively.
	The vector $U$ consists of $s$ elements $U_i,$ $i \in S$, and is constructed from the conditions on the positive weights, and vector $V$ from the condition on the zero weights. Therefore, $V$ has $R-s$ elements $V_j$, $j \in S^C$.
	Condition (\ref{secondKKT}) is equivalent to the event
	\begin{equation} \label{MV} \notag
	\mathcal{M}(V) := \left\{\max\limits_{j \in S^C} V_j \leq  \lambda \right\}.
	\end{equation}
	
	The event $\mathcal{M}(U)$ defines a condition for the positive weights
	\begin{equation} \label{MU} \notag
	\mathcal{M}(U) := \left\{\max\limits_{i \in S} \vert U_i \vert <  \rho \right\}
	\end{equation}
	%
	where $\rho := \min\limits_{i \in S} \vert g_i \vert $  with  $g_i := \Big[ \left( \frac{1}{NJ}\Ztilde_S^T\Ztilde_S + \mu \imat_S \right)^{-1} \left(  \frac{1}{NJ}\Ztilde_S^T\Ztilde_S \thetastar_S  - \lambda\iota_S \right) \Big]_i \,$.\\
	
	Therefore, the event $ \mathcal{M}(U)$ implies
	\begin{align*} 
	0 & <  \rho -  \max\limits_{i \in S} \vert U_i \vert < \rho- \vert U_i \vert 
	< \vert g_i \vert  - \vert U_i \vert 
	< \vert  g_i +  U_i \vert = \vert  \thetahat_{S_{i}} \vert = \thetahat_{S_{i}}, \;\; \forall i \in S
	\end{align*}
	where $g_i$, $U_i$ and $\thetahat_{S_{i}}$ denote the $i$th element of the respective vectors $g$, $U$ and $\thetahat_S$. The second last equality holds by definition of $g_i$ and $U_i$ (see Equation (\ref{betahat_s})) and the last inequality by the reverse triangle inequality. Because the weights are constrained to be nonnegative by the KKT conditions, the absolute value $\vert  \thetahat_{S_{i}} \vert $ can be omitted. Consequently, $ \mathcal{M}(U)$ is a sufficient condition for Equation (\ref{betahat_s}) to hold and thus for $\thetahat_S > 0 $.\\
	
\end{proof}

\begin{lemma}  \label{lem:V} Suppose Assumption (\ref{ass:Exogen}) holds. Suppose further that the \hyperref[cond:eic]{NEIC} holds. Let $ \mathcal{M}^C(V)$ denote the complement of $ \mathcal{M}(V)$. Then, 
	\begin{equation} \nonumber
	\prob\big( \mathcal{M}^C(V) \big) \leq 2 (R-1) J \exp\left(-\frac{N \eta^2 \lambda^2 \left( \frac{\xi_{\min}^S(\mu)}{s \sqrt{s} +\xi_{\min}^S(\mu)}  \right)^2}{2}\right) .
	\end{equation}
\end{lemma}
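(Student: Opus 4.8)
The plan is to split $V$ into a part that the \hyperref[cond:eic]{NEIC} controls and a mean-zero stochastic part driven by $\epsilon$, and then apply Lemma \ref{lem:PrBound}. First I would write $V = V^{\mathrm{det}} + V^{\mathrm{rand}}$, where the $\epsilon$-free part is
\begin{equation} \nonumber
V^{\mathrm{det}} := \frac{1}{NJ}\Ztilde_{S^C}^T\Ztilde_S\left(\frac{1}{NJ}\Ztilde_S^T\Ztilde_S + \mu\imat_S\right)^{-1}\left(\lambda\iota_S + \mu\thetastar_S\right)
\end{equation}
and $V^{\mathrm{rand}} := \frac{1}{NJ}\Ztilde_{S^C}^T\epsilon - B\,\frac{1}{NJ}\Ztilde_S^T\epsilon$ with $B := \frac{1}{NJ}\Ztilde_{S^C}^T\Ztilde_S\left(\frac{1}{NJ}\Ztilde_S^T\Ztilde_S + \mu\imat_S\right)^{-1}$. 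Multiplying the \hyperref[cond:eic]{NEIC} by $\lambda>0$ gives $\max_{j\in S^C} V^{\mathrm{det}}_j \leq \lambda(1-\eta)$, so on the event $\mathcal{M}^C(V) = \{\max_{j\in S^C} V_j > \lambda\}$ some coordinate must satisfy $V^{\mathrm{rand}}_j > \lambda - \lambda(1-\eta) = \eta\lambda$, whence $V^{\mathrm{rand}}_j \leq \norm{V^{\mathrm{rand}}}_\infty$ yields the inclusion $\mathcal{M}^C(V) \subseteq \{\norm{V^{\mathrm{rand}}}_\infty > \eta\lambda\}$. It then remains to bound the probability of this last event.

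For the stochastic part I would use submultiplicativity of the induced maximum-row-sum norm to obtain
\begin{equation} \nonumber
\norm{V^{\mathrm{rand}}}_\infty \leq \norm{\tfrac{1}{NJ}\Ztilde_{S^C}^T\epsilon}_\infty + \norm{B}_\infty\,\norm{\tfrac{1}{NJ}\Ztilde_S^T\epsilon}_\infty \leq \left(1 + \norm{B}_\infty\right)\norm{\tfrac{1}{NJ}\Ztilde^T\epsilon}_\infty,
\end{equation}
where the second step uses that $S$ and $S^C$ index subsets of the columns of $\Ztilde$. To control $\norm{B}_\infty$ I would apply submultiplicativity once more: every entry of $\frac{1}{NJ}\Ztilde_{S^C}^T\Ztilde_S$ is bounded in absolute value by $1$ since $\ztilde_{i,j}^r\in[-1,1]$ by Assumption \ref{ass:Exogen} (iii), so its maximum row sum over the $s$ columns of $S$ is at most $s$; combining this with Lemma \ref{lem:SVD} gives $\norm{B}_\infty \leq s\cdot\sqrt{s}/\xi_{\min}^S(\mu)$, and hence $1 + \norm{B}_\infty \leq (s\sqrt{s}+\xi_{\min}^S(\mu))/\xi_{\min}^S(\mu)$.

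Putting the pieces together, the event $\norm{V^{\mathrm{rand}}}_\infty > \eta\lambda$ forces $\norm{\frac{1}{NJ}\Ztilde^T\epsilon}_\infty > \eta\lambda/(1+\norm{B}_\infty) \geq \eta\lambda\,\xi_{\min}^S(\mu)/(s\sqrt{s}+\xi_{\min}^S(\mu)) =: \gamma$. Applying Lemma \ref{lem:PrBound} with this choice of $\gamma$ then delivers
\begin{equation} \nonumber
\prob\big(\mathcal{M}^C(V)\big) \leq \prob\left(\norm{\tfrac{1}{NJ}\Ztilde^T\epsilon}_\infty \geq \gamma\right) \leq 2(R-1)J\exp\left(-\frac{N\gamma^2}{2}\right),
\end{equation}
which is exactly the claimed bound after substituting $\gamma^2 = \eta^2\lambda^2\big(\xi_{\min}^S(\mu)/(s\sqrt{s}+\xi_{\min}^S(\mu))\big)^2$. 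The main obstacle I anticipate is not conceptual but the bookkeeping in the norm estimate: collapsing $1+\norm{B}_\infty$ to exactly $(s\sqrt{s}+\xi_{\min}^S(\mu))/\xi_{\min}^S(\mu)$ requires combining the crude entrywise bound on $\frac{1}{NJ}\Ztilde_{S^C}^T\Ztilde_S$ with Lemma \ref{lem:SVD} in the correct order, and I must check that the threshold inequality points the right way so that the inclusion into $\{\norm{\frac{1}{NJ}\Ztilde^T\epsilon}_\infty \geq \gamma\}$ is valid and the constant in the exponent matches verbatim.
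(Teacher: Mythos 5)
Your proposal is correct and follows essentially the same route as the paper's proof: the decomposition $V = V^{\mathrm{det}} + V^{\mathrm{rand}}$ is exactly the paper's $V = \overline{V} + \Vtilde$, the \hyperref[cond:eic]{NEIC} is used in the same way to reduce the event to $\norm{V^{\mathrm{rand}}}_\infty > \eta\lambda$, and the norm bookkeeping (entrywise bound giving $\norm{\frac{1}{NJ}\Ztilde_{S^C}^T\Ztilde_S}_\infty \leq s$, Lemma \ref{lem:SVD}, then Lemma \ref{lem:PrBound} with $\gamma = \eta\lambda\,\xi_{\min}^S(\mu)/(s\sqrt{s}+\xi_{\min}^S(\mu))$) matches the paper step for step.
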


\begin{proof}
	$V_j$ is sub-Gaussian with mean
	\begin{equation} \label{meanV} \notag
	\overline{V} := E(V) = \frac{1}{NJ}\Ztilde_{S^C}^T\Ztilde_S \left( \frac{1}{NJ}\Ztilde_S^T\Ztilde_S + \mu \imat_S \right)^{-1} \left(\lambda\iota_S + \mu\thetastar_S  \right) .  
	\end{equation}
	
	Recall the Nonnegative Elastic Net Irrepresentable Condition (\hyperref[cond:eic]{NEIC}) is
	\begin{equation} \notag
	\max\limits_{r\in S^C} \frac{1}{NJ}\Ztilde_{S^C}^T\Ztilde_S \left( \frac{1}{NJ}\Ztilde_S^T\Ztilde_S + \mu \imat_S \right)^{-1} \left(\iota_S + \frac{\mu}{\lambda}\thetastar_S  \right)  \leq 1 - \eta .
	\end{equation}
	
	Therefore, $\overline{V}_j \leq (1-\eta) \lambda$. Let $\Vtilde := \frac{1}{NJ}\Ztilde_{S^C}^T\bigg[-\Ztilde_S \left( \frac{1}{NJ}\Ztilde_S^T\Ztilde_S + \mu \imat_S \right)^{-1}   \frac{1}{NJ}\Ztilde_S^T  + \imat_{NJ} \bigg] \epsilon $ such that $V = \overline{V} + \Vtilde$.\\
	
	Consequently, it holds for the complement of $\mathcal{M}(V)$ that
	
	\begin{equation}  \nonumber
	\lambda < \max\limits_{j \in S^C} V_j = \max\limits_{j \in S^C} (\overline{V}_j + \Vtilde_j ) \leq \max\limits_{j \in S^C} \overline{V}_j + \max\limits_{j \in S^C} \Vtilde_j 
	\iff   \max\limits_{j \in S^C} \Vtilde_j >  \lambda -  \max\limits_{j \in S^C} \overline{V}_j \geq \lambda - (1-\eta) \lambda = \eta \lambda .
	\end{equation}
	
	We use the last inequality to derive an upper bound on $ \mathcal{M}^C(V)$:

	\begin{align*} \label{probV} \nonumber
	\prob\left(  \mathcal{M}^C(V) \right) & = \prob\left(   \max\limits_{j \in S^C} V_j  > \lambda  \right)
	\leq \prob\left(   \max\limits_{j \in S^C}\Vtilde_j  >  \eta \lambda  \right) 
	\leq \prob\left(   \max\limits_{j \in S^C} \vert \Vtilde_j  \vert >  \eta \lambda  \right) \\
	& =  \prob\left(   \max\limits_{j \in S^C} \left\vert \frac{1}{NJ}\Ztilde_{S^C}^T\bigg[-\Ztilde_S \left( \frac{1}{NJ}\Ztilde_S^T\Ztilde_S + \mu \imat_S \right)^{-1}   \frac{1}{NJ}\Ztilde_S^T  + I \bigg] \epsilon  \right\vert >  \eta \lambda  \right) \\
	& \leq  \prob\left(   \max\limits_{j \in S^C} \left\vert \frac{1}{NJ}\Ztilde_{S^C}^T \Ztilde_S \left( \frac{1}{NJ}\Ztilde_S^T\Ztilde_S + \mu \imat_S \right)^{-1}   \frac{1}{NJ}\Ztilde_S^T  \epsilon \right\vert + \max\limits_{j \in S^C} \left\vert  \frac{1}{NJ} \Ztilde_{S^C}^T \epsilon  \right\vert >  \eta \lambda  \right) \\
	& =  \prob\left( \normB{  \frac{1}{NJ}\Ztilde_{S^C}^T \Ztilde_S \left( \frac{1}{NJ}\Ztilde_S^T\Ztilde_S + \mu \imat_S \right)^{-1}   \frac{1}{NJ}\Ztilde_S^T  \epsilon  }_\infty + \max\limits_{j \in S^C} \left\vert  \frac{1}{NJ} \Ztilde_{S^C}^T \epsilon  \right\vert >  \eta \lambda  \right)  \\
	& \leq  \prob\left( \normB{  \frac{1}{NJ}\Ztilde_{S^C}^T \Ztilde_S}_\infty \, \normB{ \left( \frac{1}{NJ}\Ztilde_S^T\Ztilde_S + \mu \imat_S \right)^{-1}  }_\infty \, \normB{ \frac{1}{NJ}\Ztilde_S^T  \epsilon  }_\infty + \max\limits_{j \in S^C} \left\vert  \frac{1}{NJ} \Ztilde_{S^C}^T \epsilon  \right\vert >  \eta \lambda  \right) .
	\end{align*} 
	The last inequality holds due the  property of the absolute row sum norm that  $\norm{A B x}_\infty \leq \norm{A}_\infty \norm{B}_\infty \norm{x}_\infty$ for arbitrary matrices $A$, $B$ and a vector $x$.
	
	By Lemma \ref{lem:SVD} and $\norm{\frac{1}{NJ}\Ztilde_{S^C}^T \Ztilde_S}_\infty \leq s $ (since every entry in $\Ztilde$ is at most $1$ in absolute value, and thus the absolute row sum of $\frac{1}{NJ}\Ztilde_{S^C}^T \Ztilde_S$ at most $\frac{1}{NJ} s NJ = s$), we obtain
	\begingroup
	\allowdisplaybreaks
	\begin{align*} 
	\prob\left(  \mathcal{M}^C(V) \right)      & \leq  \prob\left( s \sqrt{s} \frac{1}{\xi_{\min}^S(\mu)} \max\limits_{j \in S} \left\vert \frac{1}{NJ}  \Ztilde_{S^C}^T \epsilon  \right\vert + \max\limits_{j \in S^C} \left\vert \frac{1}{NJ}  \Ztilde_{S^C}^T \epsilon  \right\vert >  \eta \lambda  \right) \\
	& \leq  \prob\left( s \sqrt{s} \frac{1}{\xi_{\min}^S(\mu)} \max\limits_{j \in R} \left\vert \frac{1}{NJ}  \Ztilde^T \epsilon  \right\vert + \max\limits_{j \in R} \left\vert \frac{1}{NJ}  \Ztilde^T \epsilon  \right\vert >  \eta \lambda  \right) \\
	& =   \prob\left( \Big(s \sqrt{s} \frac{1}{\xi_{\min}^S(\mu)}  + 1 \Big) \max\limits_{j \in R} \left\vert \frac{1}{NJ}  \Ztilde^T \epsilon  \right\vert >  \eta \lambda  \right) \\
	& \leq  \prob\left(    \max\limits_{j \in R} \left\vert \frac{1}{NJ}  \Ztilde^T \epsilon  \right\vert > \eta \lambda \frac{1}{s \sqrt{s} \frac{1}{\xi_{\min}^S(\mu)} + 1 }  \right) .
	\end{align*}
	Applying Hoeffding's inequality with $\gamma =  \eta \lambda \frac{1}{s \sqrt{s} \frac{1}{\xi_{\min}^S(\mu)} + 1 } $ as outlined in Lemma \ref{lem:PrBound} gives
	\begin{align*}                
	\prob\left(  \mathcal{M}^C(V) \right)                     & \leq  2 (R-1) J \exp\left(-\frac{N \left(\eta \lambda \frac{1}{s \sqrt{s} \frac{1}{\xi_{\min}^S(\mu)}+1}  \right)^2}{2\sigma^2}\right) \\
	& =  2 (R-1) J \exp\left(-\frac{N \left(\eta \lambda \frac{\xi_{\min}^S(\mu)}{s \sqrt{s} +\xi_{\min}^S(\mu)}  \right)^2}{2\sigma^2}\right) \\
	&=  2 (R-1) J \exp\left(-\frac{N \eta^2 \lambda^2 \left( \frac{\xi_{\min}^S(\mu)}{s \sqrt{s} +\xi_{\min}^S(\mu)}  \right)^2}{2}\right) .
	\end{align*}
	\endgroup
\end{proof}


\begin{remark}
	The above calculations can be simplified to for the baseline estimator, i.e. if $\mu = 0$. Assume that the NIC condition for LASSO holds (\hyperref[cond:eic]{NEIC} with $\mu = 0 $). Additionally, note that it holds for $\mu \geq 0$ that 
	\begin{equation} \nonumber
	\left(\frac{1}{NJ}\Ztilde_S^T\Ztilde_S + \mu \imat_S \right)^{-1}  \Ztilde_S^T = \Ztilde_S^T \left( \frac{1}{NJ}\Ztilde_S \Ztilde_S^T + \mu \imat_N \right)^{-1}.   
	\end{equation}
	Using the above equality for $\mu = 0$, we obtain
	\begin{align*} 
	\prob\left(   \max\limits_{j \in S^C} V_j  > \lambda  \right)
	& \leq \prob\left(   \max\limits_{j \in S^C}\Vtilde_j  >  \eta \lambda  \right) 
	\leq \prob\left(   \max\limits_{j \in S^C} \vert \Vtilde_j  \vert >  \eta \lambda  \right) \\
	& =  \prob\left(   \max\limits_{j \in S^C} \left\vert \frac{1}{NJ}\Ztilde_{S^C}^T\bigg[-\Ztilde_S \left( \frac{1}{NJ}\Ztilde_S^T\Ztilde_S \right)^{-1}   \frac{1}{NJ}\Ztilde_S^T  + \imat_S \bigg] \epsilon  \right\vert >  \eta \lambda  \right) \\
	& =  \prob\left(   \max\limits_{j \in S^C} \left\vert \frac{1}{NJ}\Ztilde_{S^C}^T\bigg[- \frac{1}{NJ} \Ztilde_S \Ztilde_S^T \left( \frac{1}{NJ}\Ztilde_S \Ztilde_S^T  \right)^{-1}     + \imat_S \bigg] \epsilon  \right\vert >  \eta \lambda  \right) \\
	& =  \prob\left(   \max\limits_{j \in S^C} \left\vert \frac{1}{NJ}\Ztilde_{S^C}^T\bigg[- \imat_S    + \imat_S \bigg] \epsilon  \right\vert >  \eta \lambda  \right) \\
	& =  \prob\left(  0 >  \eta \lambda  \right) = 0 
	\end{align*}
	since $\eta \lambda > 0$.
\end{remark}

\hspace{2cm}

\begin{lemma} 
	\label{lem:U} Suppose Assumption (\ref{ass:Exogen}) holds. Let $ \mathcal{M}^C(U)$ denote the complement of $ \mathcal{M}(U)$. Then,
	\begin{equation} \nonumber
	\prob\big( \mathcal{M}^C(U) \big) \leq 2 s J \exp\left(-\frac{N    \xi_{\min}^S(\mu)^2  \rho^2  }{2 s}\right) .
	\end{equation}
\end{lemma}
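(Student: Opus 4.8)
The plan is to mirror the structure of the proof of Lemma \ref{lem:V}, but the argument here is markedly more direct: since $U$ is a linear functional of the noise $\epsilon$ restricted to the active columns, Assumption \ref{ass:Exogen} (iv) gives $\mathbb{E}[U\mid\Ztilde]=0$, so no analogue of the \hyperref[cond:eic]{NEIC} is needed to control a nonzero mean. By definition the complement event is $\mathcal{M}^C(U)=\{\max_{i\in S}|U_i|\geq\rho\}=\{\norm{U}_\infty\geq\rho\}$, so the entire task reduces to a tail bound on $\norm{U}_\infty$.

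First I would factor $U$ into the inverse matrix times the noise term and apply submultiplicativity of the induced maximum-row-sum norm, exactly the property $\norm{Ax}_\infty\leq\norm{A}_\infty\norm{x}_\infty$ used in the proof of Lemma \ref{lem:V}:
\[
\norm{U}_\infty \leq \normB{\left(\frac{1}{NJ}\Ztilde_S^T\Ztilde_S + \mu\imat_S\right)^{-1}}_\infty \, \normB{\frac{1}{NJ}\Ztilde_S^T\epsilon}_\infty .
\]
The first factor is precisely the quantity bounded in Lemma \ref{lem:SVD} by $\sqrt{s}/\xi_{\min}^S(\mu)$. Substituting this bound shows that the event $\mathcal{M}^C(U)$ forces
\[
\normB{\frac{1}{NJ}\Ztilde_S^T\epsilon}_\infty \geq \frac{\rho\,\xi_{\min}^S(\mu)}{\sqrt{s}},
\]
so that $\prob\big(\mathcal{M}^C(U)\big)$ is no larger than the probability of this last event.

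The remaining step is a concentration bound on $\norm{\frac{1}{NJ}\Ztilde_S^T\epsilon}_\infty$, which is exactly the object controlled by Lemma \ref{lem:PrBound}. That lemma is stated for the full regressor matrix $\Ztilde$ with its $R-1$ columns and hence carries the prefactor $2(R-1)J$; the key adaptation is to observe that its proof only invokes a union bound over the columns and over the $J$ errors, together with the sub-Gaussianity and the boundedness $\ztilde_{i,j}^r\in[-1,1]$ from Assumption \ref{ass:Exogen}. Running the identical argument with $\Ztilde_S$, which has only $s$ columns, therefore yields
\[
\prob\left(\normB{\frac{1}{NJ}\Ztilde_S^T\epsilon}_\infty \geq \gamma\right) \leq 2sJ\exp\left(-\frac{N\gamma^2}{2}\right).
\]
Setting $\gamma=\rho\,\xi_{\min}^S(\mu)/\sqrt{s}$, so that $\gamma^2=\rho^2\,(\xi_{\min}^S(\mu))^2/s$, delivers exactly the claimed bound $2sJ\exp\!\big(-N(\xi_{\min}^S(\mu))^2\rho^2/(2s)\big)$.

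The main obstacle is bookkeeping rather than substance. One must verify that restricting Lemma \ref{lem:PrBound} to the $S$-columns is legitimate, i.e.\ that the variance proxy $\sigma^2=1$ and the uniform bound $\ztilde_{i,j}^r\epsilon_{i,j}\in[-1,1]$ used there hold verbatim for the columns in $S$ (they do, being inherited coordinatewise from Assumption \ref{ass:Exogen}), and that the $\sqrt{s}$ from Lemma \ref{lem:SVD} combines correctly with the $1/\sqrt{s}$ rescaling of $\gamma$ to leave a clean factor $1/s$ inside the exponent. Because $U$ is centered, there is no expectation term to absorb, which is precisely what makes this bound cleaner than the corresponding bound for $V$.
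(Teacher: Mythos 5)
Your proof is correct and follows essentially the same route as the paper's: both reduce $\mathcal{M}^C(U)$ to a tail bound on $\normL{\frac{1}{NJ}\Ztilde_S^T\epsilon}_\infty$ via the induced $\infty$-norm inequality, invoke Lemma \ref{lem:SVD} for the factor $\sqrt{s}/\xi_{\min}^S(\mu)$, and apply the Hoeffding-type bound of Lemma \ref{lem:PrBound} restricted to the $s$ active columns with $\gamma = \rho\,\xi_{\min}^S(\mu)/\sqrt{s}$. Your explicit justification that the $2(R-1)J$ prefactor becomes $2sJ$ when the union bound runs only over $S$ is a point the paper leaves implicit, but it is the same argument.
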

\begin{proof}
	Because $U$ is sub-Gaussian with mean 0, the probability of the complement of $\mathcal{M}(U)$ corresponds to
	
	
	
	\begin{align*} \label{probU} \nonumber
	\prob\left(  \mathcal{M}^C(U) \right)  & =
	\prob\left(    \max\limits_{i \in S} \vert U_i \vert \geq \rho \right) \\
	& =  \prob\left(    \max\limits_{i \in S}  \left( \frac{1}{NJ}\Ztilde_S^T\Ztilde_S + \mu \imat_S \right)^{-1}   \frac{1}{NJ}\Ztilde_S^T\epsilon   \geq \rho \right) \\
	& \leq  \prob\left(   \normB{ \left( \frac{1}{NJ}\Ztilde_S^T\Ztilde_S + \mu \imat_S \right)^{-1} }_\infty \, \normB{ \frac{1}{NJ}\Ztilde_S^T\epsilon }_\infty  \geq \rho \right) .
	\end{align*}
	In the next step Lemma \ref{lem:SVD} is applied again.
	\begin{align*}  
	\prob\left(  \mathcal{M}^C(U) \right)    & \leq  \prob\left(    \sqrt{s}   \frac{1}{\xi_{\min}^S(\mu)} \normB{ \frac{1}{NJ}\Ztilde_S^T\epsilon }_\infty  \geq \rho \right) \\
	& \leq  \prob\left(    \normB{ \frac{1}{NJ}\Ztilde_S^T\epsilon }_\infty  \geq     \xi_{\min}^S(\mu) \frac{1}{\sqrt{s}} \rho \right) \\
	& \leq  2 s J \exp\left(-\frac{N \left(   \xi_{\min}^S(\mu) \frac{1}{\sqrt{s}} \rho  \right)^2}{2\sigma^2}\right) 
	=  2 s J \exp\left(-\frac{N    \xi_{\min}^S(\mu)^2  \rho^2  }{2 s \sigma^2}\right)  \\
	& =  2 s J \exp\left(-\frac{N    \xi_{\min}^S(\mu)^2  \rho^2  }{2 s}\right) 
	\end{align*}
	where the last inequality follows from  Hoeffding's inequality in Lemma \ref{lem:PrBound} with $\gamma =   \xi_{\min}^S(\mu) \frac{1}{\sqrt{s}} \rho $.
\end{proof}

We use the above lemmata to prove Theorem \ref{theo:ConditionSelectionLASSO}.

\bigskip
\paragraph{Proof of Theorem \ref{theo:ConditionSelectionLASSO}.} \label{app:ProofOfSelectionConsistency}
\begin{proof}[\unskip\nopunct] It holds that
	\begin{equation} \nonumber
	\prob\left( \thetahat =_s \theta \right) \geq \prob\big(  \mathcal{M}(V) \cap \mathcal{M}(U) \big)
	\end{equation}
	since $\mathcal{M}(U)$ is a sufficient condition for the selection of the true weights according to Lemma \ref{lem:MUandMV}.

Under the condition that \hyperref[ass:ConditionSelectionLASSO]{RCDG} holds,  applying Lemma \ref{lem:V} and  Lemma \ref{lem:U}  gives  $ \lim\limits_{N\rightarrow\infty} \prob\left(  \mathcal{M}^C(V)  \right) = 0 $ and  $\lim\limits_{N\rightarrow\infty} \prob\left(  \mathcal{M}^C(U)  \right) = 0 $.\\

	Thus,
	\begin{align*} \nonumber
	\lim\limits_{N\rightarrow\infty} \prob\left( \thetahat =_s \theta \right) &\geq \lim\limits_{N\rightarrow\infty} \prob\big(  \mathcal{M}(V) \cap \mathcal{M}(U) \big) \\
	&\geq  \lim\limits_{N\rightarrow\infty} \left\{ 1 - \prob\left(  \mathcal{M}^C(V)  \right) - \prob\left(  \mathcal{M}^C(U) \right) \right\} \\
	&= 1 .	 
	\end{align*}
\end{proof}

\hspace{3cm}

\subsection{Proof of Error Bounds}

In the following, we first provide the proof of the error bound of the estimated weights presented in Theorem \ref{theo:l2normweights} and the proof of Corollary \ref{core:l2normweights}. We then use the derived bound to proof the error bound of the estimated random coefficients' distribution in Theorem \ref{theo:boundDistribution}. In the proofs of Theorem \ref{theo:l2normweights} and Theorem \ref{theo:boundDistribution}, we apply Lemma \ref{lem:PrBound}. 

\paragraph{Proof of Theorem \ref{theo:l2normweights}.}  \label{app:ProofErrorBoundWeights}
\begin{proof}[\unskip\nopunct]
	Note that if $\thetahat$ is the solution to the Lagrangian in Equation (\ref{lagrangian}), it must hold that it minimizes  (\ref{lagrangian}), i.e.  $L(\thetahat) \leq L(\theta)$ for any  $\theta$. Thus, it holds that $L(\thetahat) \leq L(\thetastar)$ where $\thetastar$ are the true weights. Applying this to the objective function in (\ref{lagrangian}), we obtain 
	
	\begin{equation*}
	\frac{1}{2NJ}\norm{\ytilde - \Ztilde\thetahat}^2_2 + \lambda\left(\iota^T\thetahat - 1\right) + \frac{\mu}{2}\thetahat^T\thetahat \leq \frac{1}{2NJ}\norm{\ytilde - \Ztilde\thetastar}^2_2 + \lambda\left(\iota^T\thetastar - 1\right) + \frac{\mu}{2}\thetastarT \thetastar .
	\end{equation*}
	
	Substituting the true model $\ytilde = \Ztilde\thetastar + \epsilon$ into the above condition and simplifying gives
	
	\begin{equation*}
	\frac{1}{2NJ}\norm{\Ztilde\left(\thetastar - \thetahat\right) + \epsilon}^2_2 + \lambda\left(\iota^T\thetahat - 1\right) + \frac{\mu}{2}\thetahat^T\thetahat \leq \frac{1}{2NJ}\norm{\epsilon}^2_2 + \lambda\left(\iota^T\thetastar - 1\right) + \frac{\mu}{2}\thetastarT \thetastar .
	\end{equation*}
	
	Taking into account that 
	
	$$\norm{ \Ztilde(\thetastar - \thetahat) + \epsilon}^2_2 = \norm{ \Ztilde(\thetastar - \thetahat)}^2_2 + \norm{\epsilon}^2_2 + 2\epsilon^T(\Ztilde(\thetastar - \thetahat)) \notag$$ 
	
	we obtain 
	
	\begin{align}\label{eq11ErrorBound}
	&\frac{1}{2NJ}\norm{\Ztilde\left(\thetastar - \thetahat\right)}_2^2 + \lambda\left(\iota^T\thetahat -1 \right) + \frac{\mu}{2}\thetahat^T\thetahat \leq \notag \\
	& \frac{1}{NJ}\epsilon^T \Ztilde \left(\thetahat - \thetastar\right) + \lambda\left(\iota^T\thetastar - 1\right) + \frac{\mu}{2}\thetastarT \thetastar .
	\end{align}
	
	Note that $\epsilon^T\Ztilde(\thetahat - \thetastar)\leq \norm{ \Ztilde^T\epsilon}_\infty \norm{\thetahat - \thetastar}_1$. 
	
	Applying  Lemma \ref{lem:PrBound} with  $\gamma \equiv \gamma(N,\delta) := \sqrt{2\log\left(\frac{2(R-1)J}{\delta}\right)\bigg/N}$  we obtain
	
	\begin{align} \label{probGamma}
	\prob\left(\norm{\frac{1}{NJ}\Ztilde^T\epsilon}_\infty\geq \gamma \right) &\leq 2 (R-1)J \exp\left(-N\left(\sqrt{\frac{2\log\left(\frac{2(R-1)J}{\delta}\right)}{N}}\right)^2\bigg/2\right) \notag \\
	&= 2(R-1)J\exp\left(\log\left(\left(\frac{2(R-1)J}{\delta}\right)^{-1}\right)\right) \notag \\
	&= \delta .
	\end{align}
	
	In the following, we assume  that   $\{(1/(NJ))\vert\vert \Ztilde^T\epsilon\vert\vert_\infty\leq\gamma\}$, which happens with probability at least $1-\delta$ according to Equation (\ref{probGamma}). Therefore, the rest of the proof holds with probability $1-\delta$. Using that the event $\{(1/(NJ))\vert\vert \Ztilde^T\epsilon\vert\vert_\infty\leq\gamma\}$ occurs, we can bound the the right hand side in Equation (\ref{eq11ErrorBound}) from above by
	
	\begin{equation}\label{eq12ErrorBound}
	\frac{1}{2NJ}\norm{\Ztilde\left(\thetastar - \thetahat\right)}^2 + \lambda\left(\iota^T\thetahat -1\right) + \frac{\mu}{2}\thetahat^T\thetahat \leq \gamma\norm{\thetahat - \thetastar}_1 + \lambda\left(\iota^T\thetastar - 1\right) + \frac{\mu}{2} \thetastarT  \thetastar .
	\end{equation}
	
	We split $\thetahat$, $\Ztilde$ and $\nu$ over $S$ and $S^C$ into two blocks, whereby $S$ again denotes the set of relevant grid points for which the true weights $\thetastar > 0$ and $S^C$ the set of points for which $\thetastar = 0$. It follows that 
	
	$$\iota^T\theta = \iota^T_S\theta_S + \iota^T_{S^C}\theta_{S^C} = \vert\vert\theta_S\vert\vert_1 + \vert\vert\theta_{S^C}\vert\vert_1\notag$$ 
	
	and 
	
	$$\theta^T\theta = \theta_S^T\theta_S + \theta_{S^C}^T\theta_{S^C}\notag .$$ 
	
	Thus, we can reformulate Equation (\ref{eq12ErrorBound}) as 
	
	\begin{align*}
	&\frac{1}{2NJ}\norm{\Ztilde\left(\thetastar - \thetahat\right)}^2_2 + \lambda\left(\norm{\thetahat_S}_1 + \norm{\thetahat_{S^C}}_1 - 1\right) + \frac{\mu}{2}\left(\thetahat^T_S\thetahat_S + \thetastarT_{S^C}\thetastar_{S^C}\right) \leq \notag\\
	&\gamma\norm{\thetahat - \thetastar}_1 + \lambda\left(\normb{\thetastar_S}_1 + \normb{\thetastar_{S^C}}_1 - 1\right) + \frac{\mu}{2}\left(\thetastarT_S \thetastar + \thetastarT_{S^C}\thetastar_{S^C}\right) .
	\end{align*}
	
	It follows from $\thetastar_{S^C} = 0$ that $\vert\vert\thetahat - \thetastar\vert\vert_1 = \vert\vert\thetahat_S - \thetastar_S\vert\vert_1 + \vert\vert\thetahat_{S^C}\vert\vert_1$ such that after some simple manipulations we obtain 
	
	\begin{align}\label{eq13ErrorBound}
	&\frac{1}{2NJ}\norm{\Ztilde\left(\thetastar - \thetahat\right)}_2^2 + \lambda\left(\norm{\thetahat_S}_1 + \norm{\thetahat_{S^C}}_1 - 1\right) + \frac{\mu}{2}\left(\thetahat^T_S\thetahat_S - \thetastarT_S\thetastar_S + \thetahat_{S^C}^T\thetahat_{S^C}\right) \leq \notag \\
	&\gamma\norm{\thetahat - \thetastar}_1  + \lambda\left(\normb{\thetastar_S}_1 - 1\right) .
	\end{align}
	
	Note that the terms in (\ref{eq13ErrorBound}) that are multiplied by the Langrangian parameter $\lambda$ drop out. Recall that by the definition of a linear probability model, $\vert\vert\thetastar_S\vert\vert_1 - 1 = 0$. With respect to the second term, $\lambda(\vert\vert\thetahat_S\vert\vert_1 + \vert\vert\thetahat_{S^C}\vert\vert_1 - 1)$, there are two different cases to be considered due to the inequality constraint $\sum_{r=1}^R\theta_r \leq 1$: (1) the estimated probability weights sum to one (the constraint is binding), and (2) the sum of the estimated probability weights is less than one (the constraint is not binding). In the former case, $\vert\vert\thetahat_S\vert\vert_1 + \vert\vert\thetahat_{S^C}\vert\vert_1 - 1 = 0$. In the latter case, the KKT conditions require $\lambda=0$. Thus, Condition (\ref{eq13ErrorBound}) simplifies to 
	
	\begin{equation}\label{cond7}
	\frac{1}{2NJ}\norm{\Ztilde\left(\thetastar - \thetahat\right)}^2_2 + \frac{\mu}{2}\left(\thetahat^T_S\thetahat_S - \thetastarT_S\thetastar_S + \thetahat_{S^C}^T\thetahat_{S^C}\right)\leq \gamma\norm{\thetahat - \thetastar}_1 .
	\end{equation}

	It follows from $\vert\vert\thetahat_S - \thetastar_S\vert\vert_2^2 = \thetahat_S^T\thetahat_S - 2\thetastarT_S\thetahat_S + \thetastarT_S\thetastar_S$ that

	\begin{equation}
	\thetahat_S^T\thetahat_S - \thetastarT_S\thetastar_S + \thetahat_{S^C}^T\thetahat_{S^C}= \norm{\thetahat_S - \thetastar_S}_2^2 + 2\thetastarT_S\thetahat_S - 2\thetastarT_S\thetastar +  \norm{\thetahat_{S^C}}_2^2 \notag 
	\end{equation}
	
	and from  $\thetastar_{S^C} = 0$ that $\vert\vert\thetahat_{S^C}\vert\vert_p = \vert\vert\thetahat_{S^C} - \thetastar_{S^C}\vert\vert_p$ for $p = 1, 2$.
	
	Consequently, we can collect the terms over the index sets $S$ and $S^C$ to $\vert\vert\thetahat_{S} - \thetastar_S\vert\vert_1 + \vert\vert\thetahat_{S^C}\vert\vert_1 = \vert\vert\thetahat - \thetastar\vert\vert_1$ and $\vert\vert\thetahat_S - \thetastar_S\vert\vert_2^2 + \vert\vert\thetahat_{S^C}\vert\vert_2^2 = \vert\vert\thetahat - \thetastar\vert\vert_2^2$.
	
	This yields
	\begin{equation}
	\thetahat_S^T\thetahat_S - \thetastarT_S\thetastar_S + \thetahat_{S^C}^T\thetahat_{S^C}= \norm{\thetahat - \thetastar}_2^2 + 2\thetastarT_S\thetahat_S - 2\thetastarT_S\thetastar  \notag .
	\end{equation}

	Therefore, Equation (\ref{cond7}) can be equivalently expressed as 	
	
	\begin{align}\label{cond8}
	&\frac{1}{2NJ}\norm{\Ztilde\big(\thetastar - \thetahat\big)}_2^2 + \frac{\mu}{2}\norm{\thetahat - \thetastar}^2_2 \leq \notag \\
	&\gamma\norm{\thetahat - \thetastar}_1 + \frac{\mu}{2}\bigg( 2\thetastarT_S\thetastar_S - 2\thetastarT_S\thetahat_S \bigg) .
	\end{align}

	Next, because $\thetastar_S > 0$ and $\vert\vert \thetahat_{S} - \thetastar_{S}\vert\vert_1  \leq \sqrt{s} \vert\vert\thetahat_{S} - \thetastar_{S}\vert\vert_2 $ it holds that
	
	\begin{align}\label{cond10}
	\thetastarT_S\left(\thetastar_S - \thetahat_S\right)  \leq   \thetastarT_S \left\vert \thetahat_{S} - \thetastar_{S} \right\vert 
	\leq  \normb{\thetastar_S}_\infty\norm{\thetahat_{S} - \thetastar_{S}}_1 
	\leq \sqrt{s} \normb{\thetastar_S}_\infty\norm{\thetahat_{S} - \thetastar_{S}}_2 
	\end{align}
	where $\vert \thetahat_{S} - \thetastar_{S} \vert $ takes the absolute value of each element of the vector $\thetahat_{S} - \thetastar_{S}$.

	Substituting Condition (\ref{cond10}) back into the error bound in Equation (\ref{cond8}) and using the the fact that $\vert\vert\thetahat - \thetastar\vert\vert_1\leq\sqrt{(R-1)} \ \vert\vert\thetahat - \thetastar\vert\vert_2$, for $\gamma \leq k\lambda$, we can rewrite Equation (\ref{cond8}) as

	\begin{align}\label{cond11}
	&\frac{1}{2NJ}\norm{\Ztilde\big(\thetastar - \thetahat\big)}_2^2 + \frac{\mu}{2}\norm{\thetahat - \thetastar}^2_2 \leq 
	k\lambda\sqrt{(R-1)}\norm{\thetahat - \thetastar}_2 + \mu\sqrt{s}\normb{\thetastar_S}_\infty\norm{\thetahat_S - \thetastar_S}_2 .
	\end{align}
	
	Recall that 
	\begin{equation*}
	\norm{\Ztilde\big(\thetahat - \thetastar\big)}^2_2 = \big(\thetahat - \thetastar\big)^T\Ztilde^T\Ztilde\big(\thetahat - \thetastar\big)
	\end{equation*}
	
	and that the left-hand-side in Condition (\ref{cond11}) can be summarized as
	

	\begin{align}\label{cond12}
	&\frac{1}{2}\big(\thetahat - \thetastar\big)^T\bigg[\frac{1}{NJ}\Ztilde^T\Ztilde + \mu\imat\bigg]\big(\thetahat - \thetastar\big)  \leq 
	\bigg(k\lambda\sqrt{(R-1)} + \mu\sqrt{s}\normb{\thetastar_S}_\infty\bigg)\norm{\thetahat - \thetastar}_2 .
	\end{align}

	%
	%
	%
	%
	%
	Recall that $\xi_{\min}(\mu)$ defines the minimum eigenvalue of the real symmetric matrix $1/(NJ)\Ztilde^T\Ztilde + \mu\imat$ over the set of vectors $\mathcal{H}$ (see Subsection (\ref{subsec:bound})).
	
	It holds that $\xi_{\min}(\mu)>0$ if $\mu > 0 $ and that $\xi_{\min}\geq 0$ if $\mu = 0 $. In the following, we assume $\xi_{\min}(\mu)>0$.
	
	Thus, multiplying the left-hand-side in Condition (\ref{cond12}) by $\vert\vert\thetahat-\thetastar\vert\vert^2_2 / \vert\vert\thetahat - \thetastar\vert\vert^2_2$ and using the restricted minimum eigenvalue definition gives the upper $\ell_2$-error bound between the estimated and true probability weights:
	
	\begin{align*}\label{cond13} 
	&\frac{\xi_{\min}(\mu)}{2}\norm{\thetahat - \thetastar}^2_2 \leq\notag \bigg(k\lambda\sqrt{(R-1)} + \mu\sqrt{s}\norm{\thetastar_S}_\infty\bigg)\norm{\thetahat - \thetastar}_2 \notag \\
	\Rightarrow\quad&\norm{\thetahat - \thetastar}_2 \leq \frac{2\sqrt{(R-1)} \ k\lambda + 2\mu\sqrt{s} \norm{\thetastar_S}_\infty}{\xi_{\min}(\mu)} .
	\end{align*}
\end{proof}	

\paragraph{Proof of Corollary \ref{core:l2normweights}.}  \label{app:ProofErrorBoundCorrollary}
\begin{proof}[\unskip\nopunct]
	By assumption, it holds that
	\begin{align*}
	\left(\sqrt{(R-1)} \ k\lambda + \mu\sqrt{s} \norm{\thetastar_S}_\infty \right)\xi_{\min}(0) & \leq  \sqrt{(R-1)} \ k\lambda \xi_{\min}(0) +  \mu \sqrt{(R-1)} \ k\lambda \\
	& =  \sqrt{(R-1)} \ k\lambda (\xi_{\min}(0) + \mu ) . 
	\end{align*}
	Using $ \xi_{\min}(\mu) = \xi_{\min}(0) + \mu$ gives
	\begin{align*}
	\left(\sqrt{(R-1)} \ k\lambda + \mu\sqrt{s} \norm{\thetastar_S}_\infty \right)\xi_{\min}(0) \leq \sqrt{(R-1)} \ k\lambda \xi_{\min}(\mu)
	\end{align*}
	which is equivalent to
	\begin{align*}
	\frac{2 \sqrt{(R-1)} \ k\lambda + 2 \mu\sqrt{s} \norm{\thetastar_S}_\infty }{\xi_{\min}(\mu)} \leq \frac{2 \sqrt{(R-1)} \ k\lambda}{ \xi_{\min}(0)} .
	\end{align*}
\end{proof}

\paragraph{Proof of Theorem \ref{theo:boundDistribution}.}  \label{app:ProofErrorBoundDistribution}
\begin{proof}[\unskip\nopunct]

	It holds that the difference of $\hat{F}\left(\beta\right)$ and $ F^*(\beta)$ in any point $\beta \in \mathbb{R}^K$ can be bounded by
	\begin{align*}
	\left\vert \hat{F}\left(\beta\right) - F^*(\beta) \right\vert  & =   \left\vert  \sum\limits_{r = 1}^R \thetahat_r \ 1\left[\beta_r\leq \beta\right] - \sum\limits_{r = 1}^R\thetastar_r \ 1\left[\beta_r\leq \beta\right] \right\vert  \\
	& \leq \sup_{\beta} \left\vert  \sum\limits_{r = 1}^R \left(\thetahat_r -\thetastar_r \right)  \ 1\left[\beta_r\leq \beta\right]  \right\vert \\
	& \leq  \sum\limits_{r = 1}^R \left\vert  \thetahat_r - \thetastar_r \right\vert =   \sum\limits_{r = 1}^{R-1} \left\vert  \thetahat_r - \thetastar_r \right\vert +  
	\left\vert  \thetahat_R - \thetastar_R \right\vert 
	\end{align*}
	where the last inequality holds by the triangle inequality.
	
	Then, 
	\begin{align*}
	\left\vert \hat{F}\left(\beta\right) - F^*(\beta) \right\vert  & \leq  \sum\limits_{r = 1}^{R-1} \left\vert  \thetahat_r - \thetastar_r \right\vert +  
	\Big\vert  1- \sum\limits_{r = 1}^{R-1}   \thetahat_r - 1 + \sum\limits_{r = 1}^{R-1} \thetastar_r \Big\vert \\
	&= \sum\limits_{r = 1}^{R-1} \Big\vert  \thetahat_r - \thetastar_r \Big\vert +  
	\Big\vert   \sum\limits_{r = 1}^{R-1}  \left( \thetastar_r - \thetahat_r  \right) \Big\vert 
	\leq 2 \sum\limits_{r = 1}^{R-1} \left\vert  \thetahat_r - \thetastar_r \right\vert  \\
	& =  2 \norm{\thetahat - \thetastar  }_1  \leq 2 \sqrt{(R-1)} \norm{\thetahat - \thetastar  }_2 ,
	\end{align*}
	which, by Theorem \ref{theo:l2normweights}, can be bounded by
	
	\begin{align*}
	\vert \hat{F}\left(\beta\right) - F^*(\beta) \vert  \leq  2 \sqrt{(R-1)}   \; \frac{2\sqrt{(R-1)} \ k\lambda + 2\mu\sqrt{s} \norm{\thetastar_S}_\infty}{\xi_{\min}(\mu)} .
	\end{align*}

\end{proof}

\end{appendix}

\end{document}